\providecommand{\U}[1]{\protect\rule{.1in}{.1in}}
\newtheorem{theorem}{Theorem}
\newtheorem{corollary}[theorem]{Corollary}
\newtheorem{definition}[theorem]{Definition}
\newtheorem{lemma}[theorem]{Lemma}
\newtheorem{proposition}[theorem]{Proposition}
\newtheorem{remark}[theorem]{Remark}
\newenvironment{proof}[1][Proof]{\noindent\textbf{#1.} }{\ \rule{0.5em}{0.5em}}
\newcommand{\bra}[1]{\langle #1|}
\newcommand{\ket}[1]{|#1\rangle}
\newcommand{\Tr}{\operatorname{Tr}}
\begin{document}
\preprint{ }
\title[ ]{Second-order coding rates for key distillation in quantum key distribution}

\author{Sumeet Khatri}
\affiliation{Hearne Institute for Theoretical Physics, Department of Physics and Astronomy, and Center for Computation and Technology, Louisiana State University, Baton Rouge, Louisiana 70803, USA}

\author{Eneet Kaur}
\affiliation{Hearne Institute for Theoretical Physics, Department of Physics and Astronomy, and Center for Computation and Technology, Louisiana State University, Baton Rouge, Louisiana 70803, USA}

\author{Saikat Guha}
\affiliation{College of Optical Sciences, University of Arizona, 1630 E. University Blvd.,
Tucson, Arizona 85719, USA}
\affiliation{Department of Electrical and Computer Engineering, University of Arizona, 1230
E Speedway Blvd., Tucson, Arizona 85719, USA}

\author{Mark M. Wilde}
\affiliation{Hearne Institute for Theoretical Physics, Department of Physics and Astronomy, and Center for Computation and Technology, Louisiana State University, Baton Rouge, Louisiana 70803, USA}

\keywords{quantum key distribution, private communication, second-order coding rates,
quantum secured direct communication, position-based coding, convex split}
\pacs{}

\begin{abstract}
The security of quantum key distribution has traditionally been analyzed in
either the asymptotic or non-asymptotic regimes. In this paper, we provide a
bridge between these two regimes, by determining second-order coding rates for
key distillation in quantum key distribution under collective attacks. Our
main result is a formula that characterizes the backoff from the known
asymptotic formula for key distillation---our formula incorporates the
reliability and security of the protocol, as well as the mutual information
variances to the legitimate receiver and the eavesdropper. In order to
determine secure key rates against collective attacks, one should perform a
joint optimization of the Holevo information and the Holevo information
variance to the eavesdropper. We show how to do so by analyzing several
examples, including the six-state, BB84, and continuous-variable quantum key
distribution protocols (the last involving Gaussian modulation of coherent
states along with heterodyne detection). The technical contributions of this
paper include one-shot and second-order analyses of private communication over
a compound quantum wiretap channel with fixed marginal and key distillation
over a compound quantum wiretap source with fixed marginal. We also establish
the second-order asymptotics of the smooth max-relative entropy of quantum
states acting on a separable Hilbert space, and we derive a formula for the
Holevo information variance of a Gaussian ensemble of Gaussian states.

\end{abstract}
\date{\today}
\startpage{1}
\endpage{10}
\maketitle
\tableofcontents

\section{Introduction}

One of the near-term applications of quantum information science is quantum
key distribution (QKD) \cite{bb84,E91}. By making use of an insecure quantum
channel and a public, authenticated classical channel, two parties can share
an information-theoretically secure secret key, which can be used later on for
one-time pad encryption of a private message. There has been significant
progress on this topic in the decades since it was conceived
\cite{SBCDLP09,Lut14,LCT14,DLQY16,XMZLP19}.

A critical challenge for this research area is to establish security proofs
for quantum key distribution. In particular, we are interested in determining
the maximum possible rates that are achievable in principle and guaranteed to
be secure against any possible eavesdropper allowed by quantum mechanics.
Security proofs have been developed for discrete-variable quantum key
distribution (DV-QKD) protocols, both in the asymptotic
\cite{LC99,SP00,M01,RGK05,Koashi_2006}\ and non-asymptotic regimes
\cite{Renner2005,TLGR12,Hayashi_2012,Tomamichel2017largelyself}. There are
also advanced numerical approaches for addressing the asymptotic security of
DV-QKD\ protocols \cite{CML16,Winick2018reliablenumerical}. Additionally,
security proofs have been developed for continuous-variable quantum key
distribution (CV-QKD) protocols \cite{GG02,Grangier2002}, in the asymptotic
\cite{NGA06,PC2006,RC09}\ and non-asymptotic regimes
\cite{PhysRevLett.109.100502,PhysRevA.90.042325,PhysRevLett.118.200501}.
Furthermore, security proofs have appeared for asymptotic security of
discrete-modulation protocols for CV-QKD
\cite{Zhao2009,Bradler2018,KGW19,GGDL19,LUL19}.

In this work, we address the security of quantum key distribution in a regime
that represents a bridge between the asymptotic and non-asymptotic regimes.
Namely, we employ the methods of second-order asymptotics \cite{li12,TH12}\ in
order to determine secure key rates for the key distillation step of a quantum
key distribution protocol, under the assumption that an adversary applies a
collective attack. Our development here covers both DV-QKD\ and
CV-QKD\ protocols, due to various technical advances that we make in this
paper. Second-order quantum information theory grew out of earlier
developments in classical information theory \cite{Hay09,polyanskiy10}, and it
has since been explored extensively for various quantum communication tasks
\cite{TT13,DL15,DTW14,WRG16,TBR15,WTB16,wilde2017position,OMW19}. The goal of
a second-order information-theoretic analysis is to determine the extra
backoff from or overhead on the rates that are achievable in the asymptotic
case, which are due to finite-size effects. In the context of quantum key
distribution, the approach has been used for analyzing information
reconciliation \cite{Tomamichel2017}, for proving upper bounds on secure key
rates \cite{WTB16,KW17a}, and to address the entropy accumulation problem in
device-independent QKD~\cite{DF19}. See also \cite{Hay06} for earlier work on this topic.

One contribution of our paper is that it is possible to evaluate distillable
key rates in a regime that goes beyond a first-order asymptotic analysis,
which is the typical case studied in several of the aforementioned works
\cite{LC99,SP00,M01,Koashi_2006,CML16,Winick2018reliablenumerical,NGA06,PC2006,RC09,Zhao2009,Bradler2018,KGW19,GGDL19,LUL19}%
. It has been found in several preceding information-theoretic contributions
that a second-order analysis gives excellent agreement with what is actually
achievable in the finite-size regime \cite{Hay09,polyanskiy10,TBR15,WTB16}.
Thus, it is expected that our second-order analysis should agree well with
secure distillable key rates that are achievable in principle in the
finite-size regime.

To summarize our main contribution to QKD security analysis, suppose that
$\mathcal{S}$ is an \textquotedblleft uncertainty set\textquotedblright%
\ indexing the states of the eavesdropper Eve that are consistent with the
observed measurement results of the sender Alice and the receiver Bob. Let
$p_{XY}(x,y)$ be the probability distribution estimated by Alice and Bob after
the parameter estimation step of a QKD protocol \footnote{In some QKD
protocols, the entire joint distribution of Alice's and Bob's classical data
$p_{XY}(x,y)$ need not be estimated in order to distill a secure key. A well
known example is CV-QKD, where only two scalar parameters need to be estimated
in order to derive a key-rate lower bound under a collective-attack
assumption, even in a finite key-length regime \cite{Lev15}. It is possible
that, for such protocols, the second-order correction to the key distillation
rate that we present in this paper could be extended to a formulation that
does not require the estimation of $p_{XY}(x,y)$.}. For fixed $s\in
\mathcal{S}$, let $\left\{  p_{XY}(x,y),\rho_{E}^{x,y,s}\right\}  _{x,y,s}$ be
an ensemble of states that is consistent with the measurement results of Alice
and Bob.\ Here we are assuming that Eve employs a collective attack, meaning
that she applies the same quantum channel to every transmission of Alice. Then
our contribution is that the following is the rate at which secret key bits
can be generated in the key distillation step of a direct reconciliation
protocol, for sufficiently large $n$, such that Bob's decoding error
probability is no larger than $\varepsilon_{\operatorname{I}}\in(0,1)$ and the
security parameter of the key is no larger than $\varepsilon
_{\operatorname{II}}\in(0,1)$:%
\begin{multline}
I(X;Y)+\sqrt{\frac{1}{n}V(X;Y)}\Phi^{-1}(\varepsilon_{\operatorname{I}%
})\label{eq:key-rate-intro}\\
-\sup_{s\in\mathcal{S}}\left[  I(X;E)_{s}-\sqrt{\frac{1}{n}V(X;E)_{s}}%
\Phi^{-1}({\varepsilon_{\operatorname{II}}^2})\right] \\
+O\!\left(  \frac{\log n}{n}\right)  ,
\end{multline}
where
\begin{align}
\Phi^{-1}(\varepsilon)  &  \coloneqq \sup\left\{  a\in\mathbb{R}\ |\ \Phi
(a)\leq\varepsilon\right\}  ,\label{eq:inverse-cdf}\\
\Phi(a)  &  \coloneqq \frac{1}{\sqrt{2\pi}}\int_{-\infty}^{a}dx\ \exp\!\left(
\frac{-x^{2}}{2}\right)  .
\end{align}
In the above, $I(X;Y)$ is the Alice--Bob mutual information, $V(X;Y)$ is the
Alice--Bob mutual information variance, $I(X;E)_{s}$ is the Alice--Eve Holevo
information for fixed $s$, and $V(X;E)_{s}$ is the Alice--Eve Holevo
information variance for fixed $s$. The first-order term is given by
$I(X;Y)-\sup_{s\in\mathcal{S}}I(X;E)_{s}$, which is the standard information
quantity considered in asymptotic security analyses (see, e.g., \cite{Lut14}).
In such a first-order asymptotic analysis, the quantity $I(X;Y)-\sup
_{s\in\mathcal{S}}I(X;E)_{s}$ is typically understood as the asymptotic rate
at which an imperfect ensemble can be converted to a perfectly secure key.
However, this conversion is only perfect in a precise sense in the asymptotic
limit. What the mutual information variance $V(X;Y)$ and the Holevo
information variance $V(X;E)_{s}$ characterize are the fluctuations about the
first-order terms, which are due to finite-size effects, much like the
variance of a random variable characterizes the speed of convergence toward
the mean in the central-limit theorem \cite{KS10,S11}. The second-order terms
$\sqrt{\frac{1}{n}V(X;Y)}\Phi^{-1}(\varepsilon_{\operatorname{I}})$ and
$\sqrt{\frac{1}{n}V(X;E)_{s}}\Phi^{-1}({\varepsilon_{\operatorname{II}}^2})$ are
negative for $\varepsilon_{\operatorname{I}},{\varepsilon_{\operatorname{II}%
}^2}<1/2$, and thus they characterize the back-off from the asymptotic limit for
sufficiently large, yet finite $n$. In Section~\ref{sec:key-dist-QKD}, we
define all of these information quantities formally and explain them in more detail.

For the key distillation step of a reverse reconciliation protocol, the key
rate is given by%
\begin{multline}
I(X;Y)+\sqrt{\frac{1}{n}V(X;Y)}\Phi^{-1}(\varepsilon_{\operatorname{I}%
})\label{eq:key-rate-intro-2}\\
-\sup_{s\in\mathcal{S}}\left[  I(Y;E)_{s}-\sqrt{\frac{1}{n}V(Y;E)_{s}}%
\Phi^{-1}({\varepsilon_{\operatorname{II}}^2})\right] \\
+O\!\left(  \frac{\log n}{n}\right)  .
\end{multline}
The information quantities appearing in the above formula and their
interpretations are essentially the same as those given in
\eqref{eq:key-rate-intro}, with the difference being the substitutions
$I(X;E)_{s}\rightarrow I(Y;E)_{s}$ and $V(X;E)_{s}\rightarrow V(Y;E)_{s}$ to
account for reverse reconciliation.

The formulas in \eqref{eq:key-rate-intro} and \eqref{eq:key-rate-intro-2}
apply regardless of whether the variables $X$ and $Y$ are continuous or
discrete, and whether the system $E$ is finite- or infinite-dimensional. It is
thus an advantage of our results that the expressions are given in terms of
mutual informations and their variances rather than conditional entropies and
their variances, as the former ones are well defined for infinite-dimensional
systems (see the discussion in \cite[Section~IV-B]{LGG10}). Furthermore, the
above formulas apply whenever $n$ is large enough so that $n\sim
\max\{\varepsilon_{\operatorname{I}}^{-2},\varepsilon_{\operatorname{II}}%
^{-2}\}$. However, as we stated above, in prior work, this kind of
second-order approximation has given excellent agreement with actual
finite-size achievability statements \cite{Hay09,polyanskiy10,TBR15,WTB16}.

When performing the optimization given by
\begin{equation}
\sup_{s\in\mathcal{S}}\left[ I(Y;E)_{s}-\sqrt{\frac{1}{n}V(Y;E)_{s}}\Phi
^{-1}({\varepsilon_{\operatorname{II}}^2})\right] ,
\end{equation}
it is possible for large enough $n$ to employ a perturbative approach, in
which the first-order term $I(Y;E)_{s}$ is optimized first, and then among all
of the ensembles achieving the optimum first-order term, one further optimizes
the second-order term $-\sqrt{\frac{1}{n}V(Y;E)_{s}}\Phi^{-1}({\varepsilon
_{\operatorname{II}}^2})$. See \cite[Lemmas~63 and 64]{polyanskiy10} for a
justification of this approach.

In Section~\ref{sec:app-QKD}, we evaluate achievable secure key rates for
three standard QKD\ protocols:\ the six-state DV-QKD\ protocol \cite{B98,BG99}%
, the BB84 DV-QKD protocol \cite{bb84}, and the CV-QKD protocol involving
Gaussian-modulation coherent-state encodings along with heterodyne detection
\cite{GG02,Grangier2002}. We find various analytical expressions for the
achievable second-order coding rates, and we plot their performance in order
to have a sense of what rates are achievable in principle.

We remark here that the contribution of our paper goes beyond the setting of
quantum key distribution and applies more broadly in the context of private
communication in quantum information theory. A starting point for our
development is \cite{wilde2017position}, but here our result applies more
broadly to the scenario in which the Alice--Eve correlations are not precisely
known but instead chosen from an ``uncertainty set.'' Furthermore, our results
apply to the scenario in which the underlying quantum states act on an
infinite-dimensional, separable Hilbert space (see \cite{HZ12}\ for a review
of this setting of quantum information theory). See
Sections~\ref{sec:priv-comm-comp-wiretap}
and~\ref{sec:key-dist-compound-source} for a one-shot analysis and
Section~\ref{sec:second-order-priv-comm-key-dist} for a second-order analysis.
The information-theoretic setting that we consider here (secure communication
against collective attacks) is strongly related to universal private
communication and secret key distillation \cite{DH10}, as well as the compound
wiretap channel and compound wiretap source settings \cite{CCD12,BCCD14,BJ16}.
We note here that the compound wiretap setting has been of considerable
interest in classical information theory recently \cite{SBP15}.

As a side note, it is unclear to us from reading the literature whether
researchers working on security analyses against collective attacks in
QKD\ and those working on universal and compound information-theoretic secrecy
questions are fully aware of each other, and so one byproduct of our work
could be to develop more interaction between these communities. Related to
this, the QKD\ security proof community has been consistently applying the
Devetak--Winter formula \cite{Devetak2005}\ to analyze secret key rates for
collective attacks in QKD, in spite of the fact that the Devetak--Winter
protocol from \cite{Devetak2005} does not generally apply to such a scenario
and instead only applies to a known, fixed attack. The results of our paper
also clarify and bridge this gap, and we discuss all of these points in more
detail in Section~\ref{sec:col-att-QKD-comp-wiretap}.

On a technical level, an important contribution of our paper is to determine
the second-order asymptotics of the smooth max-relative entropy (see
Appendix~\ref{app:2nd-order-smooth-dmax}), and this is the main reason why our
security analysis applies to the infinite-dimensional case. This result also
has implications for the distinguishability dilution task in the resource
theory of asymmetric distinguishability \cite{WW19states}. We also extend some
other known relations from the finite- to infinite-dimensional case (see
Appendices~\ref{app:relation-smooth-max-MIs},
\ref{sec:smooth-univ-convex-split},  \ref{app:duality-smooth-dmax}, and \ref{app:smooth-dmax-hypo-test}). Combined with the second-order asymptotics
of the hypothesis testing relative entropy for the infinite-dimensional case
\cite{DPR15,KW17a,OMW19}, we are then led to our claim concerning second-order
coding rates for private communication, secret key distillation, and key
distillation in quantum key distribution.

Another technical contribution of our paper is a formula for the Holevo
information variance of a Gaussian ensemble of quantum Gaussian states (see
Proposition~\ref{prop:gaussian-formulas-HI-Hi-var} in
Appendix~\ref{app:Holevo-info-and-var-Gaussian}). This formula is useful in a
second-order analysis of CV-QKD protocols, and we expect it to be useful in
other contexts besides those considered here. We also derive a novel
expression for the Holevo information of a Gaussian ensemble of quantum
Gaussian states (see Proposition~\ref{prop:gaussian-formulas-HI-Hi-var}).

The rest of the paper proceeds as follows. We first consider the second-order
analysis of the key distillation step of quantum key distribution in
Section~\ref{sec:app-QKD}, and therein we analyze the approach for the
important examples mentioned above (six state, BB84, CV-QKD). After that, we
then develop the information-theoretic compound wiretap settings and protocols
and the corresponding secure rates in detail. In
Section~\ref{sec:col-att-QKD-comp-wiretap}, we provide a historical discussion
of the compound wiretap setting and collective attacks in quantum key
distribution, with the stated goal of bridging the communities working on
these related topics. We finally conclude in Section~\ref{sec:conclusion} with
a summary and a list of open questions. The appendices provide details of
various technical contributions that are useful for this work and might be of
independent interest.

\section{Second-order coding rates for key distillation in quantum key
distribution}

\label{sec:app-QKD}We begin by presenting one of our main results, which is
the application of the second-order coding rates in \eqref{eq:key-rate-intro}
and \eqref{eq:key-rate-intro-2} to the key distillation step of a quantum key
distribution protocol. The claim here rests upon the one-shot
information-theoretic key distillation protocol from
Section~\ref{sec:key-dist-compound-source}\ and its second-order expansion in
Section~\ref{sec:second-order-priv-comm-key-dist}, the details of which are
presented in these later sections. Here we first review a generic
prepare-and-measure QKD protocol and then state how \eqref{eq:key-rate-intro}
and \eqref{eq:key-rate-intro-2} apply in this context. We finally analyze the
approach in the context of the six-state DV-QKD protocol \cite{B98,BG99}, the
BB84 DV-QKD\ protocol \cite{bb84}, and the reverse-reconciliation
CV-QKD\ protocol \cite{Grangier2002}. All source code files needed to generate the plots in this section are available as ancillary files with the arXiv posting of this paper.

We should clarify that our main emphasis, as indicated above, is on the key
distillation step of a quantum key distribution protocol and how to
incorporate a second-order coding rate analysis. As part of this, we are
assuming that the parameter estimation step of a QKD\ protocol yields reliable
estimates of the classical channel from Alice to Bob that is induced by the
quantum part of the QKD\ protocol. We expand more upon this point in what
follows, and we note up front here that it is an open question to incorporate
a full second-order analysis that includes the parameter estimation step in
addition to the key distillation step.

\subsection{Generic prepare-and-measure QKD\ protocol}

\label{sec:gen-pm-prot}Let us recall the structure of a generic
prepare-and-measure protocol for quantum key distribution, in which the
adversary applies a collective attack. The protocol consists of three steps: a
quantum part, parameter estimation, and key distillation (the last combines
information reconciliation and privacy amplification into a single step). We
consider both direct and reverse reconciliation settings for key distillation.

\subsubsection{Quantum part}

The quantum part of the protocol consists of the following steps:

\begin{enumerate}
\item It begins with the sender Alice picking a state $\rho_{A}^{x}$\ randomly
from an ensemble%
\begin{equation}
\mathcal{E}_{A}\coloneqq \left\{  p_{X}(x),\rho_{A}^{x}\right\}  _{x\in\mathcal{X}},
\end{equation}
where $p_{X}$ is a probability distribution and each state $\rho_{A}^{x}$ is
described by a density operator acting on an input Hilbert space
$\mathcal{H}_{A}$.

\item Alice transmits the system $A$ through an unknown quantum channel
$\mathcal{N}_{A\rightarrow B}$, with input system $A$ and output system $B$.
It is assumed that the channel itself is controlled by the eavesdropper Eve,
and the output system $B$ is given to the legitimate receiver Bob. It is a
standard feature of quantum information that every quantum channel has an
isometric extension \cite{S55} (see also, e.g., \cite{Wil17}), from which the
original channel can be realized by a partial trace over an environment
system. Thus, there exists an isometric channel $\mathcal{U}_{A\rightarrow
BE}^{\mathcal{N}}$ extending the channel $\mathcal{N}_{A\rightarrow B}$\ such
that%
\begin{equation}
\mathcal{N}_{A\rightarrow B}=\operatorname{Tr}_{E}\circ\mathcal{U}%
_{A\rightarrow BE}^{\mathcal{N}}.
\end{equation}
In the worst-case scenario, the eavesdropper has full access to the
environment system $E$, and so we assume that she does (as is standard in
QKD\ security proof analyses).

\item Upon receiving the system $B$, the receiver Bob performs a measurement
channel $\mathcal{M}_{B\rightarrow Y}$, which is uniquely specified by a
positive operator-valued measure (POVM) $\{\Lambda_{B}^{y}\}_{y\in\mathcal{Y}%
}$, such that%
\begin{equation}
\Lambda_{B}^{y}\geq0\quad\forall y\in\mathcal{Y},\qquad\sum_{y\in\mathcal{Y}%
}\Lambda_{B}^{y}=I_{B}.
\end{equation}
According to the Born rule, the measurement channel $\mathcal{M}_{B\rightarrow
Y}$ gives the outcome $y$ with probability $\operatorname{Tr}[\Lambda^{y}_{B}
\omega_{B}]$ if the input state is $\omega_{B}$.
\end{enumerate}

One round of this protocol leads to the following ensemble:%
\begin{align}
\mathcal{E}_{\text{QKD}}  &  \coloneqq \mathcal{E}_{\text{QKD}}(\mathcal{E}%
_{A},\mathcal{U}_{A\rightarrow BE}^{\mathcal{N}},\mathcal{M}_{B\rightarrow
Y})\\
&  \coloneqq \left\{  p_{XY}(x,y),\rho_{E}^{x,y}\right\}  _{x\in\mathcal{X}%
,y\in\mathcal{Y}},
\end{align}
where the joint distribution $p_{XY}(x,y)$ is given as%
\begin{align}
p_{XY}(x,y)  &  =p_{X}(x)p_{Y|X}(y|x),\\
p_{Y|X}(y|x)  &  \coloneqq \operatorname{Tr}[\Lambda_{B}^{y}\mathcal{N}_{A\rightarrow
B}(\rho_{A}^{x})], \label{eq:QKD-induced-classical-channel}%
\end{align}
and the eavesdropper states $\rho_{E}^{x,y}$ are as follows:%
\begin{equation}
\rho_{E}^{x,y}\coloneqq \frac{\operatorname{Tr}_{B}[\Lambda_{B}^{y}\mathcal{U}%
_{A\rightarrow BE}^{\mathcal{N}}(\rho_{A}^{x})]}{p_{Y|X}(y|x)}.
\end{equation}
Observe how the protocol induces a classical channel $p_{Y|X}(y|x)$ from Alice
to Bob via the Born rule in \eqref{eq:QKD-induced-classical-channel}.

The above steps are repeated $m=k+n$ times, where $k$ is the number of rounds
used for parameter estimation and $n$ is the number of rounds used for key
distillation. The ensemble shared between Alice, Bob, and Eve after these $m$
rounds is as follows:%
\begin{align}
\mathcal{E}_{\text{QKD}}^{\otimes m}  &  \coloneqq \mathcal{E}_{\text{QKD}}^{\otimes
m}(\mathcal{E}_{A}^{\otimes m},(\mathcal{U}_{A\rightarrow BE}^{\mathcal{N}%
})^{\otimes m},(\mathcal{M}_{B\rightarrow Y})^{\otimes m}%
)\label{eq:tensor-power-ensemble-1}\\
&  \coloneqq \left\{  p_{X^{m}Y^{m}}(x^{m},y^{m}),\rho_{E^{m}}^{x^{m},y^{m}}\right\}
_{x^{m}\in\mathcal{X}^{m},y^{m}\in\mathcal{Y}^{m}},
\end{align}
where%
\begin{align}
p_{X^{m}Y^{m}}(x^{m},y^{m})  &  \coloneqq \prod\limits_{i=1}^{m}p_{X_{i}Y_{i}}%
(x_{i},y_{i}),\\
\rho_{E^{m}}^{x^{m},y^{m}}  &  \coloneqq \rho_{E_{1}}^{x_{1},y_{1}}\otimes
\cdots\otimes\rho_{E_{m}}^{x_{m},y_{m}},\\
\rho_{E_{i}}^{x_{i},y_{i}}  &  \coloneqq \frac{\operatorname{Tr}_{B_{i}}%
[\Lambda_{B_{i}}^{y_{i}}\mathcal{U}_{A_{i}\rightarrow B_{i}E_{i}}%
^{\mathcal{N}}(\rho_{A_{i}}^{x_{i}})]}{p_{Y_{i}|X_{i}}(y_{i}|x_{i})},\\
p_{Y_{i}|X_{i}}(y_{i}|x_{i})  &  \coloneqq \operatorname{Tr}[\Lambda_{B_{i}}^{y_{i}%
}\mathcal{N}_{A_{i}\rightarrow B_{i}}(\rho_{A_{i}}^{x_{i}})].
\label{eq:tensor-power-ensemble-last}%
\end{align}
A critical assumption that we make in the above is that Eve employs a
\textit{collective attack}, meaning that the isometric channel she applies
over the $m$ rounds is the tensor-power channel $(\mathcal{U}_{A\rightarrow
BE}^{\mathcal{N}})^{\otimes m}$.

\paragraph{Channel twirling}

\label{sec:channel-twirl}In order to simplify or symmetrize the collective
attacks that need to be considered, Alice and Bob can employ an additional
symmetrization of the protocol in each round, called channel twirling,
introduced in a different context in \cite{BDSW96}. Let $\{U_{A}^{g}%
\}_{g\in\mathcal{G}}$ and $\{V_{B}^{g}\}_{g\in\mathcal{G}}$ be unitary
representations of a group $\mathcal{G}$, such that the unitaries act on the
input Hilbert space $\mathcal{H}_{A}$ and output Hilbert space $\mathcal{H}%
_{B}$, respectively. Then before sending out her state $\rho_{A}^{x}$, Alice
can select $g$ uniformly at random from the group~$\mathcal{G}$, apply the
unitary $U_{A}^{g}$ to her state, communicate the value of $g$ over a public
classical communication channel to Bob, who then performs $V_{B}^{g\dag}$ on
his system before acting with his measurement. If Alice and Bob then discard
the value of $g$, this twirling procedure transforms the original quantum
channel $\mathcal{N}_{A\rightarrow B}$ to the following symmetrized quantum
channel:%
\begin{equation}
\overline{\mathcal{N}}_{A\rightarrow B}(\omega_{A})\coloneqq \frac{1}{\left\vert
\mathcal{G}\right\vert }\sum_{g\in\mathcal{G}}(\mathcal{V}_{A}^{g\dag}%
\circ\mathcal{N}_{A\rightarrow B}\circ\mathcal{U}_{A}^{g})(\omega_{A}),
\label{eq:twirled-channel}%
\end{equation}
where%
\begin{equation}
\mathcal{U}_{A}^{g}(\cdot)\coloneqq U_{A}^{g}(\cdot)U_{A}^{g\dag},\qquad
\mathcal{V}_{B}^{g}(\cdot)\coloneqq V_{B}^{g}(\cdot)V_{B}^{g\dag}.
\end{equation}

Although channel twirling produces a worse (noisier) channel from the original
one, the main benefit is that the number of parameters that are needed to
specify $\overline{\mathcal{N}}_{A\rightarrow B}$ can be far fewer than the
number needed to specify $\mathcal{N}_{A\rightarrow B}$. For example, if the
original channel $\mathcal{N}_{A\rightarrow B}$ is a qubit channel and the
unitaries consist of the Pauli operators, then the resulting twirled channel
is a Pauli channel and thus specified by only three parameters. If the
original channel $\mathcal{N}_{A\rightarrow B}$\ is a single-mode bosonic
channel and the unitaries consist of the four equally spaced phase rotations
$\left\{  0,\pi/2,\pi,3\pi/2\right\}  $, then the resulting channel is a phase
covariant (phase insensitive)\ channel \cite{KGW19}\ and has fewer parameters
that characterize it.

\paragraph{Finite-dimensional assumptions}

\label{sec:fd-assumptions}In the trusted device scenario that we are dealing
with here, if the Hilbert space $\mathcal{H}_{A}$ is finite dimensional, then
we are making an implicit assumption that the union of the supports of the
states $\rho_{A}^{x}$\ are fully contained in the Hilbert space $\mathcal{H}%
_{A}$ and there is no leakage outside of it. In the case that system $B$ is
finite dimensional (i.e., the Hilbert space $\mathcal{H}_{B}$ is finite
dimensional), then we are making an implicit assumption that the measurement
operators $\{\Lambda_{B}^{y}\}_{y\in\mathcal{Y}}$ satisfy $\sum_{y\in
\mathcal{Y}}\Lambda_{B}^{y}=I_{B}$ and that $I_{B}$ is indeed the identity
operator for $\mathcal{H}_{B}$. Thus, if both the input and output Hilbert
spaces $\mathcal{H}_{A}$ and $\mathcal{H}_{B}$ are finite dimensional, then by
the Choi--Kraus theorem (see, e.g., \cite{Wil17}), it is not necessary for
Eve's system to be any larger than $\dim(\mathcal{H}_{A})\cdot\dim
(\mathcal{H}_{B})$. So it follows that the finite dimensional assumption leads
to strong constraints about Eve's attack, which may not necessarily hold in
practice and thus should be stated up front. This is especially the case when
dealing with photonic states and measurements acting on subspaces of an
infinite-dimensional bosonic Fock space.

\subsubsection{Sifting}

\label{sec:sifting}

Some QKD\ protocols, such as the six-state and BB84 protocols, incorporate a
sifting step, in which a fraction of the data generated by the protocol is
discarded. The main reason for incorporating sifting is that some input-output
pairs in $\mathcal{X}\times\mathcal{Y}$ do not give any useful information
about the channel (Eve's attack) and thus can be discarded. For example, if
the input state is $|0\rangle\!\langle0|$ or $|1\rangle\!\langle1|$ and Bob
measures in the basis $\{|+\rangle\!\langle+|,|-\rangle\!\langle-|\}$, then even
in the noiseless case of no eavesdropping, the outcome of the measurement is
independent of the input and thus yields no useful information.

This sifting step can be described mathematically as a filter onto a subset
$\mathcal{F}\subseteq\mathcal{X}\times\mathcal{Y}$, with the sifting
probability given by%
\begin{equation}
p_{\text{sift}}\coloneqq \sum_{\left(  x,y\right)  \in\mathcal{F}}p_{XY}(x,y),
\end{equation}
and the conditioned ensemble by%
\begin{equation}
\mathcal{E}_{\text{QKD}}^{\text{sift}}\coloneqq \left\{  p_{XY}^{\prime}(x,y),\rho
_{E}^{x,y}\right\}  _{\left(  x,y\right)  \in\mathcal{F}},
\end{equation}
where%
\begin{equation}
p_{XY}^{\prime}(x,y)\coloneqq p_{XY}(x,y)/p_{\text{sift}}.
\end{equation}
As a consequence of sifting, some number $m^{\prime}$ of the original $m$
rounds are selected, and the remaining systems of Alice, Bob, and Eve are
described by the tensor-power ensemble:%
\begin{multline}
(\mathcal{E}_{\text{QKD}}^{\text{sift}})^{\otimes m^{\prime}}\coloneqq \\
\left\{  p_{X^{m^{\prime}}Y^{m^{\prime}}}^{\prime}(x^{m^{\prime}}%
,y^{m^{\prime}}),\rho_{E^{m^{\prime}}}^{x^{m^{\prime}},y^{m^{\prime}}%
}\right\}  _{(x^{m^{\prime}},y^{m^{\prime}})\in\mathcal{F}^{m^{\prime}}},
\end{multline}
with the above quantities defined similarly as in~\eqref{eq:tensor-power-ensemble-1}--\eqref{eq:tensor-power-ensemble-last}.

In the discussion that follows in Section~\ref{sec:parameter-est-QKD}, we
simply relabel $m^{\prime}$ as $m$ and the distribution $p_{XY}^{\prime}(x,y)$
as $p_{XY}(x,y)$, with it understood that elements of $p_{XY}(x,y)$ with
$(x,y)$ outside of $\mathcal{F}$ are set to zero. The distribution $p_{X}$ is
the marginal of $p_{XY}$, and the conditional distribution $p_{Y|X}$ is a
classical channel computed from $p_{XY}$ as usual via $p_{XY}(x,y)/p_{X}(x)$.

\subsubsection{Parameter estimation}

\label{sec:parameter-est-QKD}After $m$ rounds of the protocol are complete,
$k$ of the $XY$ classical systems are randomly selected by Alice and Bob for
parameter estimation, in order to estimate the set $\mathcal{S}$ of possible
collective attacks of Eve. To be clear, the classical systems used for
parameter estimation are $X_{i_{1}}$, $Y_{i_{1}}$, \ldots, $X_{i_{k}}$,
$Y_{i_{k}}$ for some randomly selected (without replacement) $i_{1}$, \ldots,
$i_{k}\in\left\{  1,\ldots,n\right\}  $.

In what follows, we assume that $k$ is large enough such that Alice and Bob
get a very reliable (essentially exact) estimate of the classical channel
$p_{Y|X}(y|x)$. This assumption is strong, but as stated above, our main focus
in this paper is on analyzing second-order coding rates in the key
distillation step of the quantum key distribution protocol. In
Section~\ref{sec:future-routes-PE}, we discuss various routes for addressing
this problem.

The goal of the parameter estimation step is to estimate the classical channel
$p_{Y|X}(y|x)$ reliably in order to produce an estimate of the unknown quantum
channel $\mathcal{N}_{A\rightarrow B}$ connecting Alice and Bob. Doing so then
allows them to estimate the isometric channel $\mathcal{U}_{A\rightarrow
BE}^{\mathcal{N}}$, up to an information theoretically irrelevant isometry
acting on the system $E$. In more detail, note that any other isometric
extension of the original channel $\mathcal{N}_{A\rightarrow B}$\ is related
to $\mathcal{U}_{A\rightarrow BE}^{\mathcal{N}}$ by an isometric channel
acting on the system $E$. That is, suppose that $\mathcal{V}_{A\rightarrow
BE^{\prime}}^{\mathcal{N}}$ is another isometric channel satisfying
$\mathcal{N}_{A\rightarrow B}=\operatorname{Tr}_{E^{\prime}}\circ
\mathcal{V}_{A\rightarrow BE^{\prime}}^{\mathcal{N}}$. Then there exists an
isometric channel $\mathcal{W}_{E\rightarrow E^{\prime}}$ such that
$\mathcal{V}_{A\rightarrow BE^{\prime}}^{\mathcal{N}}=\mathcal{W}%
_{E\rightarrow E^{\prime}}\circ\mathcal{U}_{A\rightarrow BE}^{\mathcal{N}}$
\cite{Wil17}. However, Eve's information about the classical systems $X$ and
$Y$ is the same regardless of which particular isometric extension is
considered. Thus, as stated above, the goal of the parameter estimation step
is to estimate the channel $\mathcal{N}_{A\rightarrow B}$ by employing the
estimate of $p_{Y|X}(y|x)$.

As a result of the parameter estimation step, Alice and Bob determine an
uncertainty set $\mathcal{S}$, each element~$s$ of which indexes a quantum
channel $\mathcal{N}_{A\rightarrow B}^{s}$ that is consistent with the
classical channel $p_{Y|X}(y|x)$, in the sense that%
\begin{equation}
p_{Y|X}(y|x)=\operatorname{Tr}[\Lambda_{B}^{y}\mathcal{N}_{A\rightarrow B}%
^{s}(\rho_{A}^{x})],
\end{equation}
for all $s\in\mathcal{S}$, $x\in\mathcal{X}$, and $y\in\mathcal{Y}$. It is
generally not possible to determine the actual quantum channel $\mathcal{N}%
_{A\rightarrow B}$ exactly, so that $\left\vert \mathcal{S}\right\vert >1$.
However, if the input ensemble $\mathcal{E}_{A}=\left\{  p_{X}(x),\rho_{A}%
^{x}\right\}  _{x\in\mathcal{X}}$ and the POVM\ $\{\Lambda_{B}^{y}%
\}_{y\in\mathcal{Y}}$ form a tomographically complete set \cite{CN97,PCZ97},
then it is possible to determine an exact estimate of the actual, unknown
quantum channel $\mathcal{N}_{A\rightarrow B}$ from the classical channel
$p_{Y|X}(y|x)$. That is, in the tomographically complete case, there exists an
invertible linear map relating $p_{Y|X}(y|x)$ and $\mathcal{N}_{A\rightarrow
B}$. So in this special case, there is a unique quantum channel $\mathcal{N}%
_{A\rightarrow B}$ corresponding to the classical channel $p_{Y|X}(y|x)$.

In some parameter estimation protocols, Alice and Bob do not estimate the
entries of $p_{Y|X}(y|x)$ for all $x\in\mathcal{X}$ and $y\in\mathcal{Y}$, but
they instead do so for a subset of $\mathcal{X}\times\mathcal{Y}$ (for example
as a consequence of sifting). Furthermore, they could reduce the number of
parameters that need to be estimated by employing additional symmetrization of
$p_{Y|X}(y|x)$, in which some of the entries are averaged or simple functions
of them are computed. These latter approaches are commonly employed in the
parameter estimation phase of the BB84 and six-state protocols, in which
quantum bit error rates (QBERs)\ are estimated in lieu of all entries of
$p_{Y|X}(y|x)$. Another example for which the full $p_{Y|X}(y|x)$ is not
typically estimated is CV-QKD, where only two scalar parameters are estimated
in order to derive a key-rate lower bound under a collective-attack
assumption, even in a finite key-length regime~\cite{Lev15}.

\subsubsection{Key distillation}

\label{sec:key-dist-QKD}After the parameter estimation step, the ensemble
characterizing each of the $n$ remaining systems shared by Alice, Bob, and Eve
is as follows:%
\begin{equation}
\mathcal{E}_{\text{QKD}}^{s}\coloneqq \left\{  p_{XY}(x,y),\rho_{E}^{x,y,s}\right\}
_{x\in\mathcal{X},y\in\mathcal{Y},s\in\mathcal{S}},
\label{eq:ensemble-QKD-for-key-dist}%
\end{equation}
where%
\begin{align}
p_{XY}(x,y)  &  =p_{X}(x)p_{Y|X}(y|x),\\
p_{Y|X}(y|x)  &  \coloneqq \operatorname{Tr}[\Lambda_{B}^{y}\mathcal{N}_{A\rightarrow
B}^{s}(\rho_{A}^{x})],
\end{align}
and the eavesdropper states $\rho_{E}^{x,y,s}$ are as follows:%
\begin{equation}
\rho_{E}^{x,y,s}\coloneqq \frac{\operatorname{Tr}_{B}[\Lambda_{B}^{y}\mathcal{U}%
_{A\rightarrow BE}^{\mathcal{N}^{s}}(\rho_{A}^{x})]}{p_{Y|X}(y|x)}.
\end{equation}
The full ensemble for the $n$ remaining systems is an $n$-fold tensor power of
the above, similar to that given in
\eqref{eq:tensor-power-ensemble-1}--\eqref{eq:tensor-power-ensemble-last},
except with the substitutions $m\rightarrow n$ and $\mathcal{N}\rightarrow
\mathcal{N}^{s}$. Note that the classical channel $p_{Y|X}(y|x)$ is known and
independent of $s$, due to our assumption of a collective attack and that $k$
is large enough so that Alice and Bob can estimate $p_{Y|X}(y|x)$ reliably.
That is, there could be many quantum channels $\mathcal{N}_{A\rightarrow
B}^{s}$ that lead to the same classical channel $p_{Y|X}(y|x)$ if the input
preparation and the output measurements are not tomographically complete.
Furthermore, the distribution $p_{X}(x)$ is known because Alice controls the
random selection of the states~$\{\rho_{A}^{x}\}_{x\in\mathcal{X}}$.

The ensemble in \eqref{eq:ensemble-QKD-for-key-dist} is then a particular
instance of the information-theoretic model presented later on in
Section~\ref{sec:key-dist-compound-source}. Specifically, the ensemble in
\eqref{eq:ensemble-QKD-for-key-dist} is an instance of the compound wiretap
source with fixed marginal from \eqref{eq:cqq-state}, in which the system $B$
in \eqref{eq:cqq-state} is in correspondence with the classical system $Y$ in
\eqref{eq:ensemble-QKD-for-key-dist}. As a result, we can apply
\eqref{eq:second-order-key-rate} (itself a consequence of
Theorem~\ref{thm:one-shot-key-dist}) to conclude that the following rate is
achievable for key distillation for the remaining $n$ rounds, by using direct
reconciliation:%
\begin{multline}
I(X;Y)_{\mathcal{E}^{s}}+\sqrt{\frac{1}{n}V(X;Y)_{\mathcal{E}^{s}}}\Phi
^{-1}(\varepsilon_{\operatorname{I}}%
)\label{eq:direct-reconciliation-second-order}\\
-\sup_{s\in\mathcal{S}}\left[  I(X;E)_{\mathcal{E}^{s}}-\sqrt{\frac{1}%
{n}V(X;E)_{\mathcal{E}^{s}}}\Phi^{-1}({\varepsilon_{\operatorname{II}}^2})\right]
\\
+O\!\left(  \frac{\log n}{n}\right)  ,
\end{multline}
for sufficiently large $n$ and with decoding error probability not larger than
$\varepsilon_{\operatorname{I}}$ and security parameter not larger than
$\varepsilon_{\operatorname{II}}$ (these latter two quantities are defined in
Section~\ref{sec:key-dist-compound-source-one-shot}). In the above, the first
two terms depend on the probability distribution $p_{XY}$ in
\eqref{eq:ensemble-QKD-for-key-dist}, but they have no dependence on $s$. The
inverse cumulative Gaussian distribution function $\Phi^{-1}(\cdot)$ is
defined in \eqref{eq:inverse-cdf}. The classical mutual information is defined
as%
\begin{equation}
I(X;Y)_{\mathcal{E}^{s}}\coloneqq \sum_{x\in\mathcal{X},y\in\mathcal{Y}}%
p_{XY}(x,y)\log_{2}\!\left(  \frac{p_{XY}(x,y)}{p_{X}(x)p_{Y}(y)}\right)  ,
\end{equation}
the classical mutual information variance as \cite{S62,Hay09,polyanskiy10}%
\begin{multline}
V(X;Y)_{\mathcal{E}^{s}}\coloneqq \\
\sum_{x\in\mathcal{X},y\in\mathcal{Y}}p_{XY}(x,y)\left[  \log_{2}\!\left(
\frac{p_{XY}(x,y)}{p_{X}(x)p_{Y}(y)}\right)  -I(X;Y)_{\mathcal{E}^{s}}\right]
^{2},
\end{multline}
the Holevo information as \cite{Holevo73}%
\begin{equation}
I(X;E)_{\mathcal{E}^{s}}\coloneqq \sum_{x\in\mathcal{X}}p_{X}(x)D(\rho_{E}^{x,s}%
\Vert\rho_{E}^{s}),
\end{equation}
and the Holevo information variance as%
\begin{multline}
V(X;E)_{\mathcal{E}^{s}}\coloneqq \\
\sum_{x\in\mathcal{X}}p_{X}(x)\left[  V(\rho_{E}^{x,s}\Vert\rho_{E}%
^{s})+\left[  D(\rho_{E}^{x,s}\Vert\rho_{E}^{s})\right]  ^{2}\right]  -\left[
I(X;E)_{\mathcal{E}^{s}}\right]  ^{2},
\end{multline}
where%
\begin{align}
\rho_{E}^{x,s}  &  \coloneqq \sum_{y\in\mathcal{Y}}p_{Y|X}(y|x)\rho_{E}^{x,y,s},\\
\rho_{E}^{s}  &  \coloneqq \sum_{x\in\mathcal{X}}p_{X}(x)\rho_{E}^{x,s}.
\end{align}
The quantum relative entropy of states $\omega$ and $\tau$ is defined as
\cite{U62}%
\begin{equation}
D(\omega\Vert\tau)\coloneqq \operatorname{Tr}[\omega\left(  \log_{2}\omega-\log
_{2}\tau\right)  ], \label{eq:rel-ent-q}%
\end{equation}
and the quantum relative entropy variance as \cite{li12,TH12}%
\begin{equation}
V(\omega\Vert\tau)\coloneqq \operatorname{Tr}[\omega\left(  \log_{2}\omega-\log
_{2}\tau-D(\omega\Vert\tau)\right)  ^{2}]. \label{eq:rel-ent-q-var}%
\end{equation}
These quantities are defined more generally in
\eqref{eq:rel-ent-sep}--\eqref{eq:rel-ent-var-sep} for states acting on
separable Hilbert spaces. If either alphabet $\mathcal{X}$ or $\mathcal{Y}$ is
continuous, then the corresponding sum is replaced with an integral.

If Alice and Bob employ reverse reconciliation instead for key distillation,
then the following distillable key rate is achievable:%
\begin{multline}
I(X;Y)_{\mathcal{E}^{s}}+\sqrt{\frac{1}{n}V(X;Y)_{\mathcal{E}^{s}}}\Phi
^{-1}(\varepsilon_{\operatorname{I}}%
)\label{eq:reverse-reconciliation-second-order}\\
-\sup_{s\in\mathcal{S}}\left[  I(Y;E)_{\mathcal{E}^{s}}-\sqrt{\frac{1}%
{n}V(Y;E)_{\mathcal{E}^{s}}}\Phi^{-1}({\varepsilon_{\operatorname{II}}^2})\right]
\\
+O\!\left(  \frac{\log n}{n}\right)  ,
\end{multline}
with the Holevo information defined as%
\begin{equation}
I(Y;E)_{\mathcal{E}^{s}}\coloneqq \sum_{y\in\mathcal{Y}}p_{Y}(y)D(\rho_{E}^{y,s}%
\Vert\rho_{E}^{s}), \label{eq:Holevo-info-def-rel-ent}%
\end{equation}
and the Holevo information variance as%
\begin{multline}
V(Y;E)_{\mathcal{E}^{s}}\coloneqq \label{eq:holevo-info-var-reverse}\\
\sum_{y\in\mathcal{Y}}p_{Y}(y)\left[  V(\rho_{E}^{y,s}\Vert\rho_{E}%
^{s})+\left[  D(\rho_{E}^{y,s}\Vert\rho_{E}^{s})\right]  ^{2}\right]  -\left[
I(Y;E)_{\mathcal{E}^{s}}\right]  ^{2},
\end{multline}
where%
\begin{equation}
\rho_{E}^{y,s}\coloneqq \sum_{x\in\mathcal{X}}p_{X|Y}(x|y)\rho_{E}^{x,y,s}.
\end{equation}

\begin{remark}
	{We emphasize that the formulas in \eqref{eq:direct-reconciliation-second-order} and \eqref{eq:reverse-reconciliation-second-order} represent the number of secret key bits per sifted bit. If we include all of the rounds of the protocol as part of the key rate, then the formulas in \eqref{eq:direct-reconciliation-second-order} and \eqref{eq:reverse-reconciliation-second-order} must be multiplied by the fraction of bits used for sifting; see, e.g., \cite[Eq.~(121)]{Tomamichel2017largelyself}.}
\end{remark}

\paragraph{On reconciliation efficiency and privacy amplification overhead}

It is common in the key distillation step of a first-order asymptotic analysis
to incorporate a reconciliation efficiency parameter $\beta\in\lbrack0,1]$,
which recognizes the fact that it is never possible in any realistic scheme to
achieve the Shannon limit $I(X;Y)_{\mathcal{E}^{s}}$. That is, the information
reconciliation rate is written as $\beta I(X;Y)_{\mathcal{E}^{s}}$. Typically,
the reconciliation efficiency $\beta$ is chosen as a constant independent of
the channel, the blocklength $n$, and the decoding error probability
$\varepsilon_{\operatorname{I}}$. As argued in \cite{Tomamichel2017}, this
approximation is rather rough, and one can instead employ a second-order
analysis to get a much better approximation of the reconciliation efficiency.
What we find from \eqref{eq:direct-reconciliation-second-order} and
\eqref{eq:reverse-reconciliation-second-order}\ is that the ideal
reconciliation efficiency is characterized in terms of $p_{XY}$, $n$, and
$\varepsilon_{\operatorname{I}}$ as follows:%
\begin{equation}
\beta(p_{XY},n,\varepsilon_{\operatorname{I}})\coloneqq 1+\frac{\sqrt{\frac{1}%
{n}V(X;Y)_{\mathcal{E}^{s}}}\Phi^{-1}(\varepsilon_{\operatorname{I}}%
)}{I(X;Y)_{\mathcal{E}^{s}}},
\end{equation}
so that%
\begin{multline}
\beta(p_{XY},n,\varepsilon_{\operatorname{I}})I(X;Y)_{\mathcal{E}^{s}}\\
=I(X;Y)_{\mathcal{E}^{s}}+\sqrt{\frac{1}{n}V(X;Y)_{\mathcal{E}^{s}}}\Phi
^{-1}(\varepsilon_{\operatorname{I}}).
\end{multline}
(Keep in mind that $\Phi^{-1}(\varepsilon_{\operatorname{I}})<0$ for
$\varepsilon_{\operatorname{I}}<1/2$.) One can also find very good fits of the
information reconciliation performance for actual codes by allowing for
constants $\beta_{1}$ and $\beta_{2}$, each not necessarily equal to one, to
characterize the reconciliation efficiency empirically as
follows~\cite{Tomamichel2017}:
\begin{equation}
\beta(p_{XY},n,\varepsilon_{\operatorname{I}},\beta_{1},\beta_{2})\coloneqq \beta
_{1}+\beta_{2}\frac{\sqrt{\frac{1}{n}V(X;Y)_{\mathcal{E}^{s}}}\Phi
^{-1}(\varepsilon_{\operatorname{I}})}{I(X;Y)_{\mathcal{E}^{s}}}.
\end{equation}
Thus, the formula above is a more useful guideline for reconciliation efficiency.

What we also notice in \eqref{eq:direct-reconciliation-second-order} and
\eqref{eq:reverse-reconciliation-second-order} is the presence of terms
related to privacy amplification overhead. Privacy amplification overhead
beyond the term $\sup_{s\in\mathcal{S}}I(Y;E)_{\mathcal{E}^{s}}$ is not
typically taken into account in first-order asymptotic security analyses, in
spite of the fact that other factors are inevitably necessary. To be clear,
the privacy amplification overhead is a factor $\gamma>1$ that multiplies the
asymptotic, first-order term $\sup_{s\in\mathcal{S}}I(Y;E)_{\mathcal{E}^{s}}$.
By employing \cite[Lemma~63]{polyanskiy10}, the following expansion in $n$
holds for direct-reconciliation privacy amplification for sufficiently
large$~n$:%
\begin{equation}
\sup_{s\in\mathcal{S}}\left[  I(X;E)_{\mathcal{E}^{s}}\right]  -\sqrt{\frac
{1}{n}V(X,\mathcal{S}^{\ast})}\Phi^{-1}({\varepsilon_{\operatorname{II}%
}^2})+o(1/\sqrt{n}),
\end{equation}
where%
\begin{align}
V(X,\mathcal{S}^{\ast})  &  \coloneqq \left\{
\begin{array}
[c]{ccc}%
\sup_{s\in\mathcal{S}^{\ast}}V(X;E)_{\mathcal{E}^{s}} & \text{if} &
{\varepsilon_{\operatorname{II}}^2}\leq\frac{1}{2}\\
\inf_{s\in\mathcal{S}^{\ast}}V(X;E)_{\mathcal{E}^{s}} & \text{else} &
\end{array}
\right.  ,\\
\mathcal{S}^{\ast}  &  \coloneqq \arg\max_{s\in\mathcal{S}}I(X;E)_{\mathcal{E}^{s}}.
\end{align}
In the above, we are applying the perturbative approach of \cite[Lemma~63]%
{polyanskiy10}, in which we optimize the first-order term $I(X;E)_{\mathcal{E}%
^{s}}$, and then among all of the optimizers of this first-order term, we are
optimizing the second-order term $V(X;E)_{\mathcal{E}^{s}}$. Thus, the ideal
privacy amplification overhead is given by%
\begin{equation}
\gamma(\mathcal{S},n,\varepsilon_{\operatorname{II}})\coloneqq 1-\frac{\sqrt{\frac
{1}{n}V(X,\mathcal{S}^{\ast})}\Phi^{-1}({\varepsilon_{\operatorname{II}}^2})}%
{\sup_{s\in\mathcal{S}}\left[  I(X;E)_{\mathcal{E}^{s}}\right]  },
\end{equation}
so that%
\begin{multline}
\gamma(\mathcal{S},n,\varepsilon_{\operatorname{II}})\sup_{s\in\mathcal{S}%
}\left[  I(X;E)_{\mathcal{E}^{s}}\right]  =\\
\sup_{s\in\mathcal{S}}\left[  I(X;E)_{\mathcal{E}^{s}}\right]  -\sqrt{\frac
{1}{n}V(X,\mathcal{S}^{\ast})}\Phi^{-1}({\varepsilon_{\operatorname{II}}^2}).
\end{multline}
As above, we could also allow for a more refined expression as follows, in
terms of constants $\gamma_{1}$ and $\gamma_{2}$, in order to fit the
performance of realistic privacy amplification protocols:%
\begin{equation}
\gamma(\mathcal{S},n,\varepsilon_{\operatorname{II}},\gamma_{1},\gamma
_{2})\coloneqq \gamma_{1}-\gamma_{2}\frac{\sqrt{\frac{1}{n}V(X,\mathcal{S}^{\ast}%
)}\Phi^{-1}({\varepsilon_{\operatorname{II}}^2})}{\sup_{s\in\mathcal{S}}\left[
I(X;E)_{\mathcal{E}^{s}}\right]  }.
\end{equation}

\subsection{Examples}

\subsubsection{General setup for the six-state and BB84 DV-QKD\ protocols}

We begin this example section by providing the general setup for both the
ideal, trusted six-state \cite{B98,BG99} and BB84 \cite{bb84} DV-QKD
protocols, which are particular instances of the generic prepare-and-measure
protocol presented in Section~\ref{sec:gen-pm-prot}.

In both of these protocols, the random variable $X$ for Alice's encoding is a
joint random variable consisting of random variables $X_{1}$ and $X_{2}$,
where $X_{1}$ represents Alice's basis choice, and $X_{2}$ is the binary
random variable corresponding to the state taken from the chosen basis. The
random variables $X_{1}$ and $X_{2}$ are independent. We similarly have that
the output random variable $Y$ for Bob is a joint random variable consisting
of random variables $Y_{1}$ and $Y_{2}$, where $Y_{1}$ represents the choice
of measurement basis, and $Y_{2}$ represents the outcome of the measurement.

Let the alphabet $\mathcal{B}$ contain the possible basis choices. The random
variables $X_{1}$ and $Y_{1}$ take values in $\mathcal{B}$. For the six-state
protocol, $\mathcal{B}_{\text{six-state}}=\{0,1,2\}$, with ``0'' denoting the
$X$-basis, ``1'' the $Z$-basis, and ``2'' the $Y$-basis. For the BB84
protocol, $\mathcal{B}_{\text{BB84}}=\{0,1\}$. Then, let $q_{b}^{A}$ and
$q_{b}^{B}$ be the probabilities that Alice and Bob, respectively, choose the
basis $b\in\mathcal{B}$. In other words,
\begin{equation}
q_{b}^{A}\coloneqq\Pr[X_{1}=b],\quad q_{b}^{B}\coloneqq\Pr[Y_{1}=b].
\end{equation}

Let us define the following:
\begin{align}
\Pi_{0}^{0}  &  \coloneqq|+\rangle\!\langle+|\equiv\rho_{A}^{0,0},\\
\Pi_{1}^{0}  &  \coloneqq|-\rangle\!\langle-|\equiv\rho_{A}^{0,1},\\
\Pi_{0}^{1}  &  \coloneqq|0\rangle\!\langle0|\equiv\rho_{A}^{1,0},\\
\Pi_{1}^{1}  &  \coloneqq|1\rangle\!\langle1|\equiv\rho_{A}^{1,1},\\
\Pi_{0}^{2}  &  \coloneqq\left\vert +i\right\rangle \!\langle+i|\equiv\rho
_{A}^{2,0},\\
\Pi_{1}^{2}  &  \coloneqq\left\vert -i\right\rangle \!\langle-i|\equiv\rho
_{A}^{2,1},
\end{align}
where%
\begin{align}
|+\rangle &  \coloneqq (|0\rangle+|1\rangle)/\sqrt{2},\\
|-\rangle &  \coloneqq (|0\rangle-|1\rangle)/\sqrt{2},\\
\left\vert +i\right\rangle  &  \coloneqq (|0\rangle+i|1\rangle)/\sqrt{2},\\
\left\vert -i\right\rangle  &  \coloneqq (|0\rangle-i|1\rangle)/\sqrt{2},
\end{align}

Now, Alice chooses the basis $b_{A}\in\mathcal{B}$ with probability $q_{b_{A}%
}^{A}$, and with probability $\frac{1}{2}$ chooses one of the two states
$\{\rho_{A}^{b_{A},0},\rho_{A}^{b_{A},1}\}$ in the basis to send to Bob. These
choices are independent, and so we have that
\begin{equation}
p_{X_{1}X_{2}}(b_{A},x)= q_{b_{A}}^{A}\cdot\frac{1}{2}.
\end{equation}
The encoding ensemble $\mathcal{E}_{A}$ is thus
\begin{equation}
\mathcal{E}_{A} \coloneqq  \{p_{X_{1}X_{2}}(b_{A},x), \rho^{b_{A},x}_{A} \}_{b_{A}
\in\mathcal{B}, x \in\{0,1\}}. \label{eq:six-state-ensemble}%
\end{equation}

The decoding POVM\ for Bob is%
\begin{equation}
\left\{  q^{B}_{b_{B}} \Pi^{b_{B}}_{y}\right\}  _{b_{B} \in\mathcal{B}, y
\in\{0,1\}} , \label{eq:six-state-meas}%
\end{equation}
which is equivalent to Bob picking the basis $b_{B}$ at random according to
$q^{B}_{b_{B}}$ and then performing the measurement~$\{\Pi^{b_{B}}_{0},
\Pi^{b_{B}}_{1}\}$.

The protocol is finite dimensional, meaning that the assumptions stated in
Section~\ref{sec:fd-assumptions}, come into play. Thus, the attack
$\mathcal{N}_{A\rightarrow B}$ that Eve applies is a qubit channel.

The channel twirling that Alice and Bob perform in this case is a Pauli
channel twirl \cite{BDSW96}. That is, before sending out her state, Alice
applies, uniformly at random, one of the Pauli operators $I$, $X$, $Y$, or $Z$
and Bob applies the corresponding Pauli after receiving the state from the
channel. Thus, both $\{U_{A}^{g}\}_{g\in\mathcal{G}}$ and $\{V_{B}^{g}%
\}_{g\in\mathcal{G}}$, as discussed in Section~\ref{sec:channel-twirl}, are
the Pauli group $\{I,X,Y,Z\}$. Then the resulting twirled channel is as given
in \eqref{eq:twirled-channel}, and it is well known that a Pauli twirl of a
qubit channel leads to a Pauli channel \cite{DHCB05}:%
\begin{equation}
\overline{\mathcal{N}}_{A\rightarrow B}(\omega_{A})=p_{I}\omega_{A}%
+p_{X}X\omega_{A}X+p_{Y}Y\omega_{A}Y+p_{Z}Z\omega_{A}Z,
\label{eq:pauli-channel}%
\end{equation}
where $p_{I},p_{X},p_{Y},p_{Z}\geq0$ and $p_{I}+p_{X}+p_{Y}+p_{Z}=1$.

Note that it is not necessary for Alice and Bob to apply the Pauli channel
twirl in an active way in the quantum domain. Since the encoding ensemble in
\eqref{eq:six-state-ensemble} is invariant with respect to the Pauli group
(for both the six-state and BB84 protocols), Alice can keep track of the twirl
in classical processing. Similarly, the decoding POVM\ of Bob in
\eqref{eq:six-state-meas} is invariant with respect to the Pauli group, so
that Bob can keep track of the twirl in classical processing.

At the end of the quantum part of the protocol, the induced classical channel
is as follows
\begin{equation}
p_{Y_{1}Y_{2}|X_{1}X_{2}}(b_{B},y|b_{A},x)=q_{b_{B}}^{B}\operatorname{Tr}%
[\Pi_{y}^{b_{B}}\overline{\mathcal{N}}_{A\rightarrow B}(\rho_{A}^{b_{A},x})].
\end{equation}
The joint probability distribution shared by Alice and Bob is then as
follows:
\begin{align}
&  p_{X_{1}X_{2}Y_{1}Y_{2}}(b_{A},x,b_{B},y)\nonumber\\
&  =p_{Y_{1}Y_{2}|X_{1}X_{2}}(b_{B},y|b_{A},x)p_{X_{1}X_{2}}(b_{A},x)\\
&  =\frac{1}{2}q_{b_{A}}^{A}q_{b_{B}}^{B}\operatorname{Tr}[\Pi_{y}^{b_{B}%
}\overline{\mathcal{N}}_{A\to B}(\rho_{A}^{b_{A},x})].
\label{eq-PM_QKD_full_dist}%
\end{align}

Both the six-state and BB84 protocols involve a sifting step, as discussed in
Section~\ref{sec:sifting}. Only the data are kept for which the sender and
receiver's basis bits agree. The probability that Alice and Bob choose the
same basis is given by
\begin{equation}
p_{\text{sift}}\coloneqq \sum_{b\in\mathcal{B}}q_{b}^{A}q_{b}^{B}%
\end{equation}
The resulting probability distribution shared by Alice and Bob is
\begin{equation}
p_{X_{1}X_{2}Y_{1}Y_{2}}^{\text{sift}}(b,x,b,y)\coloneqq \frac{q_{b}^{A}%
q_{b}^{B}}{2p_{\text{sift}}}\operatorname{Tr}[\Pi_{y}^{b}\overline
{\mathcal{N}}_{A\to B}(\rho_{A}^{b,x})],
\end{equation}
and it is for this (conditional) probability distribution that parameter
estimation occurs and using which key distillation occurs in both the
six-state and BB84 protocols.

The full classical-classical-quantum state of Alice, Bob, and the
eavesdropper, can be written via an isometric extension $\mathcal{U}_{A\to
BE}^{\overline{\mathcal{N}}}$ of the channel $\overline{\mathcal{N}}_{A\to B}%
$. Specifically,
\begin{multline}
\rho_{X_{1}X_{2}Y_{1}Y_{2}E}^{\text{sift}}=\\
\sum_{b\in\mathcal{B}}\sum_{x,y=0}^{1} \frac{q_{b}^{A}q_{b}^{B}}%
{p_{\text{sift}}} p_{X_{2}Y_{2}|X_{1}Y_{1}}(x,y|b,b)\ |b,b\rangle\!\langle
b,b|_{X_{1}Y_{1}}\\
\otimes|x,y\rangle\!\langle x,y|_{X_{2}Y_{2}}\otimes\rho_{E}^{b,x,y},
\end{multline}
where
\begin{align}
p_{X_{2}Y_{2}|X_{1}Y_{1}}(x,y|b,b)  &  =\frac{1}{2}\operatorname{Tr}[\Pi
_{y}^{b}\overline{\mathcal{N}}_{A\to B}(\rho_{A}^{b,x})],\\
\rho_{E}^{b,x,y}  &  =\frac{\operatorname{Tr}_{B}[\Pi_{y}^{b}\mathcal{U}_{A\to
BE}^{\overline{\mathcal{N}}}(\rho_{A}^{b,x})]}{p_{X_{2}Y_{2}|X_{1}Y_{1}%
}(x,y|b,b)}.
\end{align}

The channel parameters $p_{X}$, $p_{Y}$, and $p_{Z}$ in
\eqref{eq:pauli-channel} can be rewritten in terms of three quantum bit error
rates (QBERs) $Q_{x}$, $Q_{y}$, and $Q_{z}$, which in each case corresponds to
the expected probability that Bob measures a different state from what Alice
sent with respect to a given basis:
\begin{align}
Q_{x}  &  \coloneqq \frac{1}{2}\left(  \langle-|\overline{\mathcal{N}}(|+\rangle
\langle+|)|-\rangle+\langle+|\overline{\mathcal{N}}(|-\rangle\!\langle
-|)|+\rangle\right)  ,\\
Q_{y}  &  \coloneqq \frac{1}{2}\left\langle -i\right\vert \overline{\mathcal{N}%
}(\left\vert +i\right\rangle \!\left\langle +i\right\vert )\left\vert
-i\right\rangle \nonumber\\
&  \qquad\qquad+\frac{1}{2}\left\langle +i\right\vert \overline{\mathcal{N}%
}(\left\vert -i\right\rangle \! \left\langle -i\right\vert )\left\vert
+i\right\rangle ,\\
Q_{z}  &  \coloneqq \frac{1}{2}\left(  \langle1|\overline{\mathcal{N}}(|0\rangle
\langle0|)|1\rangle+\langle0|\overline{\mathcal{N}}(|1\rangle\!\langle
1|)|0\rangle\right)  .
\end{align}
The probabilities $p_{X}$, $p_{Y}$, and $p_{Z}$ are then related to $Q_{x}$,
$Q_{y}$, and $Q_{z}$ as follows:%
\begin{align}
p_{X}  &  =\frac{1}{2}\left(  Q_{z}-Q_{x}+Q_{y}\right)
,\label{eq:QBER-to-pauli-1}\\
p_{Y}  &  =\frac{1}{2}\left(  Q_{x}-Q_{y}+Q_{z}\right)  ,\\
p_{Z}  &  =\frac{1}{2}\left(  Q_{y}-Q_{z}+Q_{x}\right)  .
\label{eq:QBER-to-pauli-3}%
\end{align}
See \cite[Chapter~2]{Kh16}\ for a derivation of these equations. In the
six-state protocol, it is possible to estimate all of the QBERs reliably,
while in BB84, it is only possible to estimate $Q_{x}$ and $Q_{z}$ reliably.

Another further symmetrization of the protocol, in addition to channel
twirling, is possible if Alice and Bob discard the basis information in
$X_{1}$ and $Y_{1}$. This is commonly employed in both the six-state and BB84
protocols in order to simplify their analysis. Discarding the basis
information corresponds to tracing out the registers $X_{1}$ and $Y_{1}$
containing the basis information for Alice and Bob, respectively:%
\begin{multline}
\rho_{X_{2}Y_{2}E}^{\text{sift}}\coloneqq\operatorname{Tr}_{X_{1}Y_{1}}%
[\rho_{X_{1}X_{2}Y_{1}Y_{2}E}^{\text{sift}}]=\\
\sum_{x,y=0}^{1}\sum_{b\in\mathcal{B}}\frac{q_{b}^{A}q_{b}^{B}}{p_{\text{sift}%
}}p_{X_{2}Y_{2}|X_{1}Y_{1}}(x,y|b,b)|x,y\rangle\!\langle x,y|_{X_{2}Y_{2}}\\
\otimes\rho_{E}^{b,x,y},
\end{multline}

This discarding of basis information in either the six-state or BB84 protocols
is equivalent to a further channel twirl. Let us consider first the six-state
protocol, and let $T$ denote the following unitary%
\begin{equation}
T\coloneqq |+\rangle\!\langle0|-i|-\rangle\!\langle1|=\frac{1}{\sqrt{2}}%
\begin{bmatrix}
1 & -i\\
1 & i
\end{bmatrix}
,
\end{equation}
so that $T$ is a unitary changing the Pauli basis as
\begin{align}
TXT^{\dag}  &  =Y,\\
TYT^{\dag}  &  =Z,\\
TZT^{\dag}  &  =X.
\end{align}
Due to the fact that this gate swaps information encoded into the Pauli
eigenstates around, discarding of basis information in the six-state protocol
is equivalent to a further channel twirl as \cite[Section~2.2.7]{Myhr10}%
\begin{align}
&  \overline{\mathcal{N}}_{A\rightarrow B}^{Q}(\omega_{A})\nonumber\\
&  \coloneqq \frac{1}{3}\sum_{j\in\left\{  0,1,2\right\}  }T^{j\dag}\overline
{\mathcal{N}}_{A\rightarrow B}(T^{j}\omega_{A}T^{j\dag})T^{j}\\
&  =\left(  1-\frac{3Q}{2}\right)  \omega_{A}+\frac{Q}{2}\left(  X\omega
_{A}X+Y\omega_{A}Y+Z\omega_{A}Z\right)  ,
\label{eq:depolar-eve-attack-6-state}%
\end{align}
where%
\begin{equation}
Q=\frac{1}{3}\left(  Q_{x}+Q_{y}+Q_{z}\right)  .
\end{equation}
The channel $\overline{\mathcal{N}}_{A\rightarrow B}^{Q}$ above is the well
known quantum depolarizing channel. It is completely positive for
$Q\in[0,2/3]$, and it is entanglement breaking when $Q\in[1/3,2/3]$. An
entanglement breaking channel is not capable of distilling secret key, as
argued in \cite{CLL04}, with a strong limitation for the finite-key regime
established in \cite{WTB16}.

Now let us consider the BB84 protocol. Let $H$ denote the following unitary
Hadamard transformation:%
\begin{equation}
H\coloneqq |+\rangle\!\langle0|+|-\rangle\!\langle1|=\frac{1}{\sqrt{2}}%
\begin{bmatrix}
1 & 1\\
1 & -1
\end{bmatrix}
,
\end{equation}
so that $H$ changes the Pauli $X$ and $Z$ bases as $HXH^{\dag}=Z$ and
$HZH^{\dag}=X$.\ Due to the fact that the $H$ gate swaps information encoded
into the Pauli $X$ and $Z$ eigenstates around, discarding of basis information
in the BB84 protocol is equivalent to a further channel twirl as
\cite[Section~2.2.7]{Myhr10}%
\begin{align}
\overline{\mathcal{N}}_{A\rightarrow B}^{\text{BB84},Q}(\omega_{A})  &
\coloneqq \frac{1}{2}\sum_{j\in\left\{  0,1\right\}  }H^{j\dag}\overline{\mathcal{N}%
}_{A\rightarrow B}(H^{j}\omega_{A}H^{j\dag})H^{j}\\
&  =\left(  1-2Q+s\right)  \omega_{A}+\left(  Q-s\right)  X\omega
_{A}X\nonumber\\
&  \qquad+sY\omega_{A}Y+\left(  Q-s\right)  Z\omega_{A}Z,
\end{align}
where in this case%
\begin{align}
Q  &  =\frac{1}{2}\left(  Q_{x}+Q_{z}\right)  ,\\
s  &  =Q-Q_{y}/2.
\end{align}
The channel $\overline{\mathcal{N}}_{A\rightarrow B}^{\text{BB84},Q}$ is thus
known as the BB84 channel in the literature \cite{SS08}. 
Observe that $s
\in[2Q-1,Q]$ in order for complete positivity to hold. By
computing the Choi state of this channel and using the condition for
separability from \cite{ADH08}, we find that the BB84 channel is entanglement
breaking for $s\in[0,1/2] $ and $Q\in[(s+1/2)/2,(s+1)/2]$. As before, it is
not possible to distill secret key from the BB84 channel when it is
entanglement breaking.

We call a protocol in which the basis information is discarded a
\textquotedblleft coarse-grained protocol\textquotedblright. We find analytic
expressions for the key distillation rate of the coarse-grained BB84 and
six-state protocols in the following sections.

\subsubsection{Six-state protocol}

For the six-state protocol, we have $\mathcal{B}=\mathcal{B}_{\text{six-state}%
}=\{0,1,2\}$, corresponding to the $X$, $Y$, and $Z$ bases. We typically take
$q_{b}^{A}=\frac{1}{3}=q_{b}^{B}$ for all $b\in\mathcal{B}$, so that
$p_{\text{sift}}=\frac{1}{3}$. The joint distribution shared by Alice and Bob
after sifting is as follows:%
\begin{equation}
p_{X_{1}X_{2}Y_{1}Y_{2}}^{\text{6-state}|\text{sift}}(b,x,b,y)=\frac{q_{b}%
^{A}q_{b}^{B}}{2p_{\text{sift}}}\operatorname{Tr}[\Pi_{y}^{b}\overline
{\mathcal{N}}_{A\rightarrow B}(\rho_{A}^{b,x})],
\end{equation}
for $b\in\{0,1,2\}$ and $x,y\in\{0,1\}$. Each of the relevant entries is given
by
\begin{align}
p_{X_{1}X_{2}Y_{1}Y_{2}}^{\text{6-state}|\text{sift}}(0,0,0,0)  &  =\frac
{1}{6}(1-Q_{x}),\\
p_{X_{1}X_{2}Y_{1}Y_{2}}^{\text{6-state}|\text{sift}}(0,0,0,1)  &  =\frac
{1}{6}Q_{x},\\
p_{X_{1}X_{2}Y_{1}Y_{2}}^{\text{6-state}|\text{sift}}(0,1,0,0)  &  =\frac
{1}{6}Q_{x},\\
p_{X_{1}X_{2}Y_{1}Y_{2}}^{\text{6-state}|\text{sift}}(0,1,0,1)  &  =\frac
{1}{6}(1-Q_{x}),\\
p_{X_{1}X_{2}Y_{1}Y_{2}}^{\text{6-state}|\text{sift}}(1,0,1,0)  &  =\frac
{1}{6}(1-Q_{z}),\\
p_{X_{1}X_{2}Y_{1}Y_{2}}^{\text{6-state}|\text{sift}}(1,0,1,1)  &  =\frac
{1}{6}Q_{z},\\
p_{X_{1}X_{2}Y_{1}Y_{2}}^{\text{6-state}|\text{sift}}(1,1,1,0)  &  =\frac
{1}{6}Q_{z},\\
p_{X_{1}X_{2}Y_{1}Y_{2}}^{\text{6-state}|\text{sift}}(1,1,1,1)  &  =\frac
{1}{6}(1-Q_{z}),\\
p_{X_{1}X_{2}Y_{1}Y_{2}}^{\text{6-state}|\text{sift}}(2,0,2,0)  &  =\frac
{1}{6}(1-Q_{y}),\\
p_{X_{1}X_{2}Y_{1}Y_{2}}^{\text{6-state}|\text{sift}}(2,0,2,1)  &  =\frac
{1}{6}Q_{y},\\
p_{X_{1}X_{2}Y_{1}Y_{2}}^{\text{6-state}|\text{sift}}(2,1,2,0)  &  =\frac
{1}{6}Q_{y},\\
p_{X_{1}X_{2}Y_{1}Y_{2}}^{\text{6-state}|\text{sift}}(2,1,2,1)  &  =\frac
{1}{6}(1-Q_{y}).
\end{align}

Let us define the average QBER as
\begin{equation}
Q\coloneqq\frac{1}{3}(Q_{x}+Q_{y}+Q_{z}).
\end{equation}
If Alice and Bob discard the basis information in $X_{1}$ and $Y_{1}$, then
the resulting probability distribution is
\begin{align}
p_{X_{2}Y_{2}}^{\text{6-state}|\text{sift}}(0,0)  &  =\frac{1}{2}%
(1-Q),\label{eq-6state_pXY_discard_1}\\
p_{X_{2}Y_{2}}^{\text{6-state}|\text{sift}}(0,1)  &  =\frac{1}{2}%
Q,\label{eq-6state_pXY_discard_2}\\
p_{X_{2}Y_{2}}^{\text{6-state}|\text{sift}}(1,0)  &  =\frac{1}{2}%
Q,\label{eq-6state_pXY_discard_3}\\
p_{X_{2}Y_{2}}^{\text{6-state}|\text{sift}}(1,1)  &  =\frac{1}{2}(1-Q).
\label{eq-6state_pXY_discard_4}%
\end{align}
In other words, when discarding the basis information, Alice and Bob's data
can be characterized by the single parameter $Q$.

For the six-state protocol, there is a reliable estimate of Eve's collective
attack. This means that the uncertainty set $\mathcal{S}$ discussed in
Section~\ref{sec:parameter-est-QKD} has cardinality equal to one. With the
further assumption of discarding basis information, the average QBER $Q$
uniquely identifies the attack of Eve, as in
\eqref{eq:depolar-eve-attack-6-state}. By following the derivations in
Appendix~\ref{app:mut-hol-info-6-bb84}, we find that the mutual and Holevo
informations and variances as a function of the average QBER~$Q$ are as
follows:%
\begin{align}
I(X;Y)_{\rho}  &  =1-h_{2}(Q),\\
V(X;Y)_{\rho}  &  =Q(1-Q)\left(  \log_{2}\!\left(  \frac{1-Q}{Q}\right)
\right)  ^{2},\\
I(X;E)_{\rho}  &  =-\left(  1-\frac{3Q}{2}\right)  \log_{2}\!\left(
1-\frac{3Q}{2}\right) \nonumber\\
&  \qquad-\frac{3Q}{2}\log_{2}\!\left(  \frac{Q}{2}\right)  -h_{2}(Q),\\
V(X;E)_{\rho}  &  =Q+\left(  1-\frac{3Q}{2}\right)  \left(  \log_{2}\!\left(
\frac{1-\frac{3Q}{2}}{1-Q}\right)  \right)  ^{2}\nonumber\\
&  \qquad+\frac{Q}{2}\left(  \log_{2}\!\left(  \frac{\frac{Q}{2}}{1-Q}\right)
\right)  ^{2}-I(X;E)_{\rho}^{2},
\end{align}
where
\begin{equation}
	h_2(Q)\coloneqq -Q\log_2(Q)-(1-Q)\log_2(1-Q)
\end{equation}
is the binary entropy. The achievable distillable key with direct reconciliation is then given by
evaluating the general formula in
\eqref{eq:direct-reconciliation-second-order}:%
\begin{multline}
K_{\text{6-state}}(Q,n,\varepsilon_{\text{I}},\varepsilon_{\text{II}%
})=1+\left(  1-\frac{3Q}{2}\right)  \log_{2}\!\left(  1-\frac{3Q}{2}\right) \\
+\frac{3Q}{2}\log_{2}\!\left(  \frac{Q}{2}\right) \\
\quad+\sqrt{\frac{Q(1-Q)}{n}}\log_{2}\!\left(  \frac{1-Q}{Q}\right)  \Phi
^{-1}(\varepsilon_{\text{I}})\\
+\sqrt{\frac{V(X;E)_{\rho}}{n}}\Phi^{-1}({\varepsilon_{\text{II}}^2}).
\end{multline}
We note that the first-order rate term  in the above expression is equal to the known asymptotic key rate for the six-state protocol and coincides with the result from \cite{L01}.

Figure~\ref{fig:six-state-performance} plots the second-order coding rate for
key distillation using the six-state protocol. These rates can be compared
with the finite-key analysis from \cite{AMKB11}, but the comparison is not
necessarily fair, due to our assumption of reliable parameter estimation.

\begin{figure}
\centering
\includegraphics[width=\columnwidth]{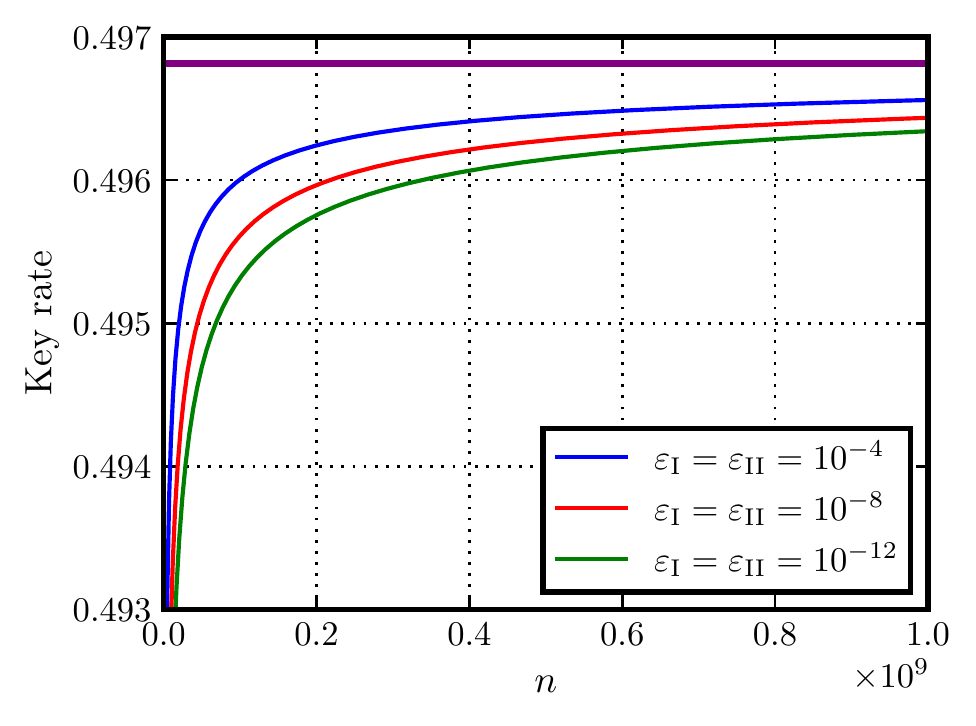}\caption{Distillable key
rates for the coarse-grained six-state protocol after sifting with $Q=0.05$.
The purple solid line indicates the asymptotic key rate.}%
\label{fig:six-state-performance}%
\end{figure}

We finally note that incorporating the methods of \cite{KR08} can improve the
key rate.

\subsubsection{BB84 DV-QKD protocol}

For the BB84 protocol, we have $\mathcal{B}=\mathcal{B}_{\text{BB84}}%
=\{0,1\}$, corresponding to the $X$ and $Z$ bases. We typically take
$q_{b}^{A}=\frac{1}{2}=q_{b}^{B}$ for all $b\in\mathcal{B}$, so that
$p_{\text{sift}}=\frac{1}{2}$. The relevant probability distribution is
\begin{equation}
p_{X_{1}X_{2}Y_{1}Y_{2}}^{\text{BB84}|\text{sift}}(b,x,b,y)=\frac{q_{b}%
^{A}q_{b}^{B}}{2p_{\text{sift}}}\operatorname{Tr}[\Pi_{y}^{b}\overline
{\mathcal{N}}_{A\to B}(\rho_{A}^{b,x})],
\end{equation}
for $b\in\{0,1\}$ and $x,y\in\{0,1\}$. We then have
\begin{align}
p_{X_{1}X_{2}Y_{1}Y_{2}}^{\text{BB84}|\text{sift}}(0,0,0,0)  &  =\frac{1}%
{4}(1-Q_{x}),\\
p_{X_{1}X_{2}Y_{1}Y_{2}}^{\text{BB84}|\text{sift}}(0,0,0,1)  &  =\frac{1}%
{4}Q_{x},\\
p_{X_{1}X_{2}Y_{1}Y_{2}}^{\text{BB84}|\text{sift}}(0,1,0,0)  &  =\frac{1}%
{4}Q_{x},\\
p_{X_{1}X_{2}Y_{1}Y_{2}}^{\text{BB84}|\text{sift}}(0,1,0,1)  &  =\frac{1}%
{4}(1-Q_{x}),\\
p_{X_{1}X_{2}Y_{1}Y_{2}}^{\text{BB84}|\text{sift}}(1,0,1,0)  &  =\frac{1}%
{4}(1-Q_{z}),\\
p_{X_{1}X_{2}Y_{1}Y_{2}}^{\text{BB84}|\text{sift}}(1,0,1,1)  &  =\frac{1}%
{4}Q_{z},\\
p_{X_{1}X_{2}Y_{1}Y_{2}}^{\text{BB84}|\text{sift}}(1,1,1,0)  &  =\frac{1}%
{4}Q_{z},\\
p_{X_{1}X_{2}Y_{1}Y_{2}}^{\text{BB84}|\text{sift}}(1,1,1,1)  &  =\frac{1}%
{4}(1-Q_{z})
\end{align}

Let us define the average QBER as
\begin{equation}
Q\coloneqq\frac{1}{2}(Q_{x}+Q_{z}).
\end{equation}
If Alice and Bob discard the basis information in $X_{1}$ and $Y_{1}$, then
the probability distribution is
\begin{align}
p_{X_{2}Y_{2}}^{\text{BB84}|\text{sift}}(0,0)  &  =\frac{1}{2}%
(1-Q),\label{eq-BB84_pXY_discard_1}\\
p_{X_{2}Y_{2}}^{\text{BB84}|\text{sift}}(0,1)  &  =\frac{1}{2}%
Q,\label{eq-BB84_pXY_discard_2}\\
p_{X_{2}Y_{2}}^{\text{BB84}|\text{sift}}(1,0)  &  =\frac{1}{2}%
Q,\label{eq-BB84_pXY_discard_3}\\
p_{X_{2}Y_{2}}^{\text{BB84}|\text{sift}}(1,1)  &  =\frac{1}{2}(1-Q).
\label{eq-BB84_pXY_discard_4}%
\end{align}
In other words, when discarding the basis information, Alice and Bob's
classical data can be characterized using the single parameter $Q$.

For the parameter estimation step of the protocol, it is possible for Alice
and Bob to determine the QBERs $Q_{x}$ and $Q_{z}$\ reliably. However, it is
not possible to estimate $Q_{y}$ because the encoding and decoding do not
involve the eigenbasis of the Pauli $Y$ operator. Thus, the uncertainty set
$\mathcal{S}$ in this case consists of all Pauli channels satisfying
\eqref{eq:QBER-to-pauli-1}--\eqref{eq:QBER-to-pauli-3} for fixed $Q_{x}$ and
$Q_{z}$. If we further simplify the protocol by throwing away the basis
information, then the analysis simplifies. By following the derivations in
Appendix~\ref{app:mut-hol-info-6-bb84}, we find that the mutual and Holevo
informations and variances as a function of the average QBER~$Q$ and the
optimization parameter $s\in\left[  0,Q\right]  $ are as follows:%
\begin{align}
I(X;Y)_{\rho}  &  =1-h_{2}(Q),\\
V(X;Y)_{\rho}  &  =Q(1-Q)\left(  \log_{2}\!\left(  \frac{1-Q}{Q}\right)
\right)  ^{2},\\
I(X;E)_{\rho}  &  =H(\{1-2Q+s,Q-s,Q-s,s\})\nonumber\\
&  \qquad-h_{2}(Q),\\
V(X;E)_{\rho}  &  =(1-2Q+s)\left(  \log_{2}\!\left(  \frac{1-2Q+s}%
{1-Q}\right)  \right)  ^{2}\nonumber\\
&  \qquad+(Q-s)\left(  \log_{2}\!\left(  \frac{Q-s}{1-Q}\right)  \right)
^{2}\nonumber\\
&  \qquad+(Q-s)\left(  \log_{2}\!\left(  \frac{Q-s}{Q}\right)  \right)
^{2}\nonumber\\
&  \qquad+s\left(  \log_{2}\!\left(  \frac{s}{Q}\right)  \right)
^{2}-I(X;E)_{\rho}^{2}.
\end{align}

The achievable key rate for direct reconciliation is then given by evaluating
the general formula in \eqref{eq:direct-reconciliation-second-order} and
performing an optimization over the parameter $s\in\left[  0,Q\right]  $. We note that the first-order rate term $I(X;Y)-I(X;E)$  is equal to the known asymptotic key rate for the BB84 protocol and coincides with the result from \cite{SP00}.

Figure~\ref{fig:BB84-performance} plots the second-order coding rate for key
distillation using the BB84 protocol. These rates can be compared with the
finite-key analysis from \cite{Tomamichel2017largelyself}, but the comparison
is not necessarily fair, due to our assumption of reliable parameter estimation.

\begin{figure}
\centering
\includegraphics[width=\columnwidth]{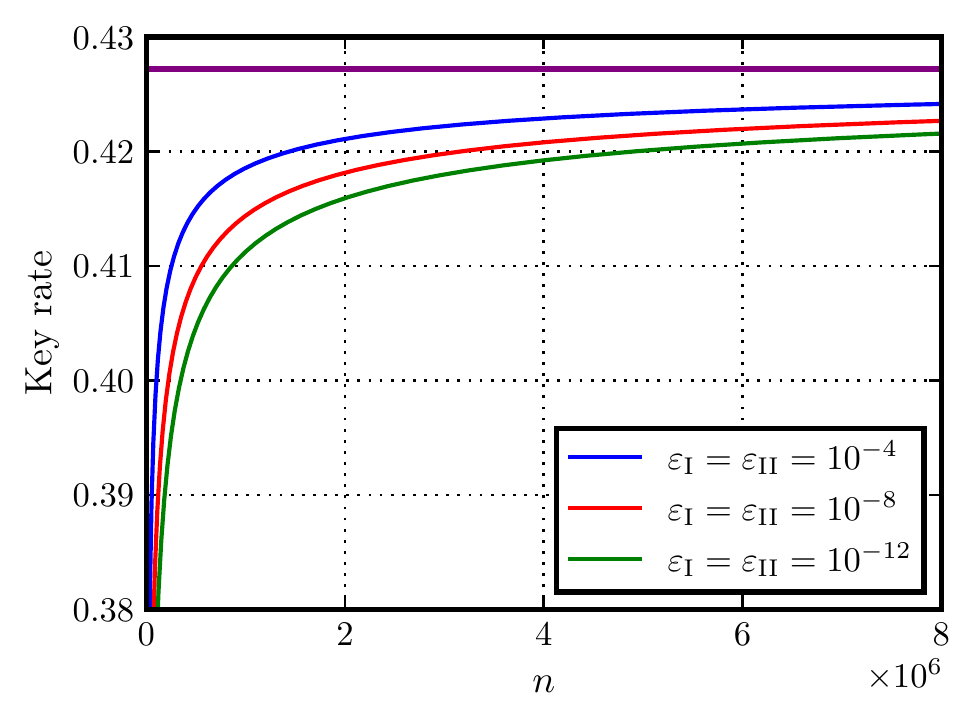} \caption{Distillable key rates
for the coarse-grained BB84 protocol after sifting with $Q=0.05$. The purple
solid line indicates the asymptotic key rate.}%
\label{fig:BB84-performance}%
\end{figure}

We again note that incorporating the methods of \cite{smith:170502} can
improve the key rate.

\subsubsection{CV-QKD\ protocol with coherent states and heterodyne detection}

We finally analyze the performance of the CV-QKD\ Gaussian modulation protocol
with reverse reconciliation. This protocol involves Gaussian modulation of
coherent states by Alice and heterodyne detection by Bob, and it is a
particular instance of the prepare-and-measure protocol from
Section~\ref{sec:gen-pm-prot}.

The encoding ensemble $\mathcal{E}_{A}$ in this case is%
\begin{equation}
\left\{  p_{\overline{n}}(\alpha),|\alpha\rangle\!\langle\alpha|\right\}
_{\alpha\in\mathbb{C}}, \label{eq:CV-Gaussian-encoding-ens}%
\end{equation}
where $\overline{n} > 0$, $p_{\overline{n}}(\alpha)$ is an isotropic complex
Gaussian distribution:%
\begin{equation}
p_{\overline{n}}(\alpha)\coloneqq \frac{1}{\pi\overline{n}}\exp\left(  -\left\vert
\alpha\right\vert ^{2}/\overline{n}\right)  ,
\label{eq:CV-QKD-encoding-ensemble}%
\end{equation}
and $|\alpha\rangle\!\langle\alpha|$ is a coherent state \cite{GK04}\ such that%
\begin{equation}
|\alpha\rangle\coloneqq e^{-\left\vert \alpha\right\vert ^{2}/2}\sum_{n=0}^{\infty
}\frac{\alpha^{n}}{\sqrt{n!}}|n\rangle,
\end{equation}
with $\left\{  |n\rangle\right\}  _{n=0}^{\infty}$ the orthonormal photon
number basis. The decoding POVM\ for Bob is heterodyne detection \cite{GK04},
given by%
\begin{equation}
\left\{  \frac{1}{\pi}|\alpha\rangle\!\langle\alpha|\right\}  _{\alpha
\in\mathbb{C}}. \label{eq:Bob-decoding-POVM}%
\end{equation}
Let $X$ be a complex random variable corresponding to the choice of $\alpha$
in Alice's encoding, and let $Y$ be a complex random variable corresponding to
Bob's measurement outcome.

The encoding and decoding act on a single bosonic mode, which implies that the
channel $\mathcal{N}_{A\rightarrow B}$\ that Eve employs is a single-mode
bosonic channel.

The channel twirling that Alice and Bob employ in this case is the phase
symmetrization from \cite{KGW19}. In particular, before Alice sends out a
coherent state, she applies the unitary $e^{-i\hat{n}\phi}$, with the phase
$\phi$ selected uniformly at random from $\left\{  0,\pi/2,\pi,3\pi/2\right\}
$. She communicates this choice to Bob, who then applies $e^{i\hat{n}\phi}$ to
his received mode. They both then discard the classical memory of which phase
$\phi$ was applied. This phase symmetrization significantly reduces the number
of parameters that Alice and Bob need to estimate in the parameter estimation
part of the protocol. Like the other protocols, this phase symmetrization need
not be applied in the quantum domain because the encoding ensemble in
\eqref{eq:CV-Gaussian-encoding-ens} and the decoding
POVM\ \eqref{eq:Bob-decoding-POVM} are invariant under the actions of Alice
and Bob mentioned above. Thus, it can be carried out in classical processing
that keeps track of the twirl.

During the parameter estimation step of the protocol, Alice and Bob estimate
\begin{align}
\gamma_{12}  &  \coloneqq \mathbb{E}\{\left(  X-\mathbb{E}\{X\}\right)  ^{\ast}\left(
Y-\mathbb{E}\{Y\}\right)  \},\label{eq:cv-params-1}\\
\gamma_{22}  &  \coloneqq \mathbb{E}\{\left\vert Y-\mathbb{E}\{Y\}\right\vert ^{2}\}.
\label{eq:cv-params-3}%
\end{align}
We suppose here that the photon number variance of the channel output system $B$ is finite, in order to ensure that reliable estimation of the parameters $\gamma_{12}$ and $\gamma_{22}$ is possible.
Here we also assume that they estimate $p_{Y|X}(y|x)$. The following parameter
is known
\begin{equation}
\gamma_{11} \coloneqq \mathbb{E}\{\left\vert X-\mathbb{E}\{X\}\right\vert ^{2}\},
\end{equation}
being a function of Alice's encoding ensemble. The uncertainty set
$\mathcal{S}$ in this case then consists of all single-mode bosonic channels, with system $B$ having finite photon number variance,
that lead to the estimates $\gamma_{12}$ and $\gamma_{22}$, as well as
$p_{Y|X}(y|x)$. The achievable key rate is given by evaluating the formula in
\eqref{eq:reverse-reconciliation-second-order}\ as a function of the encoding
in \eqref{eq:CV-Gaussian-encoding-ens}, the decoding in
\eqref{eq:Bob-decoding-POVM}, and the set $\mathcal{S}$ mentioned above. To
evaluate this formula, we apply the perturbative approach from \cite[Lemmas~63
and 64]{polyanskiy10}, which is valid for sufficiently large $n$. In this
approach, we optimize the first-order Holevo information term of Eve first,
and then among all of the channels optimizing the first-order term, we
optimize the second-order Holevo information variance of Eve. In this case,
the Gaussian extremality theorem of \cite{Wolf06}, as observed in
\cite{PC2006}, implies that a Gaussian attack achieves the optimal first-order
Holevo information of Eve. The same conclusion has been reached in
\cite{NGA06} by a different line of reasoning. In fact, by examining the proof
in \cite{NGA06} and employing the faithfulness of quantum relative entropy, we
conclude that for every non-Gaussian attack, there is a Gaussian attack that
achieves a strictly higher Holevo information of Eve. Thus, it suffices to
optimize Eve's Holevo information over Gaussian attacks exclusively, and among
all of these Gaussian attacks, there is a unique one that achieves the optimum
\cite[Appendix~D]{Lev15}, which is a thermal channel consistent with the
observed estimates $\gamma_{11}$, $\gamma_{12}$, and $\gamma_{22}$. So
applying the perturbative approach mentioned above, there is no need to
perform a further optimization of the Holevo information variance of Eve, and
we just evaluate it with respect to the optimal and unique Gaussian attack.

We stress here that, even though the actual attack of Eve need not be a
Gaussian channel, the Gaussian optimality theorem and the perturbative
extension of it is useful in order to bound Eve's information from above.
Without this result, the optimization would be too difficult (perhaps
impossible) because the underlying Hilbert space is infinite-dimensional.

Let us analyze the performance of the above protocol in a particular physical
scenario. Suppose that the underlying physical channel is indeed a thermal
channel $\mathcal{L}_{A\rightarrow B}^{\eta,N_{B}}$ \cite{Ser17},
characterized by transmissivity $\eta\in(0,1)$ and environment thermal photon
number $N_{B}$. However, Alice and Bob are not aware of this, as is the usual
case in a QKD protocol. They execute the above protocol, and after the
parameter estimation phase, suppose that they have estimated the classical
channel $p_{Y|X}(y|x)$ and the parameters in
\eqref{eq:cv-params-1}--\eqref{eq:cv-params-3}. From the parameters in
\eqref{eq:cv-params-1}--\eqref{eq:cv-params-3}, they conclude that the optimal
Gaussian attack of Eve is a thermal channel of transmissivity $\eta$ and
environment thermal photon number $N_{B}$. Let $\mathcal{U}_{A\rightarrow
BE}^{\mathcal{L}^{\eta,N_{B}}}$ be an isometric channel extending
$\mathcal{L}_{A\rightarrow B}^{\eta,N_{B}}$. An isometric channel extending
$\mathcal{L}_{A\rightarrow B}^{\eta,N_{B}}$ can be physically realized by the
action of a beamsplitter of transmissivity $\eta$ acting on the input mode $A$
and one share $E_{1}$ of a two-mode squeezed vacuum state \cite{CG06,CGH06},
as%
\begin{equation}
\mathcal{U}_{A\rightarrow BE}^{\mathcal{L}^{\eta,N_{B}}}(\omega_{A}%
)=\mathcal{B}_{AE_{1}\rightarrow BE_{1}}^{\eta}(\omega_{A}\otimes\psi
_{E_{1}E_{2}}^{N_{B}}),
\end{equation}
where $\omega_{A}$ is the input state, $\mathcal{B}_{AE_{1}\rightarrow BE_{1}%
}^{\eta}$ denotes the beamsplitter channel of transmissivity $\eta$, the
system $E$ consists of modes $E_{1}$ and $E_{2}$, and the two-mode squeezed
vacuum state $\psi_{E_{1}E_{2}}^{N_{B}}\coloneqq |\psi^{N_{B}}\rangle\!\langle
\psi^{N_{B}}|_{E_{1}E_{2}}$ is defined from%
\begin{equation}
|\psi^{N_{B}}\rangle_{E_{1}E_{2}}\coloneqq \frac{1}{\sqrt{N_{B}+1}}\sum_{n=0}^{\infty
}\left(  \sqrt{\frac{N_{B}}{N_{B}+1}}\right)  ^{n}|n\rangle_{E_{1}}%
|n\rangle_{E_{2}}.
\end{equation}

We now calculate the various quantities in
\eqref{eq:reverse-reconciliation-second-order} for this case and for a reverse
reconciliation protocol. We begin by determining the mutual information
$I(X;Y)$ and the mutual information variance $V(X;Y)$. The classical channel
$p_{Y|X}(y|x)$ induced by heterodyne detection of a Gaussian-distributed
ensemble of coherent states transmitted through the thermal channel
$\mathcal{L}_{A\rightarrow B}^{\eta,N_{B}}$\ is as follows (see, e.g.,
\cite{Guha04,Savov12}):%
\begin{equation}
p_{Y|X}(y|x)=\frac{\exp\left(  -\frac{\left\vert y-\sqrt{\eta}x\right\vert
^{2}}{\pi\left(  \left(  1-\eta\right)  N_{B}+1\right)  }\right)  }{\pi\left(
\left(  1-\eta\right)  N_{B}+1\right)  }, \label{eq:induced-Gaussian-channel}%
\end{equation}
where $x,y\in\mathbb{C}$. As mentioned previously, we assume that this
classical channel $p_{Y|X}$ is reliably estimated during the parameter
estimation step \footnote{When estimating $p_{Y|X}$,  we can  bin the outcomes of the joint distribution $p_{XY}$ to get a discrete distribution $p_{X_m Y_m}$. Then we can estimate this finite distribution $p_{X_m Y_m}$ instead, which is reasonable since there are a finite number of parameters, and plug the values into the classical mutual information $I(X_m;Y_m)$ and classical mutual information variance $V(X_m;Y_m)$ for a second-order coding rate for information reconciliation. Then the limits $I(X_m;Y_m) \to I(X;Y)$ and $V(X_m;Y_m)\to V(X;Y)$ hold as the bin size gets smaller.}. Then it follows that%
\begin{align}
I(X;Y)  &  =\log_{2}( 1+P) ,\\
V(X;Y)  &  = \frac{1}{\ln^{2} 2 } \frac{P\left(  P+2\right)  }{\left(
P+1\right)  ^{2}},
\end{align}
where%
\begin{equation}
P\coloneqq \frac{\eta\bar{n}}{\left(  1-\eta\right)  N_{B}+1}.
\end{equation}
The formulas above follow from the fact that the channel $p_{Y|X}(y|x)$ in
\eqref{eq:induced-Gaussian-channel} can be understood as two independent real
Gaussian channels each with signal-to-noise ratio $P$, and then by applying
Shannon's formula \cite{bell1948shannon}\ and the mutual information variance
formula from \cite{polyanskiy10,TT15} (with a prefactor of two to account for
the two independent channels).

We now turn to calculating the Holevo information $I(Y;E)$ and the Holevo
information variance $V(Y;E)$, the latter of which can be calculated as a
special case of Proposition~\ref{prop:gaussian-formulas-HI-Hi-var} below. For
a Gaussian-distributed ensemble of coherent states as in
\eqref{eq:CV-QKD-encoding-ensemble}, the expected input density operator is a
thermal state of the following form~\cite{Ser17}:%
\begin{equation}
\theta(\bar{n})\coloneqq \frac{1}{\bar{n}+1}\sum_{n=0}^{\infty}\left(  \frac{\bar{n}%
}{\bar{n}+1}\right)  ^{n}|n\rangle\!\langle n|,
\end{equation}
with covariance matrix $V(\bar{n})\coloneqq \left(  2\bar{n}+1\right)  I_{2}$, where
$I_{2}$ is the $2\times2$ identity matrix. It then follows that the covariance
matrix of the state $\rho_{BE}\coloneqq \mathcal{U}_{A\rightarrow BE}^{\mathcal{L}%
^{\eta,N_{B}}}(\theta(\bar{n}))$ is%
\begin{align}
V_{\rho_{BE}} & \coloneqq
\begin{bmatrix}
V_{B} & V_{BE}\\
V_{BE} & V_{E}%
\end{bmatrix}
\\
&=\left(  B(\eta)\oplus I_{2}\right)  \left(  V(\bar{n})\oplus V_{E}%
(N_{B})\right)  \left(  B^{T}(\eta)\oplus I_{2}\right)  ,
\end{align}
where%
\begin{align}
B(\eta) &  \coloneqq %
\begin{bmatrix}
\sqrt{\eta}I_{2} & \sqrt{1-\eta}I_{2}\\
-\sqrt{1-\eta}I_{2} & \sqrt{\eta}I_{2}%
\end{bmatrix}
,\\
V_{E}(N_{B}) &  \coloneqq %
\begin{bmatrix}
\left(  2N_{B}+1\right)  I_{2} & 2\sqrt{N_{B}(N_{B}+1)}\sigma_{Z}\\
2\sqrt{N_{B}(N_{B}+1)}\sigma_{Z} & \left(  2N_{B}+1\right)  I_{2}%
\end{bmatrix}
.
\end{align}
For this example, note that Eve's actual attack and Alice and Bob's estimation
of it coincide, due to our assumption that the underlying channel is a thermal
channel. We then obtain the covariance matrix of the reduced state $\rho
_{E}=\operatorname{Tr}_{B}[\rho_{BE}]$ from the above, which can be used to
calculate the entropy $H(E)_{\rho}$. The covariance matrix of the reduced
state $\rho_{E}^{y}$\ of Eve's system, given the outcome $y$ of Bob's
heterodyne detection is as follows:%
\begin{equation}
V_{E|Y}\coloneqq V_{E}-V_{BE}\left(  V_{B}+I_{2}\right)  ^{-1}V_{BE}^{T},
\end{equation}
and the probability of obtaining the outcome $y$ is%
\begin{equation}
p_{Y}(y)\coloneqq \frac{\exp\left(  -\frac{\left\vert y\right\vert ^{2}}{\pi\left(
\eta\bar{n}+\left(  1-\eta\right)  N_{B}+1\right)  }\right)  }{\pi\left(
\eta\bar{n}+\left(  1-\eta\right)  N_{B}+1\right)  }.
\end{equation}
This allows us to compute the conditional entropy as
\begin{equation}
H(E|Y)=\int dy\ p_{Y}(y)H(E)_{\rho^{y}}.
\end{equation}
Now combining $H(E)_{\rho}-H(E|Y)=I(Y;E)$, we can calculate Eve's Holevo information.

To calculate the Holevo information variance $V(Y;E)$, we apply the formula
from Proposition~\ref{prop:gaussian-formulas-HI-Hi-var} below. To understand
this formula, we first provide a definition for a Gaussian ensemble of quantum
Gaussian states.

\begin{definition}
[Gaussian ensemble]\label{def:gaussian-ensemble}A Gaussian ensemble of
$m$-mode Gaussian states consists of a Gaussian prior probability distribution
of the form%
\begin{align}
p_{Y}(y) &  \coloneqq \mathcal{N}(\mu,\Sigma)\\
&  \coloneqq \frac{\exp\left(  -\frac{1}{2}\left(  y-\mu\right)  ^{T}\Sigma
^{-1}\left(  y-\mu\right)  \right)  }{\sqrt{\left(  2\pi\right)  ^{2n}%
\det(\Sigma)}},
\end{align}
and a set $\left\{  \rho_{E}^{y}\right\}  _{y\in\mathbb{R}^{2n}}$ of Gaussian
states, each having $2m\times1$ mean vector $Wy+\nu$ and $2m\times2m$ quantum
covariance matrix $V$. In the above, $y$ is a $2n\times1$ vector, $\mu$ is the
$2n\times1$ mean vector of the random vector $Y$, $\Sigma$ is the $2n\times2n$
covariance matrix of the random vector$\ Y$, $W$ is a $2m\times2n$ matrix, and
$\nu$ is a $2m$-dimensional vector.
\end{definition}

\begin{proposition}
\label{prop:gaussian-formulas-HI-Hi-var}The Holevo information of the ensemble
in Definition~\ref{def:gaussian-ensemble} is given by%
\begin{multline}
I(Y;E)=\\
\frac{1}{2\ln2}\left[  \ln\!\left(  \frac{Z_{E}}{Z}\right)  +\frac{1}%
{2}\operatorname{Tr}[V\Delta]+\operatorname{Tr}[W\Sigma W^{T}G_{E}]\right]  ,
\end{multline}
and the Holevo information variance of the ensemble in
Definition~\ref{def:gaussian-ensemble} is given by%
\begin{multline}
V(Y;E)=\frac{1}{8\ln^{2}2}\left[  \operatorname{Tr}[\left(  \Delta V\right)
^{2}]+\operatorname{Tr}[\left(  \Delta\Omega\right)  ^{2}]\right]  \\
+\frac{1}{2\ln^{2}2}\left[  \operatorname{Tr}[W\Sigma W^{T}G_{E}%
VG_{E}]+\operatorname{Tr}[\left(  W\Sigma W^{T}G_{E}\right)  ^{2}]\right]  ,
\end{multline}
where%
\begin{align}
Z &  \coloneqq \det(\left[  V+i\Omega\right]  /2),\\
G &  \coloneqq 2i\Omega\operatorname{arcoth}(iV\Omega),\\
V_{E} &  \coloneqq V+2W\Sigma W^{T},\\
Z_{E} &  \coloneqq \det(\left[  V_{E}+i\Omega\right]  /2),\\
G_{E} &  \coloneqq 2i\Omega\operatorname{arcoth}(iV_{E}\Omega),\\
\Delta &  \coloneqq G_{E}-G.
\end{align}

\end{proposition}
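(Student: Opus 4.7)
The plan is to realize both formulas as moments of a single Hermitian operator, the log-likelihood ratio $M \coloneqq \log_2 \rho_E^y - \log_2 \rho_E$, evaluated under the classical-quantum state $\rho_{YE} \coloneqq \int dy\, p_Y(y)\, |y\rangle\!\langle y| \otimes \rho_E^y$. By definition, $I(Y;E) = \Tr[\rho_{YE} M]$; expanding the square in \eqref{eq:rel-ent-q-var} yields the identity $V(\omega\|\tau) + D(\omega\|\tau)^2 = \Tr[\omega (\log_2 \omega - \log_2 \tau)^2]$, whence the law of total variance gives
\begin{equation}
V(Y;E) = \int dy\, p_Y(y)\, \mathrm{Var}_{\rho_E^y}(M) + \mathrm{Var}_{p_Y}\!\bigl(D(\rho_E^y\|\rho_E)\bigr).
\end{equation}
A preliminary step shows that $\rho_E$ is itself Gaussian (since a Gaussian integral of Gaussian characteristic functions is Gaussian) with mean $W\mu + \nu$ and covariance $V_E = V + 2W\Sigma W^T$, where the factor of $2$ reflects the symmetrized convention for the quantum covariance matrix.

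Next, using the Gibbs representation $\rho = \exp(-\tfrac12 (\hat R - d)^T G (\hat R - d))/Z$ and shifting variables to $\hat\xi \coloneqq \hat R - (Wy + \nu)$, the operator $M$ decomposes as
\begin{equation}
M = \tfrac{1}{2\ln 2}\, \hat\xi^T \Delta \hat\xi - \tfrac{1}{\ln 2}\, (y-\mu)^T W^T G_E \hat\xi + c(y),
\end{equation}
with $c(y)$ a scalar collecting all $y$-dependent constants. The Holevo information formula then follows by first evaluating $\langle M\rangle_{\rho_E^y}$ via $\langle \hat\xi\rangle_{\rho_E^y} = 0$ and $\langle \hat\xi^T A \hat\xi\rangle_{\rho_E^y} = \tfrac12 \Tr[AV]$ for symmetric $A$, and then averaging the residual quadratic $(y-\mu)^T W^T G_E W (y-\mu)$ against $\mathcal{N}(\mu,\Sigma)$, which produces $\Tr[W \Sigma W^T G_E]$.

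For the variance of $M$ under $\rho_E^y$, I would invoke Wick's theorem for quantum Gaussian states, $\langle \hat\xi_a \hat\xi_b \hat\xi_c \hat\xi_d\rangle = \Gamma_{ab}\Gamma_{cd} + \Gamma_{ac}\Gamma_{bd} + \Gamma_{ad}\Gamma_{bc}$ with the non-Hermitian two-point function $\Gamma \coloneqq (V + i\Omega)/2$; the vanishing of odd correlators kills all cross terms between the quadratic and linear pieces of $M$. The quadratic piece contributes $\tfrac{1}{2\ln^2 2}\Tr[\Delta \Gamma \Delta \Gamma^T]$, and the key identity
\begin{equation}
\Tr[\Delta \Gamma \Delta \Gamma^T] = \tfrac14 \bigl( \Tr[(\Delta V)^2] + \Tr[(\Delta \Omega)^2] \bigr),
\end{equation}
obtained by expanding $\Gamma = (V + i\Omega)/2$ and cancelling the imaginary cross terms via cyclicity, supplies the $\tfrac{1}{8\ln^2 2}$ bracket in the target formula. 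The linear piece contributes $\tfrac{1}{2\ln^2 2}(y-\mu)^T W^T G_E V G_E W (y-\mu)$, which averages over $y$ to $\tfrac{1}{2\ln^2 2}\Tr[W \Sigma W^T G_E V G_E]$. Finally, $\mathrm{Var}_{p_Y}(D(\rho_E^y\|\rho_E))$ is just the variance of the quadratic form $\tfrac{1}{2\ln 2}(y-\mu)^T W^T G_E W (y-\mu)$ under $\mathcal{N}(\mu,\Sigma)$, and the classical identity $\mathrm{Var}(z^T A z) = 2\Tr[(A\Sigma)^2]$ for symmetric $A$ and centered Gaussian $z$ produces the remaining $\tfrac{1}{2\ln^2 2}\Tr[(W \Sigma W^T G_E)^2]$.

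The hardest part is the careful bookkeeping of the symplectic structure: the $\Tr[(\Delta \Omega)^2]$ contribution is a genuinely quantum correction arising from the imaginary part of $\Gamma$, and verifying that the imaginary cross terms in $\Tr[\Delta \Gamma \Delta \Gamma^T]$ and in the mixed quadratic-linear averages cancel exactly—so that only the real $V$- and $\Omega$-bilinears survive—is the technical heart of the calculation and the feature distinguishing the quantum formula from its classical counterpart.
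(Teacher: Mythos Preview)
Your approach is correct and, at the structural level, coincides with the paper's: both identify $\rho_E$ as the Gaussian state with mean $W\mu+\nu$ and covariance $V_E=V+2W\Sigma W^T$, both organize $V(Y;E)$ via the law of total variance into a ``conditional quantum variance'' piece $\int p_Y(y)\,V(\rho_E^y\Vert\rho_E)$ and a ``variance of the conditional mean'' piece $\int p_Y(y)\,D(\rho_E^y\Vert\rho_E)^2-[I(Y;E)]^2$, and both reduce the latter to a classical fourth-moment computation (you via $\mathrm{Var}(z^TAz)=2\Tr[(A\Sigma)^2]$, the paper via Isserlis' theorem---these are equivalent).

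The one substantive difference is in how the conditional variance $V(\rho_E^y\Vert\rho_E)$ is obtained. The paper treats the Gaussian relative-entropy and relative-entropy-variance formulas \eqref{eq:rel-ent-Gaussian}--\eqref{eq:rel-ent-var-Gaussian} as known inputs (citing prior work), simply plugs in, and averages. You instead expand the log-likelihood operator $M$ in the shifted quadratures $\hat\xi$ and compute its first two moments from scratch via the quantum Wick theorem with the non-symmetric contraction $\Gamma=(V+i\Omega)/2$; the identity $\Tr[\Delta\Gamma\Delta\Gamma^T]=\tfrac14(\Tr[(\Delta V)^2]+\Tr[(\Delta\Omega)^2])$ you isolate is precisely what underlies \eqref{eq:rel-ent-var-Gaussian}. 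So your route is more self-contained---it re-derives the Gaussian relative-entropy variance along the way---while the paper's is shorter by outsourcing that step. Neither buys anything the other cannot reach.

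One harmless slip: the linear term in your decomposition of $M$ carries a $+$ sign, not $-$ (expand $-\tfrac12(\hat\xi+W(y-\mu))^TG_E(\hat\xi+W(y-\mu))$ and subtract). Since $\langle\hat\xi\rangle_{\rho_E^y}=0$ and the term enters the variance only quadratically, this does not affect any of your conclusions.
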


See Appendix~\ref{app:Holevo-info-and-var-Gaussian} for a review of quantum
Gaussian states and measurements and for a proof of
Proposition~\ref{prop:gaussian-formulas-HI-Hi-var}.

Putting everything above together, we find that the achievable secret key rate
is given by evaluating the above quantities in the formula in
\eqref{eq:reverse-reconciliation-second-order}.
Figure~\ref{fig:CVQKD-performance} plots the second-order coding rate for key
distillation using the CV-QKD reverse reconciliation protocol. One can compare
these rates to the finite-key analysis presented in \cite{Lev15}, but the
comparison is not necessarily fair due to our assumption of reliable parameter estimation. {We also note that the first-order term in \eqref{eq:reverse-reconciliation-second-order} is equal to the known asymptotic key rate from \cite{Lev15}.}

\begin{figure}
\centering
\includegraphics[width=\columnwidth]{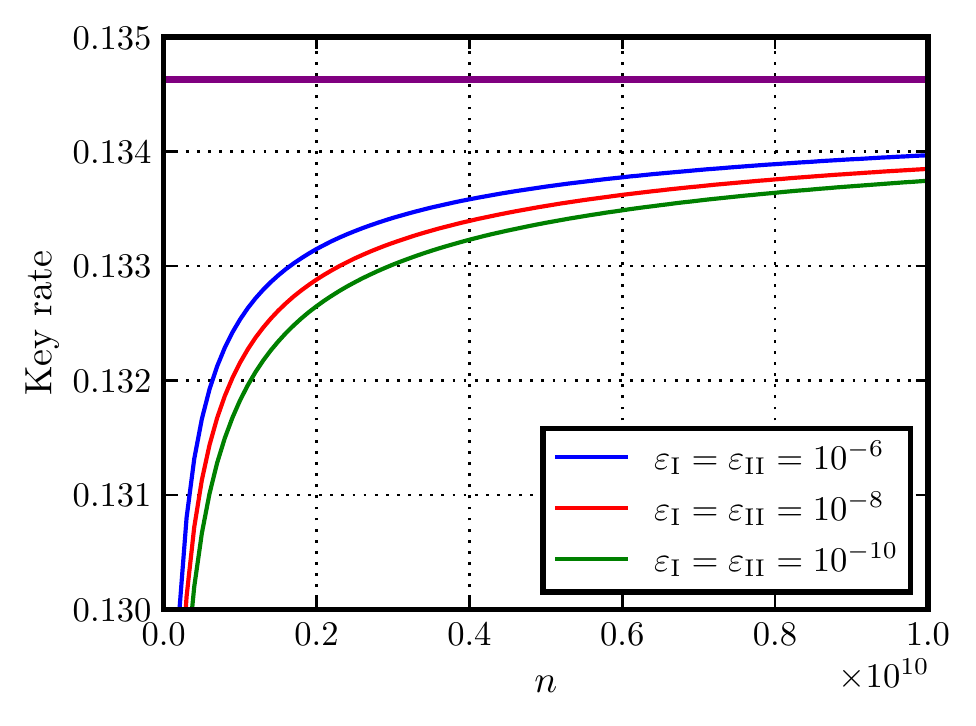}\caption{Distillable key rates
for the reverse reconciliation CV-QKD protocol, when conducted over a thermal
channel with transmissivity $\eta= 0.2$ and environment photon number $N_{B} =
10^{-2}$. The purple solid line indicates the asymptotic key rate.}%
\label{fig:CVQKD-performance}%
\end{figure}

\subsection{Routes to a full second-order analysis that includes parameter
estimation}

\label{sec:future-routes-PE}As we have mentioned in
Section~\ref{sec:parameter-est-QKD}, our second-order analysis above is
incomplete, in the sense that it does not incorporate details of the parameter
estimation step. That is, we have assumed that $k$ is large enough to allow
for a reliable estimate of the induced classical channel $p_{Y|X}$, and from
there, we applied a second-order analysis to the key distillation step. Here
we discuss a potential route around this issue and an avenue for future work
on this topic.

For finite-dimensional protocols, the collective attack of Eve is constrained
to a finite-dimensional quantum channel. This quantum channel is realized by a
unitary acting on a larger, yet finite-dimensional Hilbert space. If we limit
Eve to a finite set of quantum computations that can be realized from a
finite, universal gate set, which could in principle realize any unitary on
the space to any desired accuracy, then the resulting channel is selected from
a very large, yet finite and known set. In this case, the parameter estimation
procedure of \cite[Section~III-D]{Poly13} can be applied, and it has
negligible overhead. Indeed, it is possible to estimate the channel chosen
from a finite set with $k=O(\log n)$ and such that the decoding error
probability is increased by no more than $1/\sqrt{n}$, which has no impact on
the second-order terms in \eqref{eq:direct-reconciliation-second-order} and
\eqref{eq:reverse-reconciliation-second-order} for sufficiently large $n$.
However, the main drawback of this approach is that, even though this
constraint on the eavesdropper might be considered reasonable, it is still a
computational assumption and thus lies outside the quantum
information-theoretic security that is ultimately desired in quantum key
distribution. Furthermore, this does not address CV-QKD\ protocols in which
there is not a finite-dimensional assumption.

Another approach to address the issue is to incorporate recent advances in
universal coding up to the second order \cite{Hay19}. A universal code does
not depend on the particular channel over which communication is being
conducted, and it only requires an estimate of the first- and second-order
terms. This approach has been analyzed in detail in \cite{Hay19}, and it is
likely that it could be combined with the information-theoretic protocol
considered in this paper and the parameter estimation step in order to arrive
at a full second-order analysis for both the parameter estimation and key
distillation steps in a quantum key distribution protocol. A benefit of this
approach is that quantum information-theoretic security would be retained. We
leave a full investigation of this approach for future work.

\section{Private communication over a compound wiretap channel with fixed
marginal}

\label{sec:priv-comm-comp-wiretap}We now present the one-shot
information-theoretic results that underlie the previous claims for key
distillation in quantum key distribution. We begin by considering private
communication over a compound wiretap channel with fixed marginal, and then in
Section~\ref{sec:key-dist-compound-source}, we move on to a key distillation
protocol for a compound wiretap source with fixed marginal.

\subsection{One-shot setting}

\label{sec:one-shot-qsdc}

In the setting of private communication, we suppose that Alice, Bob, and Eve
are connected by means of the following classical--quantum--quantum
(cqq)\ compound wiretap channel with fixed marginal:%
\begin{equation}
\mathcal{N}_{X\rightarrow BE}^{s}:x\rightarrow\rho_{BE}^{x,s},
\label{eq:cqq-ch}%
\end{equation}
where the classical input $x\in\mathcal{X}$, the alphabet $\mathcal{X}$ is
countable, the index $s\in\mathcal{S}$ represents the choice or state of the
channel, the set $\mathcal{S}$ can be uncountable, and $\rho_{BE}^{x,s}$ is a
density operator acting on the tensor-product separable Hilbert space
$\mathcal{H}_{B}\otimes\mathcal{H}_{E}$. We suppose that the channel has a
fixed marginal; i.e, the reduced Alice-Bob channel%
\begin{equation}
\mathcal{N}_{X\rightarrow B}^{s}:x\rightarrow\rho_{B}^{x,s}\coloneqq \operatorname{Tr}%
_{E}[\rho_{BE}^{x,s}]
\end{equation}
is known to Alice and Bob, so that%
\begin{equation}
\rho_{B}^{x,s}=\rho_{B}^{x}\quad\text{for all}\quad s\in\mathcal{S}.
\end{equation}
However, the reduced $s$-dependent Alice-Eve channel%
\begin{equation}
\mathcal{N}_{X\rightarrow E}^{s}:x\rightarrow\rho_{E}^{x,s}\coloneqq \operatorname{Tr}%
_{B}[\rho_{BE}^{x,s}]
\end{equation}
is not fully known to or characterized by Alice or Bob during their
communication. Thus, the channel $\mathcal{N}_{X\rightarrow BE}^{s}$ is a
particular member of a set $\mathcal{S}$\ of channels, for which%
\begin{equation}
\operatorname{Tr}_{E}[\rho_{BE}^{x,s}]=\rho_{B}^{x}\quad\text{for all}\quad
s\in\mathcal{S}. \label{eq:ch-assumption}%
\end{equation}
To be clear, we suppose that the set $\mathcal{S}$ is known to Alice and Bob,
but the particular channel indexed by $s\in\mathcal{S}$, which is actually
being employed during the communication protocol, is not known. The goal of a
private communication protocol is for Alice to send a message securely to Bob,
such that it is secure against the channel $\mathcal{N}_{X\rightarrow BE}^{s}$
for all $s\in\mathcal{S}$.

The action of the channel in \eqref{eq:cqq-ch}\ can be described in the fully
quantum picture as follows:%
\begin{equation}
\mathcal{N}_{X\rightarrow BE}^{s}(\omega_{X})\coloneqq \sum_{x}\langle x|_{X}%
\omega_{X}|x\rangle_{X}\rho_{BE}^{x,s}, \label{eq:fully-q-cqq-channel}%
\end{equation}
where $\{|x\rangle_{X}\}_{x\in\mathcal{X}}$ is a countable orthonormal basis
for an input separable Hilbert space $\mathcal{H}_{X}$.

In our model of communication, we suppose also that public shared randomness
is available for free to Alice and Bob. Since it is public, Eve gets a copy of
all of the shared randomness. In particular, therefore, this shared randomness is not the same as prior secret key shared by Alice and Bob.

Let $\overline{\Psi}_{R_{A}R_{B}R_{E}}$ denote the following public shared
randomness state:%
\begin{equation}
\overline{\Psi}_{R_{A}R_{B}R_{E}}\coloneqq \sum_{r=1}^{|\mathcal{R}|}p(r)|r\rangle\!
\langle r|_{R_{A}}\otimes|r\rangle\!\langle r|_{R_{B}}\otimes|r\rangle\!\langle
r|_{R_{E}},
\end{equation}
where $\mathcal{R}$ is the alphabet for the shared randomness, Alice possesses
the classical register $R_{A}$, Bob $R_{B}$, and Eve $R_{E}$. The systems
$R_{A}$, $R_{B}$, and $R_{E}$ have the same dimension. A private coding scheme
for the channel in \eqref{eq:cqq-ch} consists of the public shared randomness
state $\overline{\Psi}_{R_{A}R_{B}R_{E}}$, an encoding channel $\mathcal{E}%
_{R_{A}M\rightarrow X}$ taking the register $R_{A}$ and a message register $M$
to the channel input register $X$, and a decoding channel $\mathcal{D}%
_{R_{B}B\rightarrow\hat{M}}$. Two parameters $\varepsilon_{\operatorname{I}}$
and $\varepsilon_{\operatorname{II}}$ determine the decoding error probability
and the security of the protocol, respectively. Alice uses the coding scheme
and the channel in \eqref{eq:cqq-ch} to send a message $m\in\mathcal{M}$,
where $\mathcal{M}$ is the message alphabet, and the conditional probability
of decoding as $m^{\prime}\in\mathcal{M}$ is given by%
\begin{equation}
\Pr[\hat{M}=m^{\prime}|M=m]\coloneqq \langle m^{\prime}|_{\hat{M}}\mathcal{D}%
_{R_{B}B\rightarrow\hat{M}}(\rho_{BR_{B}}^{m})|m^{\prime}\rangle_{\hat{M}%
},\label{eq:decoding-channel-cc-induced}%
\end{equation}
where%
\begin{equation}
\rho_{BR_{B}}^{m}\coloneqq \operatorname{Tr}_{ER_{E}}[\rho_{BER_{B}R_{E}}^{m,s}],
\end{equation}%
\begin{multline}
\rho_{BER_{B}R_{E}}^{m,s}\coloneqq \\
(\mathcal{N}_{X\rightarrow BE}^{s}\circ\mathcal{E}_{R_{A}M\rightarrow
X})(|m\rangle\!\langle m|_{M}\otimes\overline{\Psi}_{R_{A}R_{B}R_{E}}).
\end{multline}
Note that the reduced state $\rho_{BR_{B}}^{m}$ has no dependence on $s$, due
to the assumption stated in \eqref{eq:ch-assumption}. To see
\eqref{eq:decoding-channel-cc-induced} in a different way, we can
alternatively consider Bob's positive operator-valued measure $\left\{
\Lambda_{R_{B}B}^{m}\right\}  _{m\in\mathcal{M}}$ to consist of the following
elements:%
\begin{equation}
\Lambda_{R_{B}B}^{m}\coloneqq (\mathcal{D}_{R_{B}B\rightarrow\hat{M}})^{\dag
}(|m\rangle\!\langle m|_{\hat{M}}),
\end{equation}
where $(\mathcal{D}_{R_{B}B\rightarrow\hat{M}})^{\dag}$ is the
Hilbert--Schmidt adjoint of the decoding channel $\mathcal{D}_{R_{B}%
B\rightarrow\hat{M}}$, so that we can write $\Pr[\hat{M}=m^{\prime}|M=m]$ in
terms of the Born rule as%
\begin{equation}
\Pr[\hat{M}=m^{\prime}|M=m]=\operatorname{Tr}[\Lambda_{R_{B}B}^{m^{\prime}%
}\rho_{BR_{B}}^{m}].
\end{equation}
The protocol is $\varepsilon_{\operatorname{I}}$-reliable, with $\varepsilon
_{\operatorname{I}}\in\left[  0,1\right]  $, if the following condition holds
for all $m\in\mathcal{M}$:%
\begin{equation}
\Pr[\hat{M}=m|M=m]\geq1-\varepsilon_{\operatorname{I}}.
\end{equation}
Equivalently, we require that%
\begin{equation}
\sup_{m\in\mathcal{M}}\left(  1-\Pr[\hat{M}=m|M=m]\right)  \leq\varepsilon
_{\operatorname{I}}.
\end{equation}

The protocol is $\varepsilon_{\operatorname{II}}$-secret, with $\varepsilon
_{\operatorname{II}}\in\left[  0,1\right]  $, if for all channel states
$s\in\mathcal{S}$, there exists a fixed state $\sigma_{ER_{E}}^{s}$ of Eve's
systems, such that for all messages $m\in\mathcal{M}$, the following
inequality holds%
\begin{equation}
\frac{1}{2}\left\Vert \rho_{ER_{E}}^{m,s}-\sigma_{ER_{E}}^{s}\right\Vert
_{1}\leq\varepsilon_{\operatorname{II}}.
\end{equation}
Equivalently, we require that%
\begin{equation}
\sup_{s\in\mathcal{S}}\inf_{\sigma_{ER_{E}}^{s}}\sup_{m\in\mathcal{M}}\frac
{1}{2}\left\Vert \rho_{ER_{E}}^{m,s}-\sigma_{ER_{E}}^{s}\right\Vert _{1}%
\leq\varepsilon_{\operatorname{II}}.
\end{equation}
Thus, for each $s\in\mathcal{S}$, the protocol allows for sending a message
securely, such that the eavesdropper cannot determine the message $m$, no
matter which channel $\mathcal{N}_{X\rightarrow BE}^{s}$ is selected from
$\mathcal{S}$.

The number of private bits communicated by the scheme is equal to $\log
_{2}\left\vert \mathcal{M}\right\vert $. Thus, a given protocol for private
communication over the channel in \eqref{eq:cqq-ch} is described by the three
parameters $\left\vert \mathcal{M}\right\vert $, $\varepsilon
_{\operatorname{I}}$, and $\varepsilon_{\operatorname{II}}$.

We remark that this setting reduces to the traditional one-shot,
shared-randomness-assisted private communication setting considered in quantum
information theoretic contexts in the case that $|\mathcal{S}|=1$. See
\cite{wilde2017position} for details of this special case.

\subsection{Distinguishability and information measures}

Before stating Theorem~\ref{thm:one-shot-priv-comm}\ regarding the number of
private messages that can be transmitted in the above setting, we briefly
review the basic distinguishability measures and information quantities needed
to understand the statement of Theorem~\ref{thm:one-shot-priv-comm}.

Let $\rho$ and $\sigma$ be states acting on a separable Hilbert space
$\mathcal{H}$. The hypothesis testing relative entropy $D_{H}^{\varepsilon
}(\rho\Vert\sigma)$\ is defined for $\varepsilon\in\lbrack0,1]$ as
\cite{BD10,BD11,WR12}%
\begin{equation}
D_{H}^{\varepsilon}(\rho\Vert\sigma)\coloneqq -\log_{2}\inf_{\Lambda\geq0}\left\{
\operatorname{Tr}[\Lambda\sigma]:\operatorname{Tr}[\Lambda\rho]\geq
1-\varepsilon,\Lambda\leq I\right\}  .
\end{equation}
Note that, without loss of generality, we can set the first constraint above
to be an equality \cite{KW17a}, so that%
\begin{equation}
D_{H}^{\varepsilon}(\rho\Vert\sigma)=-\log_{2}\inf_{\Lambda\geq0}\left\{
\operatorname{Tr}[\Lambda\sigma]:\operatorname{Tr}[\Lambda\rho]=1-\varepsilon
,\ \Lambda\leq I\right\}  .
\end{equation}
The max-relative entropy $D_{\max}(\rho\Vert\sigma)$ is defined as
\cite{Datta2009b}%
\begin{equation}
D_{\max}(\rho\Vert\sigma)\coloneqq \inf\left\{  \lambda:\rho\leq2^{\lambda}%
\sigma\right\}  .
\end{equation}
The following alternative characterization of $D_{\max}(\rho\Vert\sigma)$ is
well known \cite{Dat09}:%
\begin{multline}
D_{\max}(\rho\Vert\sigma)=\label{eq:alt-dmax}\\
\inf_{\lambda,\omega\geq0}\left\{  \lambda:\sigma=2^{-\lambda}\rho+\left(
1-2^{-\lambda}\right)  \omega,\ \operatorname{Tr}[\omega]=1\right\}  ,
\end{multline}
which allows for thinking of $\sigma$ as a convex combination of $\rho$ and
some other state $\omega$. The smooth max-relative entropy for $\varepsilon
\in\left[  0,1\right]  $ is defined as \cite{Datta2009b}%
\begin{multline}
D_{\max}^{\varepsilon}(\rho\Vert\sigma)\coloneqq \\
\inf\left\{  \lambda:\widetilde{\rho}\leq2^{\lambda}\sigma,\ \widetilde{\rho
}\geq0,\ \operatorname{Tr}[\widetilde{\rho}]=1,\ P(\widetilde{\rho},\rho
)\leq\varepsilon\right\}  ,
\end{multline}
where the sine distance $P(\widetilde{\rho},\rho)$ is defined as
\cite{R02,R03,R06,GLN04}%
\begin{align}
P(\widetilde{\rho},\rho)  &  \coloneqq \sqrt{1-F(\widetilde{\rho},\rho)},\\
F(\widetilde{\rho},\rho)  &  \coloneqq \left\Vert \sqrt{\widetilde{\rho}}\sqrt{\rho
}\right\Vert _{1}^{2},
\end{align}
the latter quantity being the quantum fidelity \cite{Uhl76}.

Let $\rho_{AE}$ be a state acting on a separable Hilbert space $\mathcal{H}%
_{A}\otimes\mathcal{H}_{E}$. Then a mutual-information like quantity is
defined for $\varepsilon\in\left[  0,1\right]  $ as%
\begin{equation}
D_{\max}^{\varepsilon}(\rho_{AE}\Vert\rho_{A}\otimes\rho_{E}%
).\label{eq:smooth-dmax-MI}%
\end{equation}
We also define the alternate smooth max-mutual information as
\cite{AJW17,wilde2017position}%
\begin{equation}
\widetilde{I}_{\max}^{\varepsilon}(E;A)_{\rho}\coloneqq \inf_{\widetilde{\rho}%
_{AE}:P(\widetilde{\rho}_{AE},\rho_{AE})\leq\varepsilon}D_{\max}%
(\widetilde{\rho}_{AE}\Vert\rho_{A}\otimes\widetilde{\rho}_{E}%
).\label{eq:smooth-alt-imax}%
\end{equation}
A simple relation between these two generalizations of mutual information is
given in Appendix~\ref{app:relation-smooth-max-MIs}. Note that we adopt the
particular notation in \eqref{eq:smooth-alt-imax}\ in order to maintain
consistency with the notation of \cite{BCR09}.

The hypothesis testing mutual information of a bipartite state $\rho_{AB}$
acting on a separable Hilbert space $\mathcal{H}_{A}\otimes\mathcal{H}_{B}$ is
defined for $\varepsilon\in\left[  0,1\right]  $ as \cite{WR12}%
\begin{equation}
I_{H}^{\varepsilon}(A;B)_{\rho}\coloneqq D_{H}^{\varepsilon}(\rho_{AB}\Vert\rho
_{A}\otimes\rho_{B}).
\end{equation}

\subsection{Private communication via\ position-based coding and convex
splitting}

The communication protocol that we develop here is the same as that considered
in \cite{wilde2017position}, but with the added observation that its security
holds universally for any channel selected from the set $\mathcal{S}$ and for
channels whose output states act on a separable Hilbert space. The protocol
from \cite{wilde2017position} built upon the ideas of position-based coding
and convex splitting, which were in turn introduced in \cite{AJW17} and
\cite{ADJ17}, respectively. Our main one-shot achievability theorem, for the
communication setting described in Section~\ref{sec:one-shot-qsdc}, is as follows:

\begin{theorem}
\label{thm:one-shot-priv-comm}Fix $\varepsilon_{\operatorname{I}}%
,\varepsilon_{\operatorname{II}}\in(0,1)$, $\eta_{\operatorname{I}}%
\in(0,\varepsilon_{\operatorname{I}})$, and $\eta_{\operatorname{II}}%
\in(0,{\varepsilon_{\operatorname{II}}})$. Then the following quantity is
an achievable number of private message bits that can be transmitted over the
compound wiretap channel in \eqref{eq:cqq-ch}, with decoding error probability
not larger than $\varepsilon_{\operatorname{I}}$ and security parameter not
larger than $\varepsilon_{\operatorname{II}}$:%
\begin{multline}
\sup_{p_{X}}\left[  I_{H}^{\varepsilon_{\operatorname{I}}-\eta
_{\operatorname{I}}}(X;B)_{\rho}-\sup_{s\in\mathcal{S}}\widetilde{I}_{\max
}^{{\varepsilon_{\operatorname{II}}}-\eta_{\operatorname{II}}}%
(E;X)_{\rho^{s}}\right] \\
-\log_{2}(4\varepsilon_{\operatorname{I}}/\eta_{\operatorname{I}}^{2}%
)-2\log_{2}(1/2\eta_{\operatorname{II}}),
\end{multline}
where the entropic quantities are evaluated with respect to the following
state:%
\begin{equation}
\rho_{XBE}^{s}\coloneqq \sum_{x\in\mathcal{X}}p_{X}(x)|x\rangle\!\langle x|_{X}%
\otimes\rho_{BE}^{x,s}.
\end{equation}

\end{theorem}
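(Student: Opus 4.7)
The plan is to adapt the position-based coding plus convex splitting construction of \cite{wilde2017position}, which is built on \cite{AJW17,ADJ17}, to a compound wiretap channel with fixed marginal and to output states on a separable Hilbert space. Fix an input distribution $p_X$ and a secrecy parameter $K \in \mathbb{N}$. The public shared randomness $\overline{\Psi}_{R_A R_B R_E}$ is used to distribute $|\mathcal{M}| \cdot K$ independent samples $\{x_{m,k}\}_{m \in \mathcal{M}, k \in \{1,\ldots,K\}}$ drawn i.i.d.\ from $p_X$, organized as a table whose rows are messages and whose columns are privacy indices. To encode message $m$, Alice draws $k^\ast$ uniformly at random using private local randomness and transmits $x_{m, k^\ast}$ through the compound channel.

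For the reliability analysis, Bob holds the entire table via $R_B$ together with the channel output $B$; because the fixed-marginal assumption makes $\rho_B^x$ independent of $s$, Bob's position-based decoding POVM can be constructed universally using only the states $\rho_B^x$ and the marginal $\sum_x p_X(x)\rho_B^x$. A Hayashi--Nagaoka-type estimate applied to this POVM, as in the position-based coding analysis of \cite{AJW17}, bounds the maximal decoding error by $\varepsilon_{\operatorname{I}}$ provided
\begin{equation}
\log_2(|\mathcal{M}| K) \leq I_H^{\varepsilon_{\operatorname{I}} - \eta_{\operatorname{I}}}(X;B)_\rho - \log_2(4\varepsilon_{\operatorname{I}}/\eta_{\operatorname{I}}^2),
\end{equation}
where the penalty term absorbs the Hayashi--Nagaoka prefactor together with the slack between $\varepsilon_{\operatorname{I}}$ and $\varepsilon_{\operatorname{I}} - \eta_{\operatorname{I}}$.

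For the security analysis, fix $s \in \mathcal{S}$ and consider Eve's state together with her copy $R_E$ of the table, conditioned on message $m$. Averaging over the private index $k^\ast$ puts Eve's channel-output register in the mixture $\frac{1}{K}\sum_{k=1}^K \rho_E^{x_{m,k}, s}$, which is correlated with $R_E$ only through row $m$. Applying the smoothed convex split lemma in the separable-Hilbert-space form established in Appendix~\ref{sec:smooth-univ-convex-split} shows that this state is $\varepsilon_{\operatorname{II}}$-close in trace distance to an $m$-independent state $\sigma_{E R_E}^s$ whenever
\begin{equation}
\log_2 K \geq \widetilde{I}_{\max}^{\varepsilon_{\operatorname{II}} - \eta_{\operatorname{II}}}(E;X)_{\rho^s} + 2\log_2(1/(2\eta_{\operatorname{II}})).
\end{equation}
Crucially, the encoder and decoder are oblivious to $s$, so a single $K$ exceeding the supremum of the right-hand side over $s \in \mathcal{S}$ enforces the security condition simultaneously for every channel in the compound class. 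Subtracting this lower bound on $\log_2 K$ from the reliability upper bound on $\log_2(|\mathcal{M}| K)$, invoking a standard expurgation argument over the random codebook to extract a deterministic code, and finally optimizing over $p_X$, yields the claimed achievable number of private message bits.

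The main obstacle is establishing the smoothed convex split lemma on a separable Hilbert space while preserving the quantitative penalty $2\log_2(1/(2\eta_{\operatorname{II}}))$ and the sine-distance smoothing built into $\widetilde{I}_{\max}^\varepsilon$; this is precisely the technical content developed in Appendices~\ref{app:2nd-order-smooth-dmax} and~\ref{sec:smooth-univ-convex-split} and is the reason the theorem applies beyond the finite-dimensional setting. A secondary point is verifying that position-based decoding in terms of $I_H^\varepsilon$ extends verbatim to separable output spaces, which follows from the variational definition of $D_H^\varepsilon$ together with its standard operator inequality manipulations. The compound aspect itself introduces no genuinely new difficulty beyond the supremum over $s$, because both the encoder and Bob's decoder depend only on the known marginal $\rho_B^x$.
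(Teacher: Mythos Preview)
Your construction and analysis track the paper's proof closely: position-based coding over a table of $|\mathcal{M}|\cdot K$ shared-randomness samples, decoding via the hypothesis-testing mutual information bound, and security via the smooth universal convex-split lemma applied row-wise, with the supremum over $s$ taken because the encoder and decoder are $s$-independent. The quantitative bounds you state for $\log_2(|\mathcal{M}|K)$ and $\log_2 K$ match the paper's exactly.

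There is, however, a genuine gap: the expurgation/derandomization step you invoke at the end is both unnecessary and incorrect in this setting. The protocol model in Section~\ref{sec:one-shot-qsdc} explicitly allows public shared randomness, so no derandomization is required to meet the theorem's claim. More importantly, derandomizing \emph{breaks} the compound security guarantee. A standard expurgation argument fixes a single codebook that is good on average, but the average is taken with respect to a \emph{fixed} $s$; once the codebook is fixed, there is no reason it remains $\varepsilon_{\operatorname{II}}$-secret simultaneously for every $s\in\mathcal{S}$, and a union bound over an uncountable $\mathcal{S}$ is unavailable. The paper addresses exactly this point in the Remark following the proof and again in Section~\ref{sec:col-att-QKD-comp-wiretap}: after derandomization the scheme is only secure against a fixed, known eavesdropper, not against the whole uncertainty set. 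So drop the expurgation step entirely---the shared-randomness-assisted code \emph{is} the final code.

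A minor secondary point: for the separable-Hilbert-space extension of position-based decoding with the specific penalty $\log_2(4\varepsilon_{\operatorname{I}}/\eta_{\operatorname{I}}^2)$, the paper relies on the sequential decoder of \cite{OMW19} rather than the Hayashi--Nagaoka route of \cite{AJW17}; you correctly flag this extension as something to verify, but citing \cite{OMW19} directly closes it.
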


\begin{proof}
Consider the compound wiretap channel in \eqref{eq:cqq-ch}. Let $p_{X}$ be a
probability distribution over the channel input alphabet $\mathcal{X}$. Let
$\rho_{XX^{\prime}X^{\prime\prime}}$ denote the following public shared
randomness state:%
\begin{equation}
\rho_{XX^{\prime}X^{\prime\prime}}\coloneqq \sum_{x\in\mathcal{X}}p_{X}(x)|x\rangle\!
\langle x|_{X}\otimes|x\rangle\!\langle x|_{X^{\prime}}\otimes|x\rangle\!\langle
x|_{X^{\prime\prime}}.
\end{equation}
We suppose that Alice, Bob, and Eve share the state
\begin{equation}
\rho_{XX^{\prime}X^{\prime\prime}}^{\otimes\left\vert \mathcal{M}\right\vert
\left\vert \mathcal{R}\right\vert }%
\end{equation}
before communication begins and that the $X$ systems are indexed as $X_{m,r}$,
i.e., in lexicographic order by $m\in\mathcal{M}$ and $r\in\mathcal{R}$, where
$\mathcal{R}\coloneqq \{1,\ldots,\left\vert \mathcal{R}\right\vert \}$. Alice
possesses all of the $X$ systems, Bob the $X^{\prime}$ systems, and Eve the
$X^{\prime\prime}$ systems.

To send the message $m\in\mathcal{M}$, Alice's encoding consists of picking
$r\in\mathcal{R}$ uniformly at random from the set $\mathcal{R}$, and then she
sends the classical system $X_{m,r}$ through the channel in
\eqref{eq:fully-q-cqq-channel}. For fixed values of $m$, $r$, and $s$, the
global shared state at this point is given by%
\begin{multline}
\rho_{X^{\left\vert \mathcal{M}\right\vert \left\vert \mathcal{R}\right\vert
}X^{\prime\left\vert \mathcal{M}\right\vert \left\vert \mathcal{R}\right\vert
}X^{\prime\prime\left\vert \mathcal{M}\right\vert \left\vert \mathcal{R}%
\right\vert }BE}^{m,r,s}\coloneqq \rho_{X_{1,1}X_{1,1}^{\prime}X_{1,1}^{\prime\prime}%
}\otimes\cdots\\
\otimes\rho_{X_{m,r-1}X_{m,r-1}^{\prime}X_{m,r-1}^{\prime\prime}}\otimes
\rho_{X_{m,r}X_{m,r}^{\prime}X_{m,r}^{\prime\prime}BE}^{s}\otimes\\
\rho_{X_{m,r+1}X_{m,r+1}^{\prime}X_{m,r+1}^{\prime\prime}}\otimes\cdots
\otimes\rho_{X_{\left\vert \mathcal{M}\right\vert ,\left\vert \mathcal{R}%
\right\vert }X_{\left\vert \mathcal{M}\right\vert ,\left\vert \mathcal{R}%
\right\vert }^{\prime}X_{\left\vert \mathcal{M}\right\vert ,\left\vert
\mathcal{R}\right\vert }^{\prime\prime}},
\end{multline}
where
\begin{multline}
\rho_{X_{m,r}X_{m,r}^{\prime}X_{m,r}^{\prime\prime}BE}^{s} \coloneqq \\
\sum_{x\in\mathcal{X}}p_{X}(x)|x,x,x\rangle\!\langle x,x,x|_{X_{m,r}%
X_{m,r}^{\prime}X_{m,r}^{\prime\prime}}\otimes\rho_{BE}^{x,s}.
\end{multline}
The resulting state of Bob, for fixed values $m$ and $r$, is as follows:%
\begin{multline}
\rho_{X^{\prime\left\vert \mathcal{M}\right\vert \left\vert \mathcal{R}%
\right\vert }B}^{m,r}\coloneqq \rho_{X_{1,1}^{\prime}}\otimes\cdots\otimes
\rho_{X_{m,r-1}^{\prime}}\otimes\rho_{X_{m,r}^{\prime}B}%
\label{eq:Bob-state-qsdc}\\
\otimes\rho_{X_{m,r+1}^{\prime}}\otimes\cdots\otimes\rho_{X_{\left\vert
\mathcal{M}\right\vert ,\left\vert \mathcal{R}\right\vert }^{\prime}},
\end{multline}
where%
\begin{equation}
\rho_{X_{m,r}^{\prime}B}\coloneqq \sum_{x\in\mathcal{X}}p_{X}(x)|x\rangle\!\langle
x|_{X_{m,r}^{\prime}}\otimes\rho_{B}^{x}.
\end{equation}
By employing the sequential and position-based decoding scheme from
\cite{OMW19} (which has been shown therein to hold for states acting on
separable Hilbert spaces), it follows that Bob can decode both the message $m$
and $r$ with probability not smaller than $1-\varepsilon_{\operatorname{I}}$
as long as%
\begin{equation}
\log_{2}(\left\vert \mathcal{M}\right\vert \left\vert \mathcal{R}\right\vert
)=I_{H}^{\varepsilon_{\operatorname{I}}-\eta_{\operatorname{I}}}(X;B)_{\rho
}-\log_{2}(4\varepsilon_{\operatorname{I}}/\eta_{\operatorname{I}}^{2}),
\end{equation}
where $\eta_{\operatorname{I}}\in(0,\varepsilon_{\operatorname{I}})$ and%
\begin{equation}
\rho_{XB}\coloneqq \sum_{x\in\mathcal{X}}p_{X}(x)|x\rangle\!\langle x|_{X}\otimes
\rho_{B}^{x}.
\end{equation}

To Eve, who is unaware of the random choice of the local randomness variable
$r\in\mathcal{R}$, the state of her systems for a fixed value of the message
$m$ is as follows:%
\begin{multline}
\rho_{X^{\prime\prime\left\vert \mathcal{M}\right\vert \left\vert
\mathcal{R}\right\vert }E}^{m,s}\coloneqq \frac{1}{\left\vert \mathcal{R}\right\vert
}\sum_{r\in\mathcal{R}}\rho_{X_{1,1}^{\prime\prime}}\otimes\cdots\otimes
\rho_{X_{m,r-1}^{\prime\prime}}\otimes\rho_{X_{m,r}^{\prime\prime}E}%
^{s}\label{eq:Eve-state-qsdc}\\
\otimes\rho_{X_{m,r+1}^{\prime\prime}}\otimes\cdots\otimes\rho_{X_{\left\vert
\mathcal{M}\right\vert ,\left\vert \mathcal{R}\right\vert }^{\prime\prime}}%
\end{multline}
By the invariance of the trace distance with respect to tensor-product states,
i.e., $\left\Vert \sigma\otimes\tau-\omega\otimes\tau\right\Vert
_{1}=\left\Vert \sigma-\omega\right\Vert _{1}$, we find that%
\begin{multline}
\frac{1}{2}\left\Vert \rho_{X^{\prime\prime\left\vert \mathcal{M}\right\vert
\left\vert \mathcal{R}\right\vert }E}^{m,s}-\rho_{X^{\prime\prime\left\vert
\mathcal{M}\right\vert \left\vert \mathcal{R}\right\vert }}\otimes
\widetilde{\rho}_{E}^{s}\right\Vert _{1}\\
=\frac{1}{2}\left\Vert \rho_{X_{m,1}^{\prime\prime}\ldots X_{m,\left\vert
\mathcal{R}\right\vert }^{\prime\prime}E}^{m,s}-\rho_{X_{m,1}^{\prime\prime
}\ldots X_{m,\left\vert \mathcal{R}\right\vert }^{\prime\prime}}%
\otimes\widetilde{\rho}_{E}^{s}\right\Vert _{1},
\end{multline}
for any state $\widetilde{\rho}_{E}^{s}$. It follows from the smooth universal
convex-split lemma (see Appendix~\ref{sec:smooth-univ-convex-split}) that if%
\begin{equation}
\log_{2}\left\vert \mathcal{R}\right\vert =\sup_{s\in\mathcal{S}}\widetilde
{I}_{\max}^{{\varepsilon_{\operatorname{II}}}-\eta_{\operatorname{II}}%
}(E;X)_{\rho^{s}}+2\log_{2}(1/2\eta_{\operatorname{II}}),
\end{equation}
then the following bound holds for all $m\in\mathcal{M}$ and for all
$s\in\mathcal{S}$%
\begin{equation}
\frac{1}{2}\left\Vert \rho_{X^{\prime\prime\left\vert \mathcal{M}\right\vert
\left\vert \mathcal{R}\right\vert }E}^{m,s}-\rho_{X^{\prime\prime\left\vert
\mathcal{M}\right\vert \left\vert \mathcal{R}\right\vert }}\otimes
\widetilde{\rho}_{E}^{s}\right\Vert _{1}\leq\varepsilon_{\operatorname{II}},
\end{equation}
where the state $\widetilde{\rho}_{E}^{s}$ satisfies $P(\widetilde{\rho}%
_{E}^{s},\rho_{E}^{s})\leq{\varepsilon_{\operatorname{II}}}%
-\eta_{\operatorname{II}}$ and the alternate smooth max-mutual information
$\widetilde{I}_{\max}^{\sqrt{\varepsilon_{\operatorname{II}}}-\eta
_{\operatorname{II}}}(E;X)_{\rho^{s}}$ is evaluated with respect to the state%
\begin{equation}
\rho_{XE}^{s}\coloneqq \sum_{x\in\mathcal{X}}p_{X}(x)|x\rangle\!\langle x|_{X}%
\otimes\rho_{E}^{x,s}.
\end{equation}
The analysis of this step is similar to that given in \cite{wilde2017position}%
. We provide details of the smooth universal convex-split lemma, as applicable
to states acting on separable Hilbert spaces, in
Appendix~\ref{sec:smooth-univ-convex-split}. Thus, the number of private
message bits that can be established with this scheme is equal to%
\begin{multline}
\log_{2}\left\vert \mathcal{M}\right\vert =I_{H}^{\varepsilon
_{\operatorname{I}}-\eta_{\operatorname{I}}}(X;B)_{\rho}-\sup_{s\in
\mathcal{S}}\widetilde{I}_{\max}^{{\varepsilon_{\operatorname{II}}}%
-\eta_{\operatorname{II}}}(E;X)_{\rho^{s}}\\
-\log_{2}(4\varepsilon_{\operatorname{I}}/\eta_{\operatorname{I}}^{2}%
)-2\log_{2}(1/2\eta_{\operatorname{II}}).
\end{multline}
Since the above number of private message bits is achievable for a fixed
distribution $p_{X}$, it is then possible to optimize over all distributions
to conclude that it is possible to send the following number of private
message bits:%
\begin{multline}
\log_{2}\left\vert \mathcal{M}\right\vert
=\label{eq:one-shot-ach-private-comm}\\
\sup_{p_{X}}\left[  I_{H}^{\varepsilon_{\operatorname{I}}-\eta
_{\operatorname{I}}}(X;B)_{\rho}-\sup_{s\in\mathcal{S}}\widetilde{I}_{\max
}^{{\varepsilon_{\operatorname{II}}}-\eta_{\operatorname{II}}}%
(E;X)_{\rho^{s}}\right] \\
-\log_{2}(4\varepsilon_{\operatorname{I}}/\eta_{\operatorname{I}}^{2}%
)-2\log_{2}(1/2\eta_{\operatorname{II}}),
\end{multline}
concluding the proof.
\end{proof}

\begin{remark}
If desired, this private communication scheme can be derandomized along the
lines shown in \cite{wilde2017position}, in order to end up with a scheme that
does not require public shared randomness. After doing so, the protocol is no
longer guaranteed to be secure against an arbitrary state of the eavesdropper
in the uncertainty set $\mathcal{S}$. The resulting protocol is only
guaranteed to be secure against a fixed, known eavesdropper, similar to the
standard information-theoretic model of private communication from
\cite{D05,CWY04}.
\end{remark}

\section{Secret key distillation from a compound quantum wiretap source with
fixed marginal}

\label{sec:key-dist-compound-source}

\subsection{One-shot setting}

\label{sec:key-dist-compound-source-one-shot}

The setting of secret key distillation from a compound quantum wiretap source
with fixed marginal can be considered the static version of the dynamic
setting presented in Section~\ref{sec:one-shot-qsdc}, and so many aspects are
similar. In this setting, we suppose that Alice has system $X$, Bob system
$B$, and Eve system $E$ of the following classical--quantum--quantum compound
wiretap source state with fixed marginal:%
\begin{equation}
\rho_{XBE}^{s}\coloneqq \sum_{x}p_{X}(x)|x\rangle\!\langle x|_{X}\otimes\rho_{BE}%
^{x,s},\label{eq:cqq-state}%
\end{equation}
where $s\in\mathcal{S}$. In this model, the marginal state of Alice and Bob is
fixed, such that they know their reduced state $\rho_{XB}$. That is,%
\begin{equation}
\rho_{B}^{x,s}=\rho_{B}^{x}\quad\text{for all}\quad s\in\mathcal{S},
\end{equation}
where%
\begin{equation}
\rho_{B}^{x,s}\coloneqq \operatorname{Tr}_{E}[\rho_{BE}^{x,s}],
\end{equation}
but they do not know the full state. We suppose that forward public classical
communication from Alice to Bob is allowed for free. The goal is to use this
compound wiretap source state in \eqref{eq:cqq-state}, along with the free
public classical communication, in order to distill a key that is secure for
all $\rho_{XBE}^{s}$ and $s\in\mathcal{S}$.

A secret key distillation scheme consists of a classical encoding channel
$\mathcal{E}_{X\rightarrow KL}$ and a decoding channel $\mathcal{D}%
_{LB\rightarrow\hat{K}}$. In the above, $K$ is the classical key register of
Alice, and $L$ is a classical register communicated to Bob. Two parameters
$\varepsilon_{\operatorname{I}}$ and $\varepsilon_{\operatorname{II}}$
determine the error probability when decoding and the security of the
protocol, respectively. Alice uses the distillation scheme and the state in
\eqref{eq:cqq-state} to produce a uniformly random key value $k\in\mathcal{K}$
(the probability $\Pr[K=k]=1/\left\vert \mathcal{K}\right\vert $), and the
conditional probability of Bob decoding $k^{\prime}\in\mathcal{K}$ from his
systems is given by%
\begin{equation}
\Pr[\hat{K}=k^{\prime}|K=k]\coloneqq \langle k^{\prime}|_{\hat{K}}\mathcal{D}%
_{LB\rightarrow\hat{K}}(\rho_{KLB}^{k})|k^{\prime}\rangle_{\hat{K}},
\end{equation}
where%
\begin{align}
\rho_{KLB}^{k}  &  \coloneqq \operatorname{Tr}_{E}[\rho_{KLBE}^{k,s}],\\
\rho_{KLBE}^{k,s}  &  \coloneqq \frac{\langle k|_{K}\mathcal{E}_{X\rightarrow KL}%
(\rho_{XBE})|k\rangle_{K}}{\Pr[K=k]},\\
&  =\langle k|_{K}\mathcal{E}_{X\rightarrow KL}(\rho_{XBE})|k\rangle_{K}%
\cdot\left\vert \mathcal{K}\right\vert .
\end{align}
Alternatively, we can consider Bob's positive operator-valued measure
$\left\{  \Lambda_{LB}^{k}\right\}  _{k\in\mathcal{K}}$ to consist of the
following elements%
\begin{equation}
\Lambda_{LB}^{k}\coloneqq (\mathcal{D}_{LB\rightarrow\hat{K}})^{\dag}(|k\rangle\!\langle
k|_{\hat{K}}),
\end{equation}
where $(\mathcal{D}_{LB\rightarrow\hat{K}})^{\dag}$ is the Hilbert--Schmidt
adjoint of the decoding channel $\mathcal{D}_{LB\rightarrow\hat{K}}$, so that
we can write the probability $\Pr[ \hat{K}=k^{\prime}|K=k] $ in terms of the
Born rule as%
\begin{equation}
\Pr[\hat{K}=k^{\prime}|K=k]=\operatorname{Tr}[\Lambda_{LB}^{k^{\prime}}%
\rho_{KLB}^{k}].
\end{equation}
The protocol is $\varepsilon_{\operatorname{I}}$-reliable, with $\varepsilon
_{\operatorname{I}}\in\left[  0,1\right]  $, if the following condition holds
for all $k\in\mathcal{K}$:%
\begin{equation}
\Pr[\hat{K}=k|K=k]\geq1-\varepsilon_{\operatorname{I}}.
\label{eq:reliable-distill}%
\end{equation}
Equivalently, we require that
\begin{equation}
\sup_{k\in\mathcal{K}}\left(  1-\Pr[\hat{K}=k|K=k]\right)  \leq\varepsilon
_{\operatorname{I}}.
\end{equation}
The protocol is $\varepsilon_{\operatorname{II}}$-secret, with $\varepsilon
_{\operatorname{II}}\in\left[  0,1\right]  $, if for all source values
$s\in\mathcal{S}$, there exists fixed state $\sigma_{ER_{E}}^{s}$ of Eve's
systems, such that for all key values $k\in\mathcal{K}$, the following
inequality holds%
\begin{equation}
\frac{1}{2}\left\Vert \rho_{LE}^{k,s}-\sigma_{LE}^{s}\right\Vert _{1}%
\leq\varepsilon_{\operatorname{II}}.
\end{equation}
Equivalently, we require that%
\begin{equation}
\sup_{s\in\mathcal{S}}\inf_{\sigma_{LE}^{s}}\sup_{k\in\mathcal{K}}\frac{1}%
{2}\left\Vert \rho_{LE}^{k,s}-\sigma_{LE}^{s}\right\Vert _{1}\leq
\varepsilon_{\operatorname{II}}. \label{eq:security-distill}%
\end{equation}
Thus, for each $s\in\mathcal{S}$, the protocol allows for distilling a secure
key no matter which state $\rho_{XBE}^{s}$ is selected from~$\mathcal{S}$.

The number of key bits established by the scheme is equal to $\log
_{2}\left\vert \mathcal{K}\right\vert $. Thus, a given protocol for secret key
distillation using the source in \eqref{eq:cqq-state} is described by the
three parameters $\left\vert \mathcal{K}\right\vert $, $\varepsilon
_{\operatorname{I}}$, and $\varepsilon_{\operatorname{II}}$.

We remark here that the definition given above implies the usual
trace-distance based criterion \cite{KRBM07,TLGR12} for security that is commonly
employed in quantum key distribution. Let $K$ and $\hat{K}$ denote the
respective classical key registers of Alice and Bob (we identify both the
random variable and the system label with the same symbol). Then the final
state of the protocol, for fixed $s\in\mathcal{S}$, can be written as follows:%
\begin{equation}
\rho_{K\hat{K}LE}^{s}\coloneqq \sum_{k}\frac{1}{\left\vert \mathcal{K}\right\vert
}|k\rangle\!\langle k|_{K}\otimes\sum_{k^{\prime}}p(k^{\prime}|k)|k^{\prime
}\rangle\!\langle k^{\prime}|_{\hat{K}}\otimes\rho_{LE}^{k,s},
\end{equation}
where $p(k^{\prime}|k)\coloneqq \operatorname{Tr}[\Lambda_{LB}^{k^{\prime}}\rho
_{KLB}^{k}]$, while the ideal state, for the same $s$, is as follows:%
\begin{equation}
\overline{\Phi}_{K\hat{K}}\otimes\sigma_{LE}^{s},
\end{equation}
for some fixed state $\sigma_{LE}^{s}$ and where $\overline{\Phi}_{K\hat{K}%
}\coloneqq \sum_{k}\frac{1}{\left\vert \mathcal{K}\right\vert }|k\rangle\!\langle
k|_{K}\otimes|k\rangle\!\langle k|_{\hat{K}}$. The conditions in
\eqref{eq:reliable-distill} and \eqref{eq:security-distill} and the triangle
inequality for trace distance imply that%
\begin{equation}
\frac{1}{2}\left\Vert \rho_{K\hat{K}LE}^{s}-\overline{\Phi}_{K\hat{K}}%
\otimes\sigma_{LE}^{s}\right\Vert _{1}\leq\varepsilon_{\operatorname{I}%
}+\varepsilon_{\operatorname{II}}.
\end{equation}
Equivalently, we require
\begin{equation}\label{eq:security_trace_dist}
	\sup_{s\in\mathcal{S}}\inf_{\sigma_{LE}^s}\frac{1}{2}\left\Vert\rho_{K\hat{K}LE}^s-\overline{\Phi}_{K\hat{K}}\otimes\sigma_{LE}^s\right\Vert_1\leq\varepsilon_{\operatorname{I}}+\varepsilon_{\operatorname{II}},
\end{equation}
which is the standard trace-distance based criterion considered in the context
of secret key distillation \cite{TLGR12}. As pointed out in \cite{TLGR12,PR14}, however, this criterion is not composable. The criterion that is known to be composable is
\begin{equation}\label{eq:security_trace_dist_composable}
	\sup_{s\in\mathcal{S}}\frac{1}{2}\left\Vert\rho_{K\hat{K}LE}^s-\overline{\Phi}_{K\hat{K}}\otimes\rho_{LE}^s\right\Vert_1\leq\varepsilon,
\end{equation}
where $\rho_{LE}^s\coloneqq\Tr_{K\hat{K}}[\rho_{K\hat{K}LE}^s]$. As shown in \cite[Appendix~B]{PR14}, if the criterion in \eqref{eq:security_trace_dist} is satisfied, then \eqref{eq:security_trace_dist_composable} is satisfied with $\varepsilon=2(\varepsilon_{\operatorname{I}}+\varepsilon_{\operatorname{II}})$.

We end this section by remarking that this setting reduces to the traditional
one-shot, secret key distillation setting from a known source, considered in
quantum information-theoretic contexts \cite{RR12}, in the case
that~$|\mathcal{S}|=1$.

\subsection{Secret key distillation via position-based coding and convex
splitting}

\label{sec:secret-key-dist}

Our main one-shot achievability theorem, corresponding to the key distillation
setting discussed in the previous section, is as follows:

\begin{theorem}
\label{thm:one-shot-key-dist}Fix $\varepsilon_{\operatorname{I}}%
,\varepsilon_{\operatorname{II}}\in(0,1)$, $\eta_{\operatorname{I}}%
\in(0,\varepsilon_{\operatorname{I}})$, and $\eta_{\operatorname{II}}%
\in(0,{\varepsilon_{\operatorname{II}}})$. Then the following quantity is
an achievable number of secret key bits that can be distilled from the
compound wiretap source in \eqref{eq:cqq-state}, with decoding error
probability not larger than $\varepsilon_{\operatorname{I}}$ and security
parameter not larger than $\varepsilon_{\operatorname{II}}$:%
\begin{multline}
I_{H}^{\varepsilon_{\operatorname{I}}-\eta_{\operatorname{I}}}(X;B)_{\rho
}-\sup_{s\in\mathcal{S}}\widetilde{I}_{\max}^{{\varepsilon
_{\operatorname{II}}}-\eta_{\operatorname{II}}}(E;X)_{\rho^{s}}\\
-\log_{2}(4\varepsilon_{\operatorname{I}}/\eta_{\operatorname{I}}^{2}%
)-2\log_{2}(1/2\eta_{\operatorname{II}}),
\end{multline}
where the entropic quantities are evaluated with respect to the state in \eqref{eq:cqq-state}.
\end{theorem}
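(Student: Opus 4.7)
The plan is to closely follow the proof of Theorem~\ref{thm:one-shot-priv-comm}, adapting its position-based coding and convex splitting argument to the static, distillation setting. Viewing the source $\rho_{XBE}^{s}$ as a single use of the cqq channel $x\mapsto\rho_{BE}^{x,s}$ with input distribution fixed to $p_{X}$, the compound-source protocol is obtained from the compound-channel protocol by letting Alice treat her observed sample $x$ as if it had been produced by an encoder at a uniformly random position $(k,r)\in\mathcal{K}\times\mathcal{R}$ of the virtual array used in that proof, with the key register $K$ identified with $k$ and with $r$ playing the role of privacy-amplification randomness.

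Concretely, Alice samples $(k^{\ast},r^{\ast})$ uniformly at random using local randomness, draws $|\mathcal{K}||\mathcal{R}|-1$ dummy values $\tilde{x}_{k,r}\sim p_{X}$ iid for $(k,r)\neq(k^{\ast},r^{\ast})$, sets $x_{k^{\ast},r^{\ast}}=x$, and announces $L=(r^{\ast},\{x_{k,r}\})$ over the free public classical channel. Bob uses the announced values to locally prepare the reference states $\rho_{B}^{x_{k,r^{\ast}}}$ for each $k\in\mathcal{K}$---the fixed-marginal assumption is used crucially here, since $\rho_{B}^{x}$ is known and $s$-independent---and then applies the sequential position-based decoder of~\cite{OMW19} to the tensor product of these $|\mathcal{K}|$ reference systems together with his actual $B$ system, in order to identify the position holding the true source-generated state. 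Reliability with error at most $\varepsilon_{\operatorname{I}}$ then follows from the same analysis that underlies~\eqref{eq:Bob-state-qsdc}, requiring $\log_{2}(|\mathcal{K}||\mathcal{R}|)\leq I_{H}^{\varepsilon_{\operatorname{I}}-\eta_{\operatorname{I}}}(X;B)_{\rho}-\log_{2}(4\varepsilon_{\operatorname{I}}/\eta_{\operatorname{I}}^{2})$.

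For security, I would verify that for each fixed $s\in\mathcal{S}$ and each $k\in\mathcal{K}$, the joint state on Eve's classical randomness register together with her quantum system $E$, averaged over $r^{\ast}$, takes precisely the convex-split form of~\eqref{eq:Eve-state-qsdc}; the smooth universal convex-split lemma proved in Appendix~\ref{sec:smooth-univ-convex-split} then yields $\varepsilon_{\operatorname{II}}$-security provided $\log_{2}|\mathcal{R}|\geq\sup_{s\in\mathcal{S}}\widetilde{I}_{\max}^{\varepsilon_{\operatorname{II}}-\eta_{\operatorname{II}}}(E;X)_{\rho^{s}}+2\log_{2}(1/2\eta_{\operatorname{II}})$. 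Subtracting these two sizing constraints yields the stated lower bound on $\log_{2}|\mathcal{K}|$. The only conceptually subtle point---and the main thing to verify carefully---is that the joint statistics of the published $\{x_{k,r}\}$ together with $E$ in this distillation scheme genuinely coincide with those of the $(X''^{|\mathcal{M}||\mathcal{R}|},E)$ system in the communication proof: the dummy values are iid $p_{X}$ and independent of $E$, while at position $(k^{\ast},r^{\ast})$ the pair $(x,E)$ is distributed as $p_{X}(x)\rho_{E}^{x,s}$, matching exactly the $X''$-marginal-times-$\rho_{E}^{x,s}$ structure exploited in Theorem~\ref{thm:one-shot-priv-comm}. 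Once this equivalence is in place, no technical machinery beyond that already developed for Theorem~\ref{thm:one-shot-priv-comm} is needed, and the $\sup_{p_{X}}$ that appears there is absent here precisely because $p_{X}$ is dictated by the source.
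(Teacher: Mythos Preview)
Your overall strategy---recasting the compound-source problem as a single use of the compound channel with input distribution forced to $p_{X}$, then importing the position-based coding and convex-split machinery from Theorem~\ref{thm:one-shot-priv-comm}---is exactly the paper's approach. However, there is a genuine gap in your protocol description that breaks the security argument.

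You specify that Alice announces $L=(r^{\ast},\{x_{k,r}\})$, i.e., she \emph{publishes} the privacy-amplification index $r^{\ast}$. Once $r^{\ast}$ is part of $L$, it is a classical register that Eve holds, and you are no longer entitled to average over it when computing Eve's state $\rho_{LE}^{k,s}$: that state is now a classical--quantum state indexed by $r^{\ast}$, not the mixture $\tfrac{1}{|\mathcal{R}|}\sum_{r}(\cdots)$ required for the convex-split form of~\eqref{eq:Eve-state-qsdc}. Concretely, conditioned on the published $r^{\ast}$, Eve sees exactly the same position-based structure Bob does (the correlated slot lies at position $(k^{\ast},r^{\ast})$ in a known column $r^{\ast}$), so she can attempt to decode $k^{\ast}$ with her $E$ system just as Bob does with $B$; there is no privacy amplification at all. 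This also explains the internal inconsistency you noted implicitly: you have Bob decoding among only $|\mathcal{K}|$ positions (since he uses column $r^{\ast}$), yet you write the sizing constraint as $\log_{2}(|\mathcal{K}||\mathcal{R}|)\leq I_{H}^{\varepsilon_{\operatorname{I}}-\eta_{\operatorname{I}}}(X;B)_{\rho}-\log_{2}(4\varepsilon_{\operatorname{I}}/\eta_{\operatorname{I}}^{2})$.

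The fix is simple and is what the paper does: set $L=\{x_{k,r}\}$ only, keeping $r^{\ast}$ private to Alice. Bob then runs the sequential position-based decoder over all $|\mathcal{K}||\mathcal{R}|$ positions to recover the pair $(k^{\ast},r^{\ast})$, which is why the constraint on $|\mathcal{K}||\mathcal{R}|$ is the correct one. With $r^{\ast}$ hidden, Eve's state for fixed key value $k$ is genuinely the uniform mixture over $r$, matching~\eqref{eq:Eve-state-qsdc}, and the smooth universal convex-split lemma applies as you describe. Your verification that the dummy-and-source joint statistics coincide with those of the $(X''^{|\mathcal{M}||\mathcal{R}|},E)$ system in Theorem~\ref{thm:one-shot-priv-comm} is then correct and is indeed the only point requiring care.
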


\begin{proof}
The secret key distillation scheme consists of the following steps. Alice,
Bob, and Eve begin with the classical--quantum--quantum state in
\eqref{eq:cqq-state}. Alice picks a value $k\in\mathcal{K}$ uniformly at
random, and she picks a value $r\in\mathcal{R}$ uniformly at random, where
$\mathcal{R}\coloneqq \{1,\ldots,\left\vert \mathcal{R}\right\vert \}$. She then
labels her $X$ system of the state in \eqref{eq:cqq-state} by the pair
$(k,r)$, as $X_{k,r}$. She prepares $\left\vert \mathcal{K}\right\vert
\left\vert \mathcal{R}\right\vert -1$ independent instances of the classical
state%
\begin{equation}
\rho_{X}=\sum_{x\in\mathcal{X}}p_{X}(x)|x\rangle\!\langle x|
\end{equation}
and labels the systems as $X_{1,1}$, \ldots, $X_{k,r-1}$, $X_{k,r+1}$, \ldots,
$X_{\left\vert \mathcal{K}\right\vert \left\vert \mathcal{R}\right\vert }$.
Alice sends the classical registers $X_{1,1}$, \ldots, $X_{\left\vert
\mathcal{K}\right\vert \left\vert \mathcal{R}\right\vert }$ in lexicographic
order over a public classical communication channel, so that Bob and Eve
receive copies of them. For fixed values of $k$, $r$, and $s$, the global
shared state at this point is given by%
\begin{multline}
\rho_{X^{\left\vert \mathcal{K}\right\vert \left\vert \mathcal{R}\right\vert
}X^{\prime\left\vert \mathcal{K}\right\vert \left\vert \mathcal{R}\right\vert
}X^{\prime\prime\left\vert \mathcal{K}\right\vert \left\vert \mathcal{R}%
\right\vert }BE}^{k,r,s}\coloneqq \rho_{X_{1,1}X_{1,1}^{\prime}X_{1,1}^{\prime\prime}%
}\otimes\cdots\\
\otimes\rho_{X_{k,r-1}X_{k,r-1}^{\prime}X_{k,r-1}^{\prime\prime}}\otimes
\rho_{X_{k,r}X_{k,r}^{\prime}X_{k,r}^{\prime\prime}BE}^{s}\\
\otimes\rho_{X_{k,r+1}X_{k,r+1}^{\prime}X_{k,r+1}^{\prime\prime}}\otimes
\cdots\otimes\rho_{X_{\left\vert \mathcal{K}\right\vert ,\left\vert
\mathcal{R}\right\vert }X_{\left\vert \mathcal{K}\right\vert ,\left\vert
\mathcal{R}\right\vert }^{\prime}X_{\left\vert \mathcal{K}\right\vert
,\left\vert \mathcal{R}\right\vert }^{\prime\prime}}.
\end{multline}
Thus, the reduced state of Bob, for fixed $k$ and $r$, is as follows:%
\begin{multline}
\rho_{X^{\prime\left\vert \mathcal{K}\right\vert \left\vert \mathcal{R}%
\right\vert }B}^{k,r}\coloneqq \rho_{X_{1,1}^{\prime}}\otimes\cdots\otimes
\rho_{X_{k,r-1}^{\prime}}\otimes\rho_{X_{k,r}^{\prime}B}\\
\otimes\rho_{X_{k,r+1}^{\prime}}\otimes\cdots\otimes\rho_{X_{\left\vert
\mathcal{K}\right\vert ,\left\vert \mathcal{R}\right\vert }^{\prime}}.
\end{multline}
The reduced state of Eve, for a fixed value of $k$ and $s$, is as follows:%
\begin{multline}
\rho_{X^{\prime\prime\left\vert \mathcal{K}\right\vert \left\vert
\mathcal{R}\right\vert }E}^{k,s}\coloneqq \frac{1}{\left\vert \mathcal{R}\right\vert
}\sum_{r\in\mathcal{R}}\rho_{X_{1,1}^{\prime}}\otimes\cdots\otimes
\rho_{X_{k,r-1}^{\prime}}\otimes\rho_{X_{k,r}^{\prime}E}^{s}\\
\otimes\rho_{X_{k,r+1}^{\prime}}\otimes\cdots\otimes\rho_{X_{\left\vert
\mathcal{K}\right\vert ,\left\vert \mathcal{R}\right\vert }^{\prime}}.
\end{multline}
These reduced states are exactly the same as they are in
\eqref{eq:Bob-state-qsdc} and \eqref{eq:Eve-state-qsdc}, and so the same
analysis applies. Bob decodes both the key value $k$ and the local randomness
value $r$ with success probability not smaller than $1-\varepsilon
_{\operatorname{I}}$, as long as%
\begin{equation}
\log_{2}(\left\vert \mathcal{K}\right\vert \left\vert \mathcal{R}\right\vert
)=I_{H}^{\varepsilon_{\operatorname{I}}-\eta_{\operatorname{I}}}(X;B)_{\rho
}-\log_{2}(4\varepsilon_{\operatorname{I}}/\eta_{\operatorname{I}}^{2}),
\end{equation}
while the following security condition holds for all $k\in\mathcal{K}$ and
$s\in\mathcal{S}$:%
\begin{equation}
\frac{1}{2}\left\Vert \rho_{X^{\prime\prime\left\vert \mathcal{K}\right\vert
\left\vert \mathcal{R}\right\vert }E}^{k,s}-\rho_{X^{\prime\prime\left\vert
\mathcal{K}\right\vert \left\vert \mathcal{R}\right\vert }}\otimes
\widetilde{\rho}_{E}^{s}\right\Vert _{1}\leq\varepsilon_{\operatorname{II}},
\end{equation}
for some state $\widetilde{\rho}_{E}^{s}$, as long as%
\begin{equation}
\log_{2}\left\vert \mathcal{R}\right\vert =\sup_{s\in\mathcal{S}}\widetilde
{I}_{\max}^{{\varepsilon_{\operatorname{II}}}-\eta_{\operatorname{II}}%
}(E;X)_{\rho^{s}}+2\log_{2}(1/2\eta_{\operatorname{II}}).
\end{equation}
Thus, the number of key bits that can be established with this scheme is equal
to%
\begin{multline}
\log_{2}\left\vert \mathcal{K}\right\vert =I_{H}^{\varepsilon
_{\operatorname{I}}-\eta_{\operatorname{I}}}(X;B)_{\rho}-\sup_{s\in
\mathcal{S}}\widetilde{I}_{\max}^{{\varepsilon_{\operatorname{II}}}%
-\eta_{\operatorname{II}}}(E;X)_{\rho^{s}}\\
-\log_{2}(4\varepsilon_{\operatorname{I}}/\eta_{\operatorname{I}}^{2}%
)-2\log_{2}(1/2\eta_{\operatorname{II}}),
\end{multline}
as claimed.
\end{proof}

\section{Second-order asymptotics of private communication and secret key
distillation}

\label{sec:second-order-priv-comm-key-dist}

In this section, we show how to apply the one-shot results from
Sections~\ref{sec:priv-comm-comp-wiretap} and
\ref{sec:key-dist-compound-source} to the case of independent and identically
distributed (i.i.d.) resources. First, let us suppose that Alice, Bob, and Eve
are connected by means of a cqq compound wiretap channel of the form
in~\eqref{eq:cqq-ch}:%
\begin{equation}
\mathcal{N}_{X\rightarrow BE}^{s}:x\rightarrow\rho_{BE}^{x,s},
\end{equation}
with all the same assumptions discussed previously in
Section~\ref{sec:one-shot-qsdc}. However, now we allow them to use the channel
multiple times, and we suppose that the particular value of $s$ is fixed but
unknown for all channel uses. Thus, the resource they are employing for
private communication is the tensor-power channel $(\mathcal{N}_{X\rightarrow
BE}^{s})^{\otimes n}$, where $n$ is a large positive integer. This setting is
directly related to a collective attack in quantum key distribution, as
discussed in Section~\ref{sec:app-QKD}.

By applying the result of Theorem~\ref{thm:one-shot-priv-comm} to this
setting, invoking Lemma~\ref{lem:smooth-max-MI-relations} and the second-order
asymptotic expansions discussed in \eqref{eq:hypo-test-rel-ent-2nd-order} and
Corollary~\ref{cor-2nd-order-smooth-dmax}, with $\eta_{\operatorname{I}}%
=\eta_{\operatorname{II}}=1/\sqrt{n}$, we find that it is possible to send
private message bits at the following rate, for sufficiently large$~n$, with
decoding error probability not larger than $\varepsilon_{\operatorname{I}}$
and security parameter not larger than $\varepsilon_{\operatorname{II}}$:%
\begin{multline}
\frac{\log_{2}\left\vert \mathcal{M}\right\vert }{n}=\sup_{p_{X}}I(X;B)_{\rho
}+\sqrt{\frac{1}{n}V(X;B)_{\rho}}\Phi^{-1}(\varepsilon_{\operatorname{I}%
})\label{eq:ach-rate-second-order}\\
-\sup_{s\in\mathcal{S}}\left[  I(X;E)_{\rho^{s}}-\sqrt{\frac{1}{n}%
V(X;E)_{\rho^{s}}}\Phi^{-1}({\varepsilon_{\operatorname{II}}^2})\right] \\
+O\!\left(  \frac{\log n}{n}\right)  .
\end{multline}
In the above, the information quantities are evaluated with respect to the
following classical--quantum state:%
\begin{equation}
\rho_{XBE}^{s}\coloneqq \sum_{x}p_{X}(x)|x\rangle\!\langle x|_{X}\otimes\rho_{BE}^{x,s},
\end{equation}
and%
\begin{align}
I(X;B)_{\rho}  &  \coloneqq \sum_{x}p_{X}(x)D(\rho_{B}^{x}\Vert\rho_{B}),\\
I(X;E)_{\rho^{s}}  &  \coloneqq \sum_{x}p_{X}(x)D(\rho_{E}^{x,s}\Vert\rho_{E}^{s})\\
\rho_{B}  &  \coloneqq \sum_{x}p_{X}(x)\rho_{B}^{x},\\
\rho_{E}^{s}  &  \coloneqq \sum_{x}p_{X}(x)\rho_{E}^{x,s}.
\end{align}
The Holevo information variances $V(X;B)_{\rho}$ and $V(X;E)_{\rho^{s}}$ are
given by%
\begin{multline}
V(X;B)_{\rho}\coloneqq \\
\sum_{x\in\mathcal{X}}p_{X}(x)\left[  V(\rho_{B}^{x}\Vert\rho_{B})+\left[
D(\rho_{B}^{x}\Vert\rho_{B})\right]  ^{2}\right]  -\left[  I(X;B)_{\rho
}\right]  ^{2},
\end{multline}%
\begin{multline}
V(X;E)_{\rho^{s}}\coloneqq \\
\sum_{x\in\mathcal{X}}p_{X}(x)\left[  V(\rho_{E}^{x,s}\Vert\rho_{E}%
^{s})+\left[  D(\rho_{E}^{x,s}\Vert\rho_{E}^{s})\right]  ^{2}\right]  -\left[
I(X;E)_{\rho^{s}}\right]  ^{2},
\end{multline}
with these formulas considered in more detail in
Appendix~\ref{app:mut-info-var-and-T}. The term $O\!\left(  \frac{\log n}%
{n}\right)  $ hides constants involving $\varepsilon_{\operatorname{I}}$,
$\varepsilon_{\operatorname{II}}$, $T(X;B)_{\rho}$, and $T(X;E)_{\rho^{s}}$,
with the latter two quantities defined in
Appendix~\ref{app:mut-info-var-and-T}. By noting that all of the quantities
$I(X;B)_{\rho}$, $I(X;E)_{\rho^{s}}$, $V(X;B)_{\rho}$, $V(X;E)_{\rho^{s}}$,
$T(X;B)_{\rho}$, and $T(X;E)_{\rho^{s}}$ involve an expectation with respect
to the distribution $p_{X}$, it follows from an approximation argument that
the same formula in \eqref{eq:ach-rate-second-order} is an achievable rate for
private communication when $p_{X}$ is a probability distribution over a
continuous alphabet. In this case, all of the expectations for the various
quantities are evaluated by integration.

In the case that the states involved in the above formulas act on separable
Hilbert spaces, then we should define the various quantities in more detail
because the formulas in \eqref{eq:rel-ent-q} and
\eqref{eq:rel-ent-q-var}\ need to be interpreted in a particular way (that is,
the trace cannot be taken with respect to an arbitrary orthonormal basis).
Suppose that $\rho$ and $\sigma$ have the following spectral decompositions:%
\begin{equation}
\rho=\sum_{x}\lambda_{x}|\psi_{x}\rangle\!\langle\psi_{x}|,\quad\sigma=\sum
_{y}\mu_{y}|\phi_{y}\rangle\!\langle\phi_{y}|.
\label{eq:rho-sig-spectral-decomps}%
\end{equation}
Then the quantum relative entropy \cite{F70,Lindblad1973} and the relative
entropy variance \cite{TH12,li12,KW17a} are defined as%
\begin{align}
D(\rho\Vert\sigma)  &  \coloneqq \sum_{x,y}\left\vert \langle\phi_{y}|\psi_{x}%
\rangle\right\vert ^{2}\lambda_{x}\log_{2}\!\left(  \frac{\lambda_{x}}{\mu
_{y}}\right)  ,\label{eq:rel-ent-sep}\\
V(\rho\Vert\sigma)  &  \coloneqq \sum_{x,y}\left\vert \langle\phi_{y}|\psi_{x}%
\rangle\right\vert ^{2}\lambda_{x}\left[  \log_{2}\!\left(  \frac{\lambda_{x}%
}{\mu_{y}}\right)  -D(\rho\Vert\sigma)\right]  ^{2}.
\label{eq:rel-ent-var-sep}%
\end{align}

We can also consider the application to secret key distillation. Suppose now
that Alice, Bob, and Eve share $n$~copies of the following cqq state:%
\begin{equation}
\rho_{XBE}^{s}\coloneqq \sum_{x}p_{X}(x)|x\rangle\!\langle x|_{X}\otimes\rho_{BE}^{x,s},
\label{eq:cqq-state-iid}%
\end{equation}
with all of the same assumptions discussed previously in
Section~\ref{sec:secret-key-dist}. Then by the same reasoning as given above,
but this time applying Theorem~\ref{thm:one-shot-key-dist}, the following
formula represents an achievable rate for secret key distillation, for
sufficiently large $n$ and with decoding error probability not larger than
$\varepsilon_{\operatorname{I}}$ and security parameter not larger than
$\varepsilon_{\operatorname{II}}$:%
\begin{multline}
\frac{\log_{2}\left\vert \mathcal{K}\right\vert }{n}=I(X;B)_{\rho}+\sqrt
{\frac{1}{n}V(X;B)_{\rho}}\Phi^{-1}(\varepsilon_{\operatorname{I}%
})\label{eq:second-order-key-rate}\\
-\sup_{s\in\mathcal{S}}\left[  I(X;E)_{\rho^{s}}-\sqrt{\frac{1}{n}%
V(X;E)_{\rho^{s}}}\Phi^{-1}({\varepsilon_{\operatorname{II}}^2})\right] \\
+O\!\left(  \frac{\log n}{n}\right)  ,
\end{multline}
where all information quantities are evaluated with respect to the state in \eqref{eq:cqq-state-iid}.

\section{Discussion of collective attacks in QKD and the compound wiretap
channel and source}

\label{sec:col-att-QKD-comp-wiretap}As mentioned in the introduction, from
what we can gather by combing through the literature, it seems that there is a
disconnect between the community of researchers working on security proofs
against collective attacks in quantum key distribution and those working on
key distillation from a compound wiretap source. Thus, one contribution of
this paper is to connect these two research directions, with
Theorem~\ref{thm:one-shot-key-dist} and its applications in
Sections~\ref{sec:app-QKD} and \ref{sec:second-order-priv-comm-key-dist}%
\ providing a direct link between them.

The main purpose of this section is to trace some of the historical
developments and early roots of these disparate communities. This might help
with unifying them going forward. To begin with, let us note that there are
several reviews of quantum key distribution in which collective attacks are
discussed \cite{SBCDLP09,Lut14,DLQY16,XMZLP19}. Furthermore, there is a recent
review of the compound wiretap channel that traces its development in
classical information theory \cite{SBP15}.

The compound channel for classical communication was introduced in
\cite{blackwell1959,Wolfowitz1959}. The basic idea here is that the actual
communication channel is chosen from an uncertainty set $\{p_{Y|X}^{s}\}_{s}$,
it is the same over a large blocklength (used in an i.i.d.~way), and the goal
is to be able to communicate regardless of which channel is selected. The main
result of \cite{blackwell1959,Wolfowitz1959} is that Shannon's formula for
capacity becomes modified to%
\begin{equation}
\max_{p_{X}}\min_{s}I(X;Y)_{p^{s}},
\end{equation}
where $p_{XY}^{s}=p_{X}p_{Y|X}^{s}$.

The classical wiretap channel $p_{YZ|X}$ and its private capacity were
introduced in \cite{W75}\ and studied further by \cite{CK78,Csi96,MW00}. In
this model, the sender inputs the random variable $X$, the legitimate receiver
obtains $Y$, and the wiretapper (eavesdropper) $Z$. The conditional
probability distribution $p_{YZ|X}$ is known to all parties involved. The
culmination of these papers was to identify the formula%
\begin{equation}
\max_{U\rightarrow X\rightarrow YZ}\left[ I(U;Y)-I(U;Z)\right]
\label{eq:classical-wiretap-formula}%
\end{equation}
as the private capacity of an arbitrary classical wiretap channel. The
optimization is over Markov chains $U\rightarrow X\rightarrow YZ$,
where $U$ is known as an auxiliary random variable. Interestingly, this
formula provides the insight that noise at the encoder (the channel from $X$
to $U$) can increase private capacity, with the reasoning being that, even
though it decreases both informations $I(U;Y)$ and $I(U;Z)$, it can happen
that it decreases the wiretapper's information $I(U;Z)$ by more, so that there
can be a noise benefit. The main reason why the classical wiretap channel
model has not been embraced by the classical cryptography community is that
the model assumes that the channel to the eavesdropper is known, and this is
too much to assume in practice. Interestingly, it would be some years before
the compound wiretap channel and source were defined.

The source model for secret key agreement was introduced by \cite{AC93}. In
this model, two legitimate parties have access to respective random variables
$X$ and $Y$ and an eavesdropper has access to a random variable $Z$ described
by the joint probability distribution $p_{XYZ}$, and the legitimate parties
are allowed public classical communication. All parties know the distribution
$p_{XYZ}$. If public classical communication is allowed only in one round from
$X$ to $Y$, then the secret key agreement capacity is given by the formula
\cite{AC93}%
\begin{equation}
\max_{p_{UV|X}}\left[  I(V;Y|U)-I(V;Z|U)\right]  .
\end{equation}
In independent work, it was shown that public communication can enhance the
secret key agreement capacity of a wiretap channel \cite{M93}, and an upper
bound on the secret key agreement capacity was established in the case that
arbitrary public classical communication is allowed.

The notion of a collective attack in quantum key distribution was proposed by
\cite{BM97,Biham2002}. Since then, it has been studied intensively in quantum
key distribution \cite{SBCDLP09,Lut14,DLQY16,XMZLP19}. Quantum de Finetti
theorems and their variants were proved \cite{Renner2005,CKR09}, which
demonstrate that general coherent (arbitrary) attacks are no better than
collective attacks in the limit of large blocklength. Quantum de Finetti
theorems thus establish the significance of focusing on collective attacks in
the context of quantum key distribution.

The quantum wiretap channel was proposed and studied in \cite{D05,CWY04}. In
this model, the wiretap channel is given by $x\rightarrow\rho_{BE}^{x}$, with
the sender having access to the input, the legitimate receiver to the quantum
system $B$, and the eavesdropper to the quantum system $E$. It is assumed that
the full channel is known to both the sender and legitimate receiver. An
important result from \cite{D05,CWY04} is that
\begin{equation}
\max_{U\rightarrow X} \left[  I(U;B)-I(U;E) \right]
\end{equation}
is an achievable rate at which they can communicate privately, paralleling the
formula in \eqref{eq:classical-wiretap-formula} for the classical case. It is
not known whether this achievable rate is optimal in general. That is, it is
an open problem to determine the private capacity of the classical--quantum
wiretap channel in the general case.

The quantum wiretap source was proposed and studied in \cite{Devetak2005}. The
model is that Alice, Bob, and Eve share a state of the form $\sum
_{x}p(x)|x\rangle\!\langle x|_{X}\otimes\rho_{BE}^{x}$, and the goal is to use
many copies of this state along with public classical communication in order
to distill a secret key. The authors of \cite{Devetak2005} found that the rate%
\begin{equation}
I(X;B)-I(X;E) \label{eq:DW-formula-hist}%
\end{equation}
is achievable for key distillation using many copies of the aforementioned
state. It is important to stress that, in this model, it is assumed that Alice
and Bob have full knowledge of the state, including the state of the
eavesdropper. For this reason, it is not justified to apply the formula in
\eqref{eq:DW-formula-hist}\ generally when analyzing the security of quantum
key distribution against collective attacks. In the special case that the
quantum key distribution protocol involves preparations and measurements that
are tomographically complete, the legitimate parties determine the state of
the eavesdropper up to an information-theoretically irrelevant isometry, and
it is then justified to apply the formula in \eqref{eq:DW-formula-hist} for
security against collective attacks.

The classical compound wiretap channel was proposed and studied in
\cite{Liang2009} and studied further in \cite{Bjelakovic2013} (see also
\cite{SBP15} for a review). The model is that the actual wiretap channel is
chosen from an uncertainty set $\{p_{YZ|X}^{s}\}_{s}$, it is the same over a
large blocklength (used in an i.i.d.~way), and the goal is to be able to
communicate privately regardless of which channel is selected. One critical
result of \cite{Liang2009} is that the following formula is an achievable rate
for private communication in this setting, when the uncertainty set is finite:%
\begin{equation}
\max_{U\rightarrow X\rightarrow Y_{s}Z_{s}}\left[  \min_{s}%
I(U;Y_{s})-\max_{s}I(U;Z_{s})\right]  .
\end{equation}
The classical compound wiretap channel model is more acceptable from a
cryptographic perspective than is the standard wiretap channel model, because
it allows for uncertainty in the channel to the wiretapper. However, it is
still  difficult to argue that anything about the channel to the wiretapper
would be known in a fully classical context.

The classical compound wiretap source was considered in \cite{BW13} and
further studied in \cite{TBS17}. The special case of a classical compound
source with one fixed marginal was considered previously in \cite{Bl10}. The
model from \cite{BW13,TBS17} is that two legitimate parties and an
eavesdropper share a source chosen from an uncertainty set $\left\{
p_{XYZ}^{s}\right\}  _{s}$, and they are allowed public classical
communication to assist in the task of distilling a secret key. In
\cite{TBS17}, a formula for the achievable distillable secret key was given,
and we refer to \cite{TBS17} for the details. In the special case that the
marginals $X$ and $Y$ are fixed, the results of \cite{Bl10,TBS17} imply that
the rate%
\begin{equation}
I(X;Y)-\max_{s}I(X;Z)\
\end{equation}
is achievable for key distillation.

The compound quantum wiretap source was proposed and studied in \cite{BJ16}.
The model studied there is that the legitimate parties and the eavesdropper
share a state $\rho_{XBE}^{s}=\sum_{x}p^{s}(x)|x\rangle\!\langle x|_{X}%
\otimes\rho_{BE}^{x,s}$ selected from an uncertainty set $\left\{  \rho
_{XBE}^{s}\right\}  _{s}$, and the goal is to use one round of forward public
classical communication from $X$ to $B$ in order to distill a secret key. In
\cite{BJ16}, a formula for the achievable distillable secret key was given,
and we refer to \cite{BJ16} for the details. In the special case that the
marginals $X$ and $B$ are fixed, the results of \cite{BJ16} imply that the
rate%
\begin{equation}
I(X;B)-\max_{s}I(X;E)\label{eq:distillable-key-compound-Boche}%
\end{equation}
is achievable. An even further special case is when the $B$ system is
classical. Thus, the results of \cite{BJ16} can be used to analyze the
security of quantum key distribution against collective attacks. This model is
acceptable from a cryptographic setting because one can determine the
uncertainty set for the state of the eavesdropper during the parameter
estimation round of a quantum key distribution protocol. This is possible
under the assumption that the eavesdropper is constrained by the laws of
quantum mechanics and due to the structure of quantum mechanics. One of the
contributions of our paper is to extend these results (with fixed marginal on
$X$ and $B$) to a second-order coding rate and have the formula apply to
infinite-dimensional systems of interest in continuous-variable quantum key distribution.

The compound quantum wiretap channel was presented in \cite{DH10,BCCD14}. In
\cite{DH10}, the model is that the wiretap channel is selected from an
uncertainty set $x\rightarrow\rho_{BE}^{x,s}$, indexed by $s$, the channel is
the same over the whole blocklength (used in an i.i.d.~way), and the goal is
to communicate privately to the legitimate receiver $B$ regardless of which
channel is selected from the set. It is unclear whether the claims of
\cite{DH10} hold as generally as stated therein, but in the case that public
shared randomness is available to the legitimate parties, then the claims of
\cite{DH10} appear to hold. The main issue with the claims of \cite{DH10} is
that the code is derandomized, and when this is done, it is no longer the case
that the code is secure against all of the possible states of the adversaries,
as claimed therein. One would need to apply a union bound argument, as is done
in \cite[Section~IV]{BCCD14}. However, this issue does not arise if the code
is not derandomized (in the case that the legitimate parties have public
shared randomness). The coding scheme of \cite{DH10} is universal in the sense
that all that is required for communication is a lower bound on the Holevo
information with the legitimate receiver and an upper bound on the Holevo
information with the wiretapper. The paper \cite{BCCD14} considered the same
model, but with the uncertainty set being known and finite. They found that
the following rate is achievable for private communication in this setting:%
\begin{equation}
\ \max_{U\rightarrow X\rightarrow B_{s}E_{s}}\left[  \min_{s}%
I(U;B_{s})-\max_{s}I(U;E_{s})\right]  ,
\end{equation}
representing a quantum generalization of the main result of~\cite{Liang2009}.

Going forward from here, it would be interesting to develop the compound
wiretap channel and source in more detail, in order to extend the scope of the
results and given the applicability to key distillation in quantum key
distribution. This is one contribution of the present paper, since we have
established second-order coding rates for compound wiretap channels and
sources for infinite-dimensional quantum systems when there is a fixed
marginal. Furthermore, the quantum key distribution application is well
motivated from a cryptographic perspective, given that it is possible to
constrain the possible states (uncertainty set) of an eavesdropper who
operates according to the laws of quantum mechanics. This is a critical
difference with the classical compound wiretap channel, in which it is not
possible to do so.

\subsection{Remarks on the Devetak--Winter formula and security against
collective attacks}

We now comment briefly on the use of the Devetak--Winter argument in the
context of collective attacks in quantum key distribution. The formula in
\eqref{eq:distillable-key-compound-Boche} has been consistently employed by
the quantum key distribution community as an achievable rate for distillable
key against collective attacks. This formula is indeed correct. However, as
discussed above, the argument of \cite{Devetak2005} did not justify this
formula. Instead, it argued for the achievability of
\eqref{eq:distillable-key-compound-Boche} when there is a known collective
attack, which is applicable in the case that the preparation and measurement
procedure in a QKD protocol is tomographically complete. Specifically, one can
analyze the proof of \cite[Theorem~2.1]{Devetak2005} to see that the key
distillation scheme constructed ends up depending on the state shared by all
three parties and security is only guaranteed in this case. From what we can
tell, the formula in \eqref{eq:distillable-key-compound-Boche} was first
proven by \cite{RGK05}\ for the qubit case and then in \cite[Corollary~6.5.2]%
{Renner2005} for general finite-dimensional states. The latter case has also
been considered in \cite{BJ16}. For infinite-dimensional states, the present
paper has established a proof that the formula in
\eqref{eq:distillable-key-compound-Boche} is achievable.

\section{Other contributions}

\label{sec:other-contribs}In this section, we briefly list some other
consequences of our work. First, the task of entanglement-assisted private
communication over a broadcast channel was considered recently in
\cite{Qi_2018}. The technique behind the proof of
Theorem~\ref{thm:one-shot-priv-comm}\ applies to this setting, allowing for
protection against a quantum wiretap channel with fixed marginal to the
decoding set. The result also applies to the infinite-dimensional case
(separable Hilbert spaces).

We can also combine the techniques from Theorem~\ref{thm:one-shot-priv-comm}
with those in \cite[Theorem~1]{AJW19} and \cite[Corollary~2]{Sen18}\ to find
achievable rates for one-shot private communication over a compound wiretap
channel with a finite uncertainty set, when the sender and receiver possess
public shared randomness. These results can then be extended to second-order
coding rates, and they also hold for the infinite-dimensional case (separable
Hilbert spaces).

Our results lead to achievable second-order coding rates for classical
communication over single-mode bosonic channels, when assisted by randomness,
and thus extend the findings of \cite{WRG16,OMW19}. Our results also lead to
achievable second-order coding rates for private classical communication over
the same channels, when assisted by public shared randomness, thus
generalizing the findings of \cite{wilde2017position}.

\section{Conclusion}

\label{sec:conclusion}

In this paper, we have provided a second-order analysis of quantum key distribution as a bridge between the asymptotic and non-asymptotic regimes. The technical contributions
that allowed for this advance are a coding theorem for one-shot key
distillation from a compound quantum wiretap source with fixed marginal, as
well as the establishment of the second-order asymptotics for the smooth
max-relative entropy. Another contribution is a coding theorem for private
communication over a compound quantum wiretap channel with fixed marginal. We
also showed how to optimize the second-order coding rate for several exemplary
QKD\ protocols, including six-state, BB84, and continuous-variable QKD. In
Section~\ref{sec:other-contribs}, we briefly mentioned several other immediate
applications of our technical results.

Going forward from here, an important open problem is to provide a full
second-order analysis of both the parameter estimation and key distillation
steps of a quantum key distribution protocol. At the moment, we have
exclusively analyzed second-order coding rates for the key distillation step,
under the assumption that the parameter estimation step provides reliable
estimates. It seems plausible to incorporate the latest developments from
\cite{Hay19}, in combination with the methods of this paper, in order to have
a complete second-order analysis of both parameter estimation and key distillation.

Another practical concern is to reduce the amount of public shared randomness
that the protocol from Theorem~\ref{thm:one-shot-priv-comm}\ employs.
Likewise, it would be ideal to reduce the amount of public classical
communication used by the protocol from Theorem~\ref{thm:one-shot-key-dist}.
Even though these resources are typically considered essentially free in a
private communication setting, it would still be ideal to minimize their
consumption. We note here that this large usage of a free resource is typical
of the methods of position-based coding \cite{AJW17}\ and convex splitting
\cite{ADJ17}, which seems to be the cost for obtaining such simple formulas in
the one-shot case. Recent work in other domains has shown how to reduce the
amount of free resource that these kinds of protocols consume \cite{Ans18}.

We think it would also be interesting to consider secret key distillation in
the setting of private communication per unit cost from \cite{DPW19}. It seems
likely that\ the key distillation protocol given in
Theorem~\ref{thm:one-shot-key-dist}\ could be helpful for this task.

We acknowledge discussions with Nilanjana Datta, Anthony Leverrier, Stefano Mancini, and Samad Khabbazi Oskouei. MMW, SK,
and EK acknowledge support from the National Science Foundation under grant
no.~1714215. SG acknowledges support from the National Science Foundation
(NSF) RAISE-EQuIP: \textquotedblleft Engineering Quantum Integrated Platforms
for Quantum Communication\textquotedblright\ Program, under University of
Arizona Grant Number 1842559.

\bibliographystyle{alpha}
\bibliography{Ref}

\appendix

\section{Mutual and Holevo informations and variances for six-state and BB84
protocols}

\label{app:mut-hol-info-6-bb84}

When Alice and Bob discard their basis information in the six-state or BB84
protocols, then their classical data can be described via the following
quantum channel:
\begin{equation}
\mathcal{N}_{A\rightarrow B}(\rho_{A})=p_{1}\rho_{A}+p_{2}Z\rho_{A}%
Z+p_{3}X\rho_{A}X+p_{4}Y\rho Y, \label{eq:pauli-channel-app}%
\end{equation}
where $p_{1}, p_{2}, p_{3}, p_{4} \geq0$, $p_{1}+p_{2}+p_{3}+p_{4}=1$, and
there are further constraints on $p_{1}, p_{2}, p_{3}, p_{4}$.

For the six-state protocol,
\begin{align}
p_{1} &  =1-\frac{3Q}{2},\label{eq-app:six-state-pauli-1}\\
p_{2} &  =p_{3}=p_{4}=\frac{Q}{2},\label{eq-app:six-state-pauli-2}\\
Q &  =\frac{1}{3}(Q_{x}+Q_{y}+Q_{z}).\label{eq-app:six-state-pauli-3}%
\end{align}
For these values of $p_{1},p_{2},p_{3},p_{4}$, we obtain
\begin{align}
p_{XY}(0,0) &  \coloneqq\frac{1}{2}\operatorname{Tr}[|0\rangle\!\langle
0|_{B}\mathcal{N}_{A\rightarrow B}(|0\rangle\!\langle0|_{A})]\\
&  =\frac{1}{2}(1-Q),\\
p_{XY}(0,1) &  \coloneqq\frac{1}{2}\operatorname{Tr}[|1\rangle\!\langle
1|_{B}\mathcal{N}_{A\rightarrow B}(|0\rangle\!\langle0|_{A})]\\
&  =\frac{1}{2}Q,\\
p_{XY}(1,0) &  \coloneqq\frac{1}{2}\operatorname{Tr}[|0\rangle\!\langle
0|_{B}\mathcal{N}_{A\rightarrow B}(|1\rangle\!\langle1|_{A})]\\
&  =\frac{1}{2}Q,\\
p_{XY}(1,1) &  \coloneqq\frac{1}{2}\operatorname{Tr}[|1\rangle\!\langle
1|_{B}\mathcal{N}_{A\rightarrow B}(|1\rangle\!\langle1|_{A})]\\
&  =\frac{1}{2}(1-Q),
\end{align}
which is consistent with the probability distribution $p_{X_{2}Y_{2}%
}^{\text{6-state}|\text{sift}}$ in \eqref{eq-6state_pXY_discard_1}--\eqref{eq-6state_pXY_discard_4}.

For the BB84 protocol,
\begin{align}
p_{1} &  =1-2Q+s,\label{eq-app:BB84-pauli-1}\\
p_{2} &  =Q-s,\\
p_{3} &  =Q-s,\\
p_{4} &  =s,\\
Q &  =\frac{1}{2}(Q_{x}+Q_{z}),\\
s &  =Q-\frac{Q_{y}}{2},\label{eq-app:BB84-pauli-last}%
\end{align}
where $s\in\lbrack0,Q]$ is a parameter to be optimized. Indeed, for these
values of $p_{1},p_{2},p_{3},p_{4}$,
\begin{align}
p_{XY}(0,0) &  \coloneqq\frac{1}{2}\operatorname{Tr}[|0\rangle\!\langle
0|_{B}\mathcal{N}_{A\rightarrow B}(|0\rangle\!\langle0|_{A})]\\
&  =\frac{1}{2}(1-Q),\\
p_{XY}(0,1) &  \coloneqq\frac{1}{2}\operatorname{Tr}[|1\rangle\!\langle
1|_{B}\mathcal{N}_{A\rightarrow B}(|0\rangle\!\langle0|_{A})]\\
&  =\frac{1}{2}Q,\\
p_{XY}(1,0) &  \coloneqq\frac{1}{2}\operatorname{Tr}[|0\rangle\!\langle
0|_{B}\mathcal{N}_{A\rightarrow B}(|1\rangle\!\langle1|_{A})]\\
&  =\frac{1}{2}Q,\\
p_{XY}(1,1) &  \coloneqq\frac{1}{2}\operatorname{Tr}[|1\rangle\!\langle
1|_{B}\mathcal{N}_{A\rightarrow B}(|1\rangle\!\langle1|_{A})]\\
&  =\frac{1}{2}(1-Q),
\end{align}
which is consistent with the probability distribution $p_{X_{2}Y_{2}%
}^{\text{BB84}|\text{sift}}$ in \eqref{eq-BB84_pXY_discard_1}--\eqref{eq-BB84_pXY_discard_4}.

In general,
\begin{align}
p_{XY}(0,0) &  \coloneqq\frac{1}{2}\operatorname{Tr}[|0\rangle\!\langle
0|_{B}\mathcal{N}_{A\rightarrow B}(|0\rangle\!\langle0|_{A})]\\
&  =\frac{1}{2}(p_{1}+p_{2}),\\
p_{XY}(0,1) &  \coloneqq\frac{1}{2}\operatorname{Tr}[|1\rangle\!\langle
1|_{B}\mathcal{N}_{A\rightarrow B}(|0\rangle\!\langle0|_{A})]\\
&  =\frac{1}{2}(p_{3}+p_{4}),\\
p_{XY}(1,0) &  \coloneqq\frac{1}{2}\operatorname{Tr}[|0\rangle\!\langle
0|_{B}\mathcal{N}_{A\rightarrow B}(|1\rangle\!\langle1|_{A})]\\
&  =\frac{1}{2}(p_{3}+p_{4}),\\
p_{XY}(1,1) &  \coloneqq\frac{1}{2}\operatorname{Tr}[|1\rangle\!\langle
1|_{B}\mathcal{N}_{A\rightarrow B}(|1\rangle\!\langle1|_{A})]\\
&  =\frac{1}{2}(p_{1}+p_{2}).
\end{align}
Then,
\begin{multline}
\rho_{XY}=\frac{1}{2}(p_{1}+p_{2})|0,0\rangle\!\langle0,0|+\frac{1}{2}%
(p_{3}+p_{4})|0,1\rangle\!\langle0,1|\\
+\frac{1}{2}(p_{3}+p_{4})|1,0\rangle\!\langle1,0|\\
+\frac{1}{2}(p_{1}+p_{2})|1,1\rangle\!\langle1,1|.
\end{multline}
It is then straightforward to show that
\begin{align}
I(X;Y)_{\rho} &  =1+(p_{1}+p_{2})\log_{2}(p_{1}+p_{2})\nonumber\\
&  \qquad+(p_{3}+p_{4})\log_{2}(p_{3}+p_{4})\\
&  =1-h_{2}(p_{1}+p_{2}),
\end{align}
where $h_{2}$ is the binary entropy function.

Let us now calculate $V(X;Y)_{\rho}$. We have
\begin{equation}
V(\rho\Vert\sigma)\coloneqq\operatorname{Tr}[\rho(\log_{2}\rho-\log_{2}%
\sigma)^{2}]-D(\rho\Vert\sigma)^{2},
\end{equation}
and%
\begin{align}
V(X;Y)_{\rho}  &  \coloneqq V(\rho_{XY}\Vert\rho_{X}\otimes\rho_{Y})\\
&  =\operatorname{Tr}[\rho_{XY}(\log_{2}\rho_{XY}-\log_{2}(\rho_{X}\otimes
\rho_{Y}))^{2}]\nonumber\\
&  \qquad-I(X;Y)_{\rho}^{2}.
\end{align}
One can show that
\begin{multline}
\operatorname{Tr}[\rho_{XY}(\log_{2}\rho_{XY}-\log_{2}(\rho_{X}\otimes\rho
_{Y}))^{2}]\\
=(p_{1}+p_{2})\left(  1+\log_{2}(p_{1}+p_{2})\right)  ^{2}\\
+(p_{3}+p_{4})\left(  1+\log_{2}(p_{3}+p_{4})\right)  ^{2},
\end{multline}
from which it follows that
\begin{equation}
V(X;Y)_{\rho}=(p_{1}+p_{2})(p_{3}+p_{4})\left(  \log_{2}\!\left(  \frac
{p_{1}+p_{2}}{p_{3}+p_{4}}\right)  \right)  ^{2}.
\end{equation}

We now calculate $I(X;E)$ and $V(X;E)$. We consider the following purification
of the Choi state of the channel in \eqref{eq:pauli-channel-app}:
\begin{multline}
|\psi\rangle_{ABE}\coloneqq\sqrt{p_{1}}|\Phi^{+}\rangle_{AB}|0,0\rangle
_{E}+\sqrt{p_{2}}|\Phi^{-}\rangle_{AB}|1,1\rangle_{E}\\
+\sqrt{p_{3}}|\Psi^{+}\rangle_{AB}|0,1\rangle_{E}+\sqrt{p_{4}}|\Psi^{-}%
\rangle_{AB}|1,0\rangle_{E}.
\end{multline}
Now,
\begin{align}
&  (\langle0|_{A}\otimes I_{BE})|\psi\rangle_{ABE}\nonumber\\
&  =|0\rangle_{B}\frac{1}{\sqrt{2}}(\sqrt{p_{1}}|0,0\rangle_{E}+\sqrt{p_{2}%
}|1,1\rangle_{E})\nonumber\\
&  \quad+|1\rangle_{B}\frac{1}{\sqrt{2}}(\sqrt{p_{3}}|0,1\rangle_{E}%
+\sqrt{p_{4}}|1,0\rangle_{E}),\\
&  (\langle1|_{A}\otimes I_{BE})|\psi\rangle_{ABE}\nonumber\\
&  =|1\rangle_{B}\frac{1}{\sqrt{2}}(\sqrt{p_{1}}|0,0\rangle_{E}-\sqrt{p_{2}%
}|1,1\rangle_{E})\nonumber\\
&  \quad+|0\rangle_{B}\frac{1}{\sqrt{2}}(\sqrt{p_{3}}|0,1\rangle_{E}%
-\sqrt{p_{4}}|1,0\rangle_{E}).
\end{align}
Using this, we obtain
\begin{widetext}
	\begin{align}
		\rho_{XE}&=\frac{1}{2}\ket{0}\bra{0}_X\otimes \rho_{E|X=0}+\frac{1}{2}\ket{1}\bra{1}_X\otimes\rho_{E|X=1}\\
		&=\begin{pmatrix} \frac{p_1}{2} & 0 & 0 & \frac{\sqrt{p_1p_2}}{2} & 0 & 0 & 0 & 0 \\
						0 & \frac{p_3}{2} & \frac{\sqrt{p_3p_4}}{2} & 0 & 0 & 0 & 0 & 0 \\
						0 & \frac{\sqrt{p_3p_4}}{2} & \frac{p_4}{2} & 0 & 0 & 0 & 0 & 0 \\
						\frac{\sqrt{p_1p_2}}{2} & 0 & 0 & \frac{p_2}{2} & 0 & 0 & 0 & 0 \\
						0 & 0 & 0 & 0 & \frac{p_1}{2} & 0 & 0 & -\frac{\sqrt{p_1p_2}}{2} \\
						0 & 0 & 0 & 0 & 0 & \frac{p_3}{2} & -\frac{\sqrt{p_3p_4}}{2} & 0 \\
						0 & 0 & 0 & 0 & 0 & -\frac{\sqrt{p_3p_4}}{2} & \frac{p_4}{2} & 0 \\
						0 & 0 & 0 & 0 & -\frac{\sqrt{p_1p_2}}{2} & 0 & 0 & \frac{p_2}{2} \end{pmatrix}.
	\end{align}
	\end{widetext}The state $\rho_{XE}$ can be diagonalized as follows:
\begin{multline}
\rho_{XE}=\left(  \frac{p_{1}+p_{2}}{2}\right)  |0\rangle\!\langle0|_{X}%
\otimes|\Phi_{p_{1},p_{2}}^{+}\rangle\!\langle\Phi_{p_{1},p_{2}}^{+}|\\
+\left(  \frac{p_{1}+p_{2}}{2}\right)  |1\rangle\!\langle1|_{X}\otimes
|\Phi_{p_{1},p_{2}}^{-}\rangle\!\langle\Phi_{p_{1},p_{2}}^{-}|\\
+\left(  \frac{p_{3}+p_{4}}{2}\right)  |0\rangle\!\langle0|_{X}\otimes
|\Psi_{p_{3},p_{4}}^{+}\rangle\!\langle\Psi_{p_{3},p_{4}}^{+}|\\
+\left(  \frac{p_{3}+p_{4}}{2}\right)  |1\rangle\!\langle1|_{X}\otimes
|\Psi_{p_{3},p_{4}}^{-}\rangle\!\langle\Psi_{p_{3},p_{4}}^{-}|,
\end{multline}
where
\begin{align}
|\Phi_{p_{1},p_{2}}^{\pm}\rangle &  \coloneqq\frac{1}{\sqrt{\frac{p_{1}+p_{2}%
}{2}}}\left(  \sqrt{\frac{p_{1}}{2}}|0,0\rangle\pm\sqrt{\frac{p_{2}}{2}%
}|1,1\rangle\right)  ,\\
|\Psi_{p_{3},p_{4}}^{\pm}\rangle &  \coloneqq\frac{1}{\sqrt{\frac{p_{3}+p_{4}%
}{2}}}\left(  \sqrt{\frac{p_{3}}{2}}|0,1\rangle\pm\sqrt{\frac{p_{4}}{2}%
}|1,0\rangle\right)  .
\end{align}
Then, by observing that%
\begin{multline}
\rho_{E}\coloneqq\operatorname{Tr}_{X}[\rho_{XE}]=p_{1}|0,0\rangle
\langle0,0|+p_{3}|0,1\rangle\!\langle0,1|\\
+p_{4}|1,0\rangle\!\langle1,0|+p_{2}|1,1\rangle\!\langle1,1|,
\end{multline}
we find that
\begin{equation}
I(X;E)=H(\vec{p})-h_{2}(p_{1}+p_{2}),
\end{equation}
where
\begin{equation}
H(\vec{p})\coloneqq\sum_{i=1}^{4}-p_{i}\log_{2}p_{i}.
\end{equation}

Finally, for $V(X;E)$, it is straightforward to show that
\begin{multline}
\operatorname{Tr}[\rho_{XE}(\log_{2}\rho_{XE}-\log_{2}(\rho_{X}\otimes\rho
_{E}))^{2}]\\
=p_{1}\left(  \log_{2}\!\left(  \frac{p_{1}}{p_{1}+p_{2}}\right)  \right)
^{2}+p_{2}\left(  \log_{2}\!\left(  \frac{p_{2}}{p_{1}+p_{2}}\right)  \right)
^{2}\\
+p_{3}\left(  \log_{2}\!\left(  \frac{p_{3}}{p_{3}+p_{4}}\right)  \right)
^{2}+p_{4}\left(  \log_{2}\!\left(  \frac{p_{4}}{p_{3}+p_{4}}\right)  \right)
^{2},
\end{multline}
from which it follows that
\begin{multline}
V(X;E)=p_{1}\left(  \log_{2}\!\left(  \frac{p_{1}}{p_{1}+p_{2}}\right)
\right)  ^{2}\\
+p_{2}\left(  \log_{2}\!\left(  \frac{p_{2}}{p_{1}+p_{2}}\right)  \right)
^{2}+p_{3}\left(  \log_{2}\!\left(  \frac{p_{3}}{p_{3}+p_{4}}\right)  \right)
^{2}\\
+p_{4}\left(  \log_{2}\!\left(  \frac{p_{4}}{p_{3}+p_{4}}\right)  \right)
^{2}-(H(\vec{p})-h_{2}(p_{1}+p_{2}))^{2}.
\end{multline}

For the six-state protocol, by using the formulas above and
\eqref{eq-app:six-state-pauli-1}--\eqref{eq-app:six-state-pauli-3}, we obtain
\begin{align}
I(X;Y)_{\rho} &  =1-h_{2}(Q),\\
V(X;Y)_{\rho} &  =Q(1-Q)\left(  \log_{2}\!\left(  \frac{1-Q}{Q}\right)
\right)  ^{2},\\
I(X;E)_{\rho} &  =-\left(  1-\frac{3Q}{2}\right)  \log_{2}\!\left(
1-\frac{3Q}{2}\right)  \nonumber\\
&  \qquad-\frac{3Q}{2}\log_{2}\!\left(  \frac{Q}{2}\right)  -h_{2}(Q),\\
V(X;E)_{\rho} &  =Q+\left(  1-\frac{3Q}{2}\right)  \left(  \log_{2}\!\left(
\frac{1-\frac{3Q}{2}}{1-Q}\right)  \right)  ^{2}\nonumber\\
&  \qquad+\frac{Q}{2}\left(  \log_{2}\!\left(  \frac{\frac{Q}{2}}{1-Q}\right)
\right)  ^{2}-I(X;E)_{\rho}^{2}.
\end{align}

For the BB84 protocol, by using the formulas above and
\eqref{eq-app:BB84-pauli-1}--\eqref{eq-app:BB84-pauli-last}, we obtain
\begin{align}
I(X;Y)_{\rho} &  =1-h_{2}(Q),\\
V(X;Y)_{\rho} &  =Q(1-Q)\left(  \log_{2}\!\left(  \frac{1-Q}{Q}\right)
\right)  ^{2},\\
I(X;E)_{\rho} &  =H(\{1-2Q+s,Q-s,Q-s,s\})\nonumber\\
&  \qquad-h_{2}(Q),\\
V(X;E)_{\rho} &  =(1-2Q+s)\left(  \log_{2}\!\left(  \frac{1-2Q+s}{1-Q}\right)
\right)  ^{2}\nonumber\\
&  \qquad+(Q-s)\left(  \log_{2}\!\left(  \frac{Q-s}{1-Q}\right)  \right)
^{2}\nonumber\\
&  \qquad+(Q-s)\left(  \log_{2}\!\left(  \frac{Q-s}{Q}\right)  \right)
^{2}\nonumber\\
&  \qquad+s\left(  \log_{2}\!\left(  \frac{s}{Q}\right)  \right)
^{2}-I(X;E)_{\rho}^{2}.
\end{align}

\section{Relation between smooth max-mutual informations}

\label{app:relation-smooth-max-MIs}

Here we prove the following lemma relating two different variants of smooth
max-mutual information:

\begin{lemma}
\label{lem:smooth-max-MI-relations}Let $\rho_{AE}$ be a bipartite state acting
on a separable Hilbert space $\mathcal{H}_{A}\otimes\mathcal{H}_{E}$, and let
$\varepsilon,\delta>0$ be such that $\varepsilon+\delta<1$. Then%
\begin{equation}
\widetilde{I}_{\max}^{\varepsilon+\delta}(E;A)_{\rho}\leq D_{\max
}^{\varepsilon}(\rho_{AE}\Vert\rho_{A}\otimes\rho_{E})+\log_{2}\!\left(
\frac{8}{\delta^{2}}\right)  ,
\end{equation}
where $\widetilde{I}_{\max}^{\varepsilon+\delta}(E;A)_{\rho}$ is defined in
\eqref{eq:smooth-alt-imax} and $D_{\max}^{\varepsilon}\!\left(  \rho_{AE}%
\Vert\rho_{A}\otimes\rho_{E}\right)  $ in \eqref{eq:smooth-dmax-MI}.
\end{lemma}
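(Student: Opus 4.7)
The plan is to construct a feasible state for the infimum defining $\widetilde{I}_{\max}^{\varepsilon+\delta}(E;A)_\rho$ by starting from a near-optimizer of $D_{\max}^\varepsilon(\rho_{AE}\Vert\rho_A\otimes\rho_E)$ and mixing it with a small amount of $\rho_A\otimes\rho_E$, so that the $E$-marginal of the perturbed state dominates $\eta\rho_E$ for a chosen parameter~$\eta$. Let $\lambda\coloneqq D_{\max}^\varepsilon(\rho_{AE}\Vert\rho_A\otimes\rho_E)$ and let $\tilde\rho_{AE}$ be a state realizing this infimum up to arbitrarily small slack (with the slack taken to zero at the end), so that $\tilde\rho_{AE}\leq 2^\lambda\rho_A\otimes\rho_E$ and $P(\tilde\rho_{AE},\rho_{AE})\leq\varepsilon$. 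I would then set
\[
\hat\rho_{AE} \coloneqq (1-\eta)\tilde\rho_{AE} + \eta\,\rho_A\otimes\rho_E,\qquad \eta\coloneqq\delta^2/4.
\]

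For closeness, concavity of the quantum fidelity in its first argument gives
\[
F(\hat\rho_{AE},\tilde\rho_{AE}) \geq (1-\eta)F(\tilde\rho_{AE},\tilde\rho_{AE}) + \eta F(\rho_A\otimes\rho_E,\tilde\rho_{AE}) \geq 1-\eta,
\]
so $P(\hat\rho_{AE},\tilde\rho_{AE})\leq\sqrt{\eta}=\delta/2$, and the triangle inequality for the sine distance yields $P(\hat\rho_{AE},\rho_{AE})\leq\varepsilon+\delta/2\leq\varepsilon+\delta$. Hence $\hat\rho_{AE}$ is feasible for the outer infimum in $\widetilde{I}_{\max}^{\varepsilon+\delta}(E;A)_\rho$.

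For the operator inequality, $\tilde\rho_{AE}\leq 2^\lambda\rho_A\otimes\rho_E$ gives $\hat\rho_{AE}\leq((1-\eta)2^\lambda+\eta)\,\rho_A\otimes\rho_E$. Taking a partial trace over $A$ yields $\hat\rho_E\geq\eta\rho_E$, and hence $\rho_A\otimes\rho_E\leq\eta^{-1}\rho_A\otimes\hat\rho_E$. Because $D_{\max}$ between states is nonnegative, $\lambda\geq 0$, so $(1-\eta)2^\lambda+\eta\leq 2\cdot 2^\lambda$. Combining these bounds gives $\hat\rho_{AE}\leq 2\cdot 2^\lambda\eta^{-1}\,\rho_A\otimes\hat\rho_E$, i.e., $D_{\max}(\hat\rho_{AE}\Vert\rho_A\otimes\hat\rho_E)\leq\lambda+\log_2(2/\eta)=\lambda+\log_2(8/\delta^2)$. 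The lemma then follows by letting the slack in the choice of $\tilde\rho_{AE}$ tend to zero.

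The calculation itself is short, so the main thing to verify carefully is that each ingredient carries over from finite dimensions to states on separable Hilbert spaces, namely the concavity of fidelity in one argument, the triangle inequality for the sine distance, the partial trace of trace-class operators, and nonnegativity of $D_{\max}$ between states; these all extend without modification, so no substantial obstacle arises.
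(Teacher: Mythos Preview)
Your proof is correct and takes a genuinely different route from the paper's. The paper adapts the projection-based construction of \cite{ABJT18}: it defines a spectral projector $\Pi_A^{\gamma}$ onto the positive part of $\frac{1}{\gamma}\widetilde{\rho}_A-\rho_A$ (with $\gamma=\delta^2/8$), sets
\[
\widehat{\rho}_{AE}=\Pi_A^{\gamma}\widetilde{\rho}_{AE}\Pi_A^{\gamma}+\widetilde{\rho}_A^{1/2}(I-\Pi_A^{\gamma})\widetilde{\rho}_A^{1/2}\otimes\rho_E,
\]
and then controls both the operator inequality and the sine distance via an auxiliary extension $\overline{\rho}_{AEX}$. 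Your construction replaces all of this with the single convex mixture $\widehat{\rho}_{AE}=(1-\eta)\widetilde{\rho}_{AE}+\eta\,\rho_A\otimes\rho_E$: the distance bound follows immediately from concavity of fidelity, and the key observation $\widehat{\rho}_E\geq\eta\rho_E$ lets you convert the domination by $\rho_A\otimes\rho_E$ into domination by $\rho_A\otimes\widehat{\rho}_E$ at the cost of a factor $1/\eta$. This is more elementary---no spectral projections, no auxiliary flag system, no operator-monotonicity arguments---and it manifestly carries over to separable Hilbert spaces since only convex combinations, partial traces, concavity of fidelity, and the triangle inequality for the sine distance are used. The paper's projection approach, inherited from a setting where partial smoothing was needed, is more flexible if one wants to preserve one of the marginals exactly, but that flexibility is not required for the present statement; your argument exploits this to give a shorter proof with the same constant $\log_2(8/\delta^2)$.
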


\begin{proof}
The proof follows the proof of \cite[Theorem~2]{ABJT18}, but has some
differences because it is not necessary in our case to ensure partial
smoothing. Let $\widetilde{\rho}_{AE}$ be a state satisfying $P(\widetilde
{\rho}_{AE},\rho_{AE})\leq\varepsilon$. Let $\gamma=\delta^{2}/8$, and set
$\Pi_{A}^{\gamma}$ to be the projection onto the positive eigenspace of
$\frac{1}{\gamma}\widetilde{\rho}_{A}-\rho_{A}$. Then it follows that%
\begin{multline}
\Pi_{A}^{\gamma}\left(  \frac{1}{\gamma}\widetilde{\rho}_{A}-\rho_{A}\right)
\Pi_{A}^{\gamma}\geq0\\
\Rightarrow\Pi_{A}^{\gamma}\rho_{A}\Pi_{A}^{\gamma}\leq\frac{1}{\gamma}\Pi
_{A}^{\gamma}\widetilde{\rho}_{A}\Pi_{A}^{\gamma}=\frac{8}{\delta^{2}}\Pi
_{A}^{\gamma}\widetilde{\rho}_{A}\Pi_{A}^{\gamma}%
,\label{eq:op-ineq-rho-tilde-rho}%
\end{multline}
and%
\begin{multline}
\left(  I-\Pi_{A}^{\gamma}\right)  \left(  \frac{1}{\gamma}\widetilde{\rho
}_{A}-\rho_{A}\right)  \left(  I-\Pi_{A}^{\gamma}\right)  \leq0\\
\Rightarrow\operatorname{Tr}[\left(  I-\Pi_{A}^{\gamma}\right)  \widetilde
{\rho}_{A}]\leq\gamma\operatorname{Tr}[\left(  I-\Pi_{A}^{\gamma}\right)
\rho_{A}]\leq\gamma=\frac{\delta^{2}}{8},\label{eq:trace-ineq-delta-gamma}%
\end{multline}
where the last inequality follows because $\operatorname{Tr}[\left(  I-\Pi
_{A}^{\gamma}\right)  \rho_{A}]\leq1$. The inequality in
\eqref{eq:trace-ineq-delta-gamma} can be rewritten as%
\begin{equation}
\operatorname{Tr}[\Pi_{A}^{\gamma}\widetilde{\rho}_{A}]\geq1-\frac{\delta^{2}%
}{8}. \label{eq:gamma-project-onto-smooth-rho}%
\end{equation}
Let us define the following states:%
\begin{multline}
\overline{\rho}_{AEX}\coloneqq \Pi_{A}^{\gamma}\widetilde{\rho}_{AE}\Pi_{A}^{\gamma
}\otimes|0\rangle\!\langle0|_{X}\\
+\left(  I-\Pi_{A}^{\gamma}\right)  \widetilde{\rho}_{AE}\left(  I-\Pi
_{A}^{\gamma}\right)  \otimes|1\rangle\!\langle1|_{X},
\end{multline}%
\begin{multline}
\widehat{\rho}_{AEX}\coloneqq \\
\left(  \Pi_{A}^{\gamma}\widetilde{\rho}_{AE}\Pi_{A}^{\gamma}+\widetilde{\rho
}_{A}^{1/2}\left(  I-\Pi_{A}^{\gamma}\right)  \widetilde{\rho}_{A}%
^{1/2}\otimes\rho_{E}\right)  \otimes|0\rangle\!\langle0|_{X},
\end{multline}
so that%
\begin{align}
\widehat{\rho}_{AE}  &  =\operatorname{Tr}_{X}[\widehat{\rho}_{AEX}]\\
&  =\Pi_{A}^{\gamma}\widetilde{\rho}_{AE}\Pi_{A}^{\gamma}+\widetilde{\rho}%
_{A}^{1/2}\left(  I-\Pi_{A}^{\gamma}\right)  \widetilde{\rho}_{A}^{1/2}%
\otimes\rho_{E}.
\end{align}
Then, using the inequality $\widetilde{\rho}_{AE}\leq\mu\rho_{A}\otimes
\rho_{E}$, with%
\begin{equation}
\mu\coloneqq 2^{D_{\max}(\widetilde{\rho}_{AE}\Vert\rho_{A}\otimes\rho_{E})},
\end{equation}
and the fact that $\mu\frac{8}{\delta^{2}}\geq1$ (which holds because
$D_{\max}(\widetilde{\rho}_{AE}\Vert\rho_{A}\otimes\rho_{E})\geq0$ and
$8\geq\delta^{2}$), we find that%
\begin{align}
\widehat{\rho}_{AE}  &  \leq\mu\Pi_{A}^{\gamma}\rho_{A}\Pi_{A}^{\gamma}%
\otimes\rho_{E}+\widetilde{\rho}_{A}^{1/2}\left(  I-\Pi_{A}^{\gamma}\right)
\widetilde{\rho}_{A}^{1/2}\otimes\rho_{E}\nonumber\\
&  \leq\mu\frac{8}{\delta^{2}}\Pi_{A}^{\gamma}\widetilde{\rho}_{A}\Pi
_{A}^{\gamma}\otimes\rho_{E}+\widetilde{\rho}_{A}^{1/2}\left(  I-\Pi
_{A}^{\gamma}\right)  \widetilde{\rho}_{A}^{1/2}\otimes\rho_{E}\nonumber\\
&  \leq\mu\frac{8}{\delta^{2}}\left[  \Pi_{A}^{\gamma}\widetilde{\rho}_{A}%
\Pi_{A}^{\gamma}\otimes\rho_{E}+\widetilde{\rho}_{A}^{1/2}\left(  I-\Pi
_{A}^{\gamma}\right)  \widetilde{\rho}_{A}^{1/2}\otimes\rho_{E}\right]
\nonumber\\
&  =\mu\frac{8}{\delta^{2}}\left[  \Pi_{A}^{\gamma}\widetilde{\rho}_{A}\Pi
_{A}^{\gamma}+\widetilde{\rho}_{A}^{1/2}\left(  I-\Pi_{A}^{\gamma}\right)
\widetilde{\rho}_{A}^{1/2}\right]  \otimes\rho_{E}\nonumber\\
&  =\mu\frac{8}{\delta^{2}}\widehat{\rho}_{A}\otimes\rho_{E}.
\end{align}
The second inequality above follows from \eqref{eq:op-ineq-rho-tilde-rho}.
Applying the definition of $D_{\max}(\widehat{\rho}_{AE}\Vert\widehat{\rho
}_{A}\otimes\rho_{E})$, we conclude that%
\begin{multline}
D_{\max}(\widehat{\rho}_{AE}\Vert\widehat{\rho}_{A}\otimes\rho_{E}%
)\leq\label{eq:dmax-inequality-smooth-max-relate}\\
D_{\max}(\widetilde{\rho}_{AE}\Vert\rho_{A}\otimes\rho_{E})+\log_{2}\!\left(
\frac{8}{\delta^{2}}\right)  .
\end{multline}

We can conclude the statement of the lemma if $P(\widehat{\rho}_{AE},\rho
_{AE})\leq\varepsilon+\delta$, and so it is our aim to show this now. Consider
that%
\begin{equation}
P(\widehat{\rho}_{AEX},\overline{\rho}_{AEX})=\sqrt{1-F(\widehat{\rho}%
_{AEX},\overline{\rho}_{AEX})}.
\end{equation}
The following chain of inequalities holds%
\begin{align}
&  \sqrt{F(\widehat{\rho}_{AEX},\overline{\rho}_{AEX})}\nonumber\\
&  =\operatorname{Tr}\left[  \left(  \sqrt{\Pi_{A}^{\gamma}\widetilde{\rho
}_{AE}\Pi_{A}^{\gamma}}\widehat{\rho}_{AE}\sqrt{\Pi_{A}^{\gamma}%
\widetilde{\rho}_{AE}\Pi_{A}^{\gamma}}\right)  ^{1/2}\right] \nonumber\\
&  \geq\operatorname{Tr}\left[  \left(  \sqrt{\Pi_{A}^{\gamma}\widetilde{\rho
}_{AE}\Pi_{A}^{\gamma}}\left(  \Pi_{A}^{\gamma}\widetilde{\rho}_{AE}\Pi
_{A}^{\gamma}\right)  \sqrt{\Pi_{A}^{\gamma}\widetilde{\rho}_{AE}\Pi
_{A}^{\gamma}}\right)  ^{1/2}\right] \nonumber\\
&  =\operatorname{Tr}\left[  \Pi_{A}^{\gamma}\widetilde{\rho}_{AE}\Pi
_{A}^{\gamma}\right] \nonumber\\
&  =\operatorname{Tr}[\Pi_{A}^{\gamma}\widetilde{\rho}_{A}]\nonumber\\
&  \geq1-\frac{\delta^{2}}{8},
\end{align}
where the inequality follows from operator monotonicity of the square root and
the fact that%
\begin{align}
\widehat{\rho}_{AE}  &  =\Pi_{A}^{\gamma}\widetilde{\rho}_{AE}\Pi_{A}^{\gamma
}+\widetilde{\rho}_{A}^{1/2}\left(  I-\Pi_{A}^{\gamma}\right)  \widetilde
{\rho}_{A}^{1/2}\otimes\rho_{E}\\
&  \geq\Pi_{A}^{\gamma}\widetilde{\rho}_{AE}\Pi_{A}^{\gamma}%
\end{align}
From the above and \eqref{eq:gamma-project-onto-smooth-rho}, we conclude that
$F(\widehat{\rho}_{AEX},\overline{\rho}_{AEX})\geq1-\frac{\delta^{2}}{4}$,
which implies that%
\begin{equation}
P(\widehat{\rho}_{AEX},\overline{\rho}_{AEX})\leq\frac{\delta}{2}.
\label{eq:purified-bar-hat-delta-2}%
\end{equation}
Now consider that%
\begin{align}
&  P(\overline{\rho}_{AEX},\rho_{AE}\otimes|0\rangle\!\langle0|_{X})\nonumber\\
&  \leq P(\overline{\rho}_{AEX},\widetilde{\rho}_{AE}\otimes|0\rangle
\langle0|_{X})\nonumber\\
&  \qquad+P(\widetilde{\rho}_{AE}\otimes|0\rangle\!\langle0|_{X},\rho
_{AE}\otimes|0\rangle\!\langle0|_{X})\nonumber\\
&  =\sqrt{1-F(\overline{\rho}_{AEX},\widetilde{\rho}_{AE}\otimes
|0\rangle\!\langle0|_{X})}+P(\widetilde{\rho}_{AE},\rho_{AE})\nonumber\\
&  =\sqrt{1-\left\Vert \sqrt{\Pi_{A}^{\gamma}\widetilde{\rho}_{AE}\Pi
_{A}^{\gamma}}\sqrt{\widetilde{\rho}_{AE}}\right\Vert _{1}^{2}}+P(\widetilde
{\rho}_{AE},\rho_{AE})\nonumber\\
&  =\sqrt{1-\left\Vert \sqrt{\Pi_{A}^{\gamma}\widetilde{\rho}_{AE}\Pi
_{A}^{\gamma}}\sqrt{\Pi_{A}^{\gamma}\widetilde{\rho}_{AE}\Pi_{A}^{\gamma}%
}\right\Vert _{1}^{2}}+P(\widetilde{\rho}_{AE},\rho_{AE})\nonumber\\
&  =\sqrt{1-\left(  \operatorname{Tr}[\Pi_{A}^{\gamma}\widetilde{\rho}%
_{AE}]\right)  ^{2}}+P(\widetilde{\rho}_{AE},\rho_{AE})\nonumber\\
&  \leq\frac{\delta}{2}+\varepsilon,
\end{align}
where we applied the triangle inequality of the sine distance
\cite{R02,R03,R06,GLN04} for the first inequality and the fact that
$\left\Vert \sqrt{\Pi\omega\Pi}\sqrt{\tau}\right\Vert _{1} = \left\Vert
\sqrt{\Pi\omega\Pi}\sqrt{\Pi\tau\Pi}\right\Vert _{1}$ for a projector $\Pi$
and states $\omega$ and $\tau$. Combining this with
\eqref{eq:purified-bar-hat-delta-2}, we find that%
\begin{align}
&  P(\widehat{\rho}_{AE},\rho_{AE})\nonumber\\
&  =P(\widehat{\rho}_{AEX},\rho_{AE}\otimes|0\rangle\!\langle0|_{X})\\
&  \leq P(\widehat{\rho}_{AEX},\overline{\rho}_{AEX})+P(\overline{\rho}%
_{AEX},\rho_{AE}\otimes|0\rangle\!\langle0|_{X})\\
&  =\varepsilon+\delta.
\end{align}
Since we have found a state $\widehat{\rho}_{AE}$ satisfying $P(\widehat{\rho
}_{AE},\rho_{AE})\leq\varepsilon+\delta$ and
\eqref{eq:dmax-inequality-smooth-max-relate}, we conclude that%
\begin{equation}
\widetilde{I}_{\max}^{\varepsilon+\delta}(E;A)_{\rho}\leq D_{\max}%
(\widetilde{\rho}_{AE}\Vert\rho_{A}\otimes\rho_{E})+\log_{2}\!\left(  \frac
{8}{\delta^{2}}\right)  .
\end{equation}
Since this inequality has been shown for all states $\widetilde{\rho}_{AE}$
satisfying $P(\widetilde{\rho}_{AE},\rho_{AE})\leq\varepsilon$, we conclude
the statement of the lemma.
\end{proof}

\section{Smooth universal convex split lemma for states acting on a separable
Hilbert space}

\label{sec:smooth-univ-convex-split}

Here we prove a smooth universal convex split lemma for states acting on a
separable Hilbert space:

\begin{lemma}
[Smooth universal convex split]Let $\mathcal{S}$ be a set, and let $\rho
_{AE}^{s}$ be a state acting on a separable Hilbert space $\mathcal{H}%
_{A}\otimes\mathcal{H}_{E}$, such that $\operatorname{Tr}_{E}[\rho_{AE}%
^{s}]=\rho_{A}$ for all $s\in\mathcal{S}$. Let $\tau_{A_{1}\cdots A_{R}E}^{s}$
denote the following state:%
\begin{multline}
\tau_{A_{1}\cdots A_{R}E}^{s}\coloneqq \\
\frac{1}{R}\sum_{r=1}^{R}\rho_{A_{1}}\otimes\cdots\otimes\rho_{A_{r-1}}%
\otimes\rho_{A_{r}E}^{s}\otimes\rho_{A_{r+1}}\otimes\cdots\otimes\rho_{A_{R}}.
\end{multline}
Let $\varepsilon\in(0,1)$ and $\eta\in(0,\sqrt{\varepsilon})$. If%
\begin{equation}
\log_{2}R\geq\sup_{s\in\mathcal{S}}\widetilde{I}_{\max}^{\sqrt{\varepsilon
}-\eta}(E;A)_{\rho^{s}}+2\log_{2}\!\left(  \frac{1}{2\eta}\right)  ,
\end{equation}
then for all $s\in\mathcal{S}$, there exists a state $\widetilde{\rho}_{E}%
^{s}$ satisfying%
\begin{equation}
\frac{1}{2}\left\Vert \tau_{A_{1}\cdots A_{R}E}^{s}-\rho_{A_{1}}\otimes
\cdots\otimes\rho_{A_{R}}\otimes\widetilde{\rho}_{E}^{s}\right\Vert _{1}%
\leq\sqrt{\varepsilon},
\end{equation}
and $P(\widetilde{\rho}_{E}^{s},\rho_{E}^{s})\leq\sqrt{\varepsilon}-\eta$.
\end{lemma}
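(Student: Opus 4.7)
My plan is to reduce the smooth, universal statement to the ordinary (non-smooth, per-state) convex-split lemma by smoothing $\rho_{AE}^s$ first and then combining estimates with the triangle inequality. Universality over $s\in\mathcal{S}$ comes essentially for free: the blocklength requirement in the lemma depends on $s$ only through $\sup_{s}\widetilde{I}_{\max}^{\sqrt{\varepsilon}-\eta}(E;A)_{\rho^{s}}$, so a single $R$ suffices for every $s$, and I can argue $s$ by $s$.

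First, for each $s\in\mathcal{S}$ I would invoke the definition \eqref{eq:smooth-alt-imax} of $\widetilde{I}_{\max}^{\sqrt{\varepsilon}-\eta}(E;A)_{\rho^{s}}$ to select (or approach by a limiting sequence if the infimum is not attained) a state $\widetilde{\rho}_{AE}^{s}$ with $P(\widetilde{\rho}_{AE}^{s},\rho_{AE}^{s})\leq\sqrt{\varepsilon}-\eta$ and
\begin{equation}
D_{\max}\!\left(\widetilde{\rho}_{AE}^{s}\Vert\rho_{A}\otimes\widetilde{\rho}_{E}^{s}\right)\leq\log_{2}R-2\log_{2}\!\left(\tfrac{1}{2\eta}\right).
\end{equation}
Since the partial trace is a quantum channel and sine distance is monotone under channels, the same bound automatically gives $P(\widetilde{\rho}_{E}^{s},\rho_{E}^{s})\leq\sqrt{\varepsilon}-\eta$, which is the auxiliary conclusion of the lemma. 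Notice that the $A$-marginal of $\widetilde{\rho}_{AE}^{s}$ need not equal $\rho_{A}$; this is acceptable because the non-smooth convex-split lemma permits an arbitrary product reference on the untouched slots, provided $D_{\max}$ against that reference is finite.

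Next, I would form the auxiliary convex-split state
\begin{multline}
\widetilde{\tau}_{A_{1}\cdots A_{R}E}^{s}\coloneqq\frac{1}{R}\sum_{r=1}^{R}\rho_{A_{1}}\otimes\cdots\otimes\rho_{A_{r-1}}\\
\otimes\widetilde{\rho}_{A_{r}E}^{s}\otimes\rho_{A_{r+1}}\otimes\cdots\otimes\rho_{A_{R}},
\end{multline}
keeping $\rho_{A}$ in all untouched slots, and apply the ordinary convex-split lemma to the pair $(\widetilde{\rho}_{AE}^{s},\rho_{A}\otimes\widetilde{\rho}_{E}^{s})$ at the chosen $R$. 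With the exponent above, this yields $\tfrac{1}{2}\Vert\widetilde{\tau}_{A_{1}\cdots A_{R}E}^{s}-\rho_{A}^{\otimes R}\otimes\widetilde{\rho}_{E}^{s}\Vert_{1}\leq\eta$. Then I close the argument by combining estimates: the classically controlled channel
\begin{equation}
\Phi_{s}:\sigma_{AE}\mapsto\frac{1}{R}\sum_{r=1}^{R}\rho_{A_{1}}\otimes\cdots\otimes\sigma_{A_{r}E}\otimes\cdots\otimes\rho_{A_{R}}
\end{equation}
satisfies $\Phi_{s}(\rho_{AE}^{s})=\tau^{s}$ and $\Phi_{s}(\widetilde{\rho}_{AE}^{s})=\widetilde{\tau}^{s}$, so by monotonicity of trace distance and the inequality $\tfrac{1}{2}\Vert\cdot\Vert_{1}\leq P(\cdot,\cdot)$ one has $\tfrac{1}{2}\Vert\tau^{s}-\widetilde{\tau}^{s}\Vert_{1}\leq\sqrt{\varepsilon}-\eta$. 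The triangle inequality then produces $\tfrac{1}{2}\Vert\tau^{s}-\rho_{A}^{\otimes R}\otimes\widetilde{\rho}_{E}^{s}\Vert_{1}\leq(\sqrt{\varepsilon}-\eta)+\eta=\sqrt{\varepsilon}$, uniformly in $s$.

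The part I expect to be nontrivial, and which this sketch sidesteps, is a clean verification of the non-smooth convex-split lemma for states on a separable Hilbert space: the finite-dimensional proof uses either a direct $D_{\max}$-based estimate of $\tfrac{1}{2}\Vert\widetilde{\tau}^{s}-\rho_{A}^{\otimes R}\otimes\widetilde{\rho}_{E}^{s}\Vert_{1}$ or a quantum-relative-entropy bound combined with Pinsker. I would adapt the former route using the separable-Hilbert-space definition of relative entropy in \eqref{eq:rel-ent-sep} and the variational form of $D_{\max}$ in \eqref{eq:alt-dmax}; if any step requires finite dimensionality, I would instead compress onto finite-dimensional spectral subspaces of $\rho_{A}\otimes\widetilde{\rho}_{E}^{s}$, apply the finite-dimensional result, and take a limit, exploiting lower semicontinuity of $D_{\max}$ and continuity of trace distance on trace-class operators.
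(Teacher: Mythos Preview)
Your proposal is correct and follows essentially the same route as the paper: smooth $\rho_{AE}^{s}$ first via the optimizer in \eqref{eq:smooth-alt-imax}, apply the non-smooth convex-split bound to the smoothed pair, and finish with the triangle inequality, arguing $s$ by $s$ and getting universality from the supremum in the blocklength condition. The only cosmetic differences are that the paper bounds $P(\tau^{s},\widetilde{\tau}^{s})$ via concavity of root fidelity rather than your channel-monotonicity argument, and it obtains the asymmetric convex-split bound by embedding into the symmetric state $\widetilde{\tau}_{A^{R}E^{R}}^{s}$ and tracing out $E_{2}\cdots E_{R}$, invoking \cite[Lemma~15]{LW19} (which already covers separable Hilbert spaces) rather than citing an asymmetric version directly; your identification of the infinite-dimensional convex-split step as the residual technical point is exactly what the paper defers to that reference.
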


\begin{proof}
Fix $s\in\mathcal{S}$. Let $\widetilde{\rho}_{AE}^{s}$ be an arbitrary state
satisfying $P(\widetilde{\rho}_{AE}^{s},\rho_{AE}^{s})\leq\sqrt{\varepsilon
}-\eta$ and is such that%
\begin{equation}
\rho_{A}^{s}\otimes\widetilde{\rho}_{E}^{s}=p\widetilde{\rho}_{AE}^{s}+\left(
1-p\right)  \omega_{AE}^{s}, \label{eq:dmax-condition-cs}%
\end{equation}
for some $p\in(0,1)$ and $\omega_{AE}^{s}$ some state. We define the following
state, which we think of as an approximation to $\tau_{A_{1}\cdots A_{R}E}%
^{s}$:%
\begin{multline}
\widetilde{\tau}_{A_{1}\cdots A_{R}E}^{s}\coloneqq \\
\frac{1}{R}\sum_{r=1}^{R}\rho_{A_{1}}\otimes\cdots\otimes\rho_{A_{r-1}}%
\otimes\widetilde{\rho}_{A_{r}E}^{s}\otimes\rho_{A_{r+1}}\otimes\cdots
\otimes\rho_{A_{R}}.
\end{multline}
It is a good approximation if $\sqrt{\varepsilon}-\eta$ is small, because%
\begin{align}
&  \!\!\!\!\sqrt{F}(\tau_{A_{1}\cdots A_{R}E}^{s},\widetilde{\tau}%
_{A_{1}\cdots A_{R}E}^{s})\nonumber\\
&  \geq\frac{1}{R}\sum_{r=1}^{R}\sqrt{F}(\rho_{A}^{\otimes r-1}\otimes
\rho_{A_{r}E}^{s}\otimes\rho_{A}^{\otimes R-r} ,\nonumber\\
&  \quad\qquad\qquad\qquad\rho_{A}^{\otimes r-1}\otimes\widetilde{\rho}%
_{A_{r}E}^{s}\otimes\rho_{A}^{\otimes R-r})\\
&  =\frac{1}{R}\sum_{r=1}^{R}\sqrt{F}(\rho_{A_{r}E}^{s},\widetilde{\rho
}_{A_{r}E}^{s})\\
&  =\sqrt{F}(\rho_{A_{r}E}^{s},\widetilde{\rho}_{A_{r}E}^{s}),
\end{align}
which in turn implies that%
\begin{equation}
\sqrt{F}(\tau_{A_{1}\cdots A_{R}E}^{s},\widetilde{\tau}_{A_{1}\cdots A_{R}%
E}^{s})\geq\sqrt{F}(\rho_{A_{r}E}^{s},\widetilde{\rho}_{A_{r}E}^{s}).
\label{eq:cs-state-approx}%
\end{equation}
So the inequality in \eqref{eq:cs-state-approx}, the definition of the sine
distance, and the fact that $P(\widetilde{\rho}_{AE}^{s},\rho_{AE}^{s}%
)\leq\sqrt{\varepsilon}-\eta$, imply that%
\begin{equation}
P(\tau_{A_{1}\cdots A_{R}E}^{s},\widetilde{\tau}_{A_{1}\cdots A_{R}E}^{s}%
)\leq\sqrt{\varepsilon}-\eta,
\end{equation}
and in turn that%
\begin{equation}
\frac{1}{2}\left\Vert \tau_{A_{1}\cdots A_{R}E}^{s}-\widetilde{\tau}%
_{A_{1}\cdots A_{R}E}^{s}\right\Vert _{1}\leq\sqrt{\varepsilon}-\eta.
\label{eq:smoothed-state-cs}%
\end{equation}
Now, following \cite[Appendix~A]{LW19} closely, let us define the following
states:%
\begin{align}
\beta_{AE}^{s}  &  \coloneqq \rho_{A}^{s}\otimes\widetilde{\rho}_{E}^{s},\\
\alpha_{AE}^{s}  &  \coloneqq \widetilde{\rho}_{AE}^{s},
\end{align}%
\begin{multline}
\widetilde{\tau}_{A^{R}E^{R}}^{s}\coloneqq \frac{1}{R}\sum_{r=1}^{R}\beta_{A_{1}E_{1}%
}\otimes\cdots\otimes\beta_{A_{r-1}E_{r-1}}\otimes\alpha_{A_{r}E_{r}}%
^{s}\otimes\\
\beta_{A_{r+1}E_{r+1}}^{s}\otimes\cdots\otimes\beta_{A_{R}E_{R}},
\end{multline}
and observe that%
\begin{align}
\operatorname{Tr}_{E_{2}^{R}}[(\beta_{AE}^{s})^{\otimes R}]  &  =\rho_{A_{1}%
}\otimes\cdots\otimes\rho_{A_{R}}\otimes\widetilde{\rho}_{E}^{s},\\
\operatorname{Tr}_{E_{2}^{R}}[\widetilde{\tau}_{A^{R}E^{R}}^{s}]  &
=\widetilde{\tau}_{A_{1}\cdots A_{R}E}^{s}.
\end{align}
Thus, it follows from data processing of normalized trace distance that%
\begin{multline}
\frac{1}{2}\left\Vert \widetilde{\tau}_{A_{1}\cdots A_{R}E}^{s}-\rho_{A_{1}%
}\otimes\cdots\otimes\rho_{A_{R}}\otimes\widetilde{\rho}_{E}^{s}\right\Vert
_{1}\label{eq:c-split-partial-trace-mono}\\
\leq\frac{1}{2}\left\Vert \widetilde{\tau}_{A^{R}E^{R}}^{s}-(\beta_{AE}%
^{s})^{\otimes R}\right\Vert _{1}.
\end{multline}
Now applying \cite[Lemma~15]{LW19}, we find that%
\begin{equation}
\frac{1}{2}\left\Vert \widetilde{\tau}_{A_{1}\cdots A_{R}E}^{s}-\rho_{A_{1}%
}\otimes\cdots\otimes\rho_{A_{R}}\otimes\widetilde{\rho}_{E}^{s}\right\Vert
_{1}\leq\eta
\end{equation}
if%
\begin{equation}
\log_{2}R\geq\log_{2}(1/p)+2\log_{2}\!\left(  \frac{1}{2\eta}\right)  .
\end{equation}
For the same choice of $R$, it follows from
\eqref{eq:c-split-partial-trace-mono}\ that%
\begin{equation}
\frac{1}{2}\left\Vert \widetilde{\tau}_{A_{1}\cdots A_{R}E}^{s}-\rho_{A_{1}%
}\otimes\cdots\otimes\rho_{A_{R}}\otimes\widetilde{\rho}_{E}^{s}\right\Vert
_{1}\leq\eta. \label{eq:cs-result}%
\end{equation}
Applying the triangle inequality to \eqref{eq:smoothed-state-cs} and
\eqref{eq:cs-result}, we find that%
\begin{equation}
\frac{1}{2}\left\Vert \tau_{A_{1}\cdots A_{R}E}^{s}-\rho_{A_{1}}\otimes
\cdots\otimes\rho_{A_{R}}\otimes\widetilde{\rho}_{E}^{s}\right\Vert _{1}%
\leq\sqrt{\varepsilon}.
\end{equation}

The whole argument above holds for an arbitrary state $\widetilde{\rho}%
_{AE}^{s}$ satisfying $P(\widetilde{\rho}_{AE}^{s},\rho_{AE}^{s})\leq
\sqrt{\varepsilon}-\eta$ and \eqref{eq:dmax-condition-cs}, and so taking an
infimum of $\log_{2}(1/p)$ over all states satisfying these conditions, and
applying the definition in \eqref{eq:smooth-alt-imax}, as well as
\eqref{eq:alt-dmax}, we find that%
\begin{equation}
\frac{1}{2}\left\Vert \tau_{A_{1}\cdots A_{R}E}^{s}-\rho_{A_{1}}\otimes
\cdots\otimes\rho_{A_{R}}\otimes\widetilde{\rho}_{E}^{s}\right\Vert _{1}%
\leq\sqrt{\varepsilon}%
\end{equation}
if%
\begin{equation}
\log_{2}R\geq\widetilde{I}_{\max}^{\sqrt{\varepsilon}-\eta}(E;A)_{\rho^{s}%
}+2\log_{2}\!\left(  \frac{1}{2\eta}\right)  .
\end{equation}
Now note that the whole argument holds for all $s\in\mathcal{S}$, due to the
uniform structure of the convex split method, which consists of bringing in a
tensor-power state and performing a random permutation. We then find that%
\begin{equation}
\frac{1}{2}\left\Vert \tau_{A_{1}\cdots A_{R}E}^{s}-\rho_{A_{1}}\otimes
\cdots\otimes\rho_{A_{R}}\otimes\widetilde{\rho}_{E}^{s}\right\Vert _{1}%
\leq\sqrt{\varepsilon}%
\end{equation}
for all $s\in\mathcal{S}$ if%
\begin{equation}
\log_{2}R\geq\sup_{s\in\mathcal{S}}\widetilde{I}_{\max}^{\sqrt{\varepsilon
}-\eta}(E;A)_{\rho^{s}}+2\log_{2}\!\left(  \frac{1}{2\eta}\right)  .
\end{equation}
This concludes the proof.
\end{proof}

\section{Duality of smooth max-relative entropy}

\label{app:duality-smooth-dmax}This appendix generalizes a duality relation
for smooth max-relative entropy from the finite-dimensional case \cite{JN12}
to the case of states acting on a separable Hilbert space. Recall that the
smooth max-relative entropy is defined as%
\begin{equation}
D_{\max}^{\varepsilon}(\rho\Vert\sigma)=\log_{2}\inf_{\widetilde{\rho
}:P(\widetilde{\rho},\rho)\leq\varepsilon}\inf_{\lambda\geq0}\left\{
\lambda:\widetilde{\rho}\leq\lambda\sigma\right\}  .
\end{equation}
We can also define the dual smooth max-relative entropy as%
\begin{equation}
\widehat{D}_{\max}^{\varepsilon}(\rho\Vert\sigma)\coloneqq \log_{2}\sup_{M\geq0}%
\inf_{\widetilde{\rho}:P(\widetilde{\rho},\rho)\leq\varepsilon}\left\{
\operatorname{Tr}[M\widetilde{\rho}]:\operatorname{Tr}[M\sigma]\leq1\right\}
,
\end{equation}
where the supremum is with respect to all compact bounded operators $M\geq0$,
as these are dual to the trace-class operators \cite{RS78}. In this section,
for both of the above quantities, we expand the definitions to allow for
subnormalized states $\rho$ and $\sigma$ (i.e., such that $\operatorname{Tr}%
[\rho],\operatorname{Tr}[\sigma]\leq1$) and the sine distance becomes as
follows \cite{T15book}:%
\begin{equation}
P(\tau,\omega)\coloneqq \sqrt{1-F(\tau\oplus\left(  1-\operatorname{Tr}[\tau]\right)
,\omega\oplus\left(  1-\operatorname{Tr}[\omega]\right)  )}.
\end{equation}
In what follows, we call subnormalized states \textquotedblleft
substates\textquotedblright\ for short. Note that for any projection $\Pi$ and
substates $\tau$ and $\omega$, the following inequality holds \cite{T15book}%
\begin{equation}
P(\Pi\tau\Pi,\Pi\omega\Pi)\leq P(\tau,\omega).
\end{equation}

\begin{lemma}
\label{lem:smooth-dmax-primal-dual-equal}Let $\rho$ be a state and $\sigma$ 
a trace-class positive semi-definite operator acting on a separable Hilbert space $\mathcal{H}$. Let $\varepsilon\in(0,1)$
and suppose that $D_{\max}^{\varepsilon}(\rho\Vert\sigma)<\infty$. Then%
\begin{equation}
D_{\max}^{\varepsilon}(\rho\Vert\sigma)=\widehat{D}_{\max}^{\varepsilon}%
(\rho\Vert\sigma).
\end{equation}

\end{lemma}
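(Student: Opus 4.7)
The plan is to prove the two inequalities $D_{\max}^{\varepsilon}(\rho\Vert\sigma) \geq \widehat{D}_{\max}^{\varepsilon}(\rho\Vert\sigma)$ (weak duality) and the reverse (strong duality) separately, with strong duality being the substantive direction. Weak duality is immediate: if $\widetilde{\rho} \leq \lambda\sigma$ with $P(\widetilde{\rho},\rho)\leq\varepsilon$, and $M\geq 0$ is compact with $\operatorname{Tr}[M\sigma]\leq 1$, then $\operatorname{Tr}[M\widetilde{\rho}] \leq \lambda\operatorname{Tr}[M\sigma] \leq \lambda$; taking the infimum over $\widetilde{\rho}$ on the left, then the supremum over $M$, and finally the infimum over feasible $\lambda$ yields the claim.

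For strong duality, I first establish the non-smooth version: for trace-class $\tau,\sigma \geq 0$ with $D_{\max}(\tau\Vert\sigma) < \infty$,
$$2^{D_{\max}(\tau\Vert\sigma)} = \sup\{\operatorname{Tr}[M\tau] : M \geq 0 \text{ compact},\ \operatorname{Tr}[M\sigma] \leq 1\}.$$
The nontrivial ``$\leq$'' direction proceeds by Hahn--Banach separation: setting $\lambda = 2^{D_{\max}(\tau\Vert\sigma)}$ and fixing $\delta > 0$, the operator $(\lambda-\delta)\sigma - \tau$ is not positive semi-definite, so by Hahn--Banach in the Banach space of trace-class operators (whose continuous dual is $\mathcal{B}(\mathcal{H})$) there exists a bounded $N \geq 0$ with $\operatorname{Tr}[N((\lambda-\delta)\sigma - \tau)] < 0$. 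Because $\tau + \sigma$ is trace class (hence compact), replacing $N$ by $\Pi_{k} N \Pi_{k}$, where $\Pi_{k}$ is the projector onto the top-$k$ eigenspace of $\tau + \sigma$, preserves the strict inequality for $k$ sufficiently large (by trace-norm convergence $\Pi_{k}(\tau+\sigma)\Pi_{k} \to \tau+\sigma$ and boundedness of $N$). Rescaling the resulting compact operator to satisfy $\operatorname{Tr}[M\sigma] \leq 1$ produces $\operatorname{Tr}[M\tau] \geq \lambda - 2\delta$; sending $\delta \to 0$ concludes this step.

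With the non-smooth duality in hand, I apply Sion's minimax theorem to the bilinear pairing $f(\widetilde{\rho},M) \coloneqq \operatorname{Tr}[M\widetilde{\rho}]$ on $K_{\varepsilon}(\rho) \times L(\sigma)$, where
$$K_{\varepsilon}(\rho) \coloneqq \{\widetilde{\rho} \geq 0 : \operatorname{Tr}[\widetilde{\rho}] \leq 1,\ P(\widetilde{\rho},\rho) \leq \varepsilon\}$$
and $L(\sigma) \coloneqq \{M \geq 0 \text{ compact} : \operatorname{Tr}[M\sigma] \leq 1\}$. Both sets are convex ($K_{\varepsilon}(\rho)$ by joint concavity of the fidelity in its arguments, rewriting $P(\widetilde{\rho},\rho)\leq\varepsilon$ as a superlevel set of $F(\cdot,\rho)$); $K_{\varepsilon}(\rho)$ is weak-$\ast$ compact as a weak-$\ast$ closed subset of the trace-norm unit ball in $\mathcal{T}(\mathcal{H}) = \mathcal{K}(\mathcal{H})^{\ast}$, invoking Banach--Alaoglu together with upper semi-continuity of the fidelity in the weak-$\ast$ topology (via Fuchs' infimum-over-POVMs representation with a density argument restricting to finite-rank effects); and $\widetilde{\rho} \mapsto \operatorname{Tr}[M\widetilde{\rho}]$ is weak-$\ast$ continuous on $K_{\varepsilon}(\rho)$ whenever $M$ is compact, by the very definition of the weak-$\ast$ topology on $\mathcal{T}(\mathcal{H})$. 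Sion's theorem therefore yields
$$\sup_{M \in L(\sigma)} \inf_{\widetilde{\rho} \in K_{\varepsilon}(\rho)} \operatorname{Tr}[M\widetilde{\rho}] = \inf_{\widetilde{\rho} \in K_{\varepsilon}(\rho)} \sup_{M \in L(\sigma)} \operatorname{Tr}[M\widetilde{\rho}] = \inf_{\widetilde{\rho} \in K_{\varepsilon}(\rho)} 2^{D_{\max}(\widetilde{\rho}\Vert\sigma)} = 2^{D_{\max}^{\varepsilon}(\rho\Vert\sigma)},$$
where the second equality uses the non-smooth duality applied pointwise in $\widetilde{\rho}$, and the LHS is precisely $2^{\widehat{D}_{\max}^{\varepsilon}(\rho\Vert\sigma)}$.

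The principal obstacles are twofold. First, the non-smooth duality in infinite dimensions is delicate: Hahn--Banach delivers only a bounded separator, whereas the dual problem is restricted to compact operators, so one must approximate the separator while preserving strict separation, and it is the trace-class (hence compact) structure of $\tau$ and $\sigma$ that makes this possible. Second, verifying weak-$\ast$ compactness of $K_{\varepsilon}(\rho)$ requires upper semi-continuity of the fidelity in the weak-$\ast$ topology, which is not entirely standard and must be established via the variational characterization of fidelity with a compact-effect reduction. Together these two ingredients are what adapts the finite-dimensional semidefinite-programming proof of \cite{JN12} to the separable Hilbert space setting.
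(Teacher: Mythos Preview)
Your proof is correct and takes a genuinely different route from the paper's. The paper argues by finite-dimensional truncation: it projects to $\rho^{k}=\Pi^{k}\rho\Pi^{k}$ and $\sigma^{k}=\Pi^{k}\sigma\Pi^{k}$ along an increasing sequence of finite-rank projections, invokes the known finite-dimensional strong duality of \cite{JN12} for each pair $(\rho^{k},\sigma^{k})$, and then shows that both $D_{\max}^{\varepsilon}(\rho^{k}\Vert\sigma^{k})$ and $\widehat D_{\max}^{\varepsilon}(\rho^{k}\Vert\sigma^{k})$ converge to their infinite-dimensional counterparts (the former via monotonicity in $k$ together with a Banach--Alaoglu subsequence extraction, the latter by a direct comparison inequality). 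You instead work directly in infinite dimensions: a non-smooth separation step (which, incidentally, can be shortened by taking for $N$ a rank-one projector onto a negative eigenvector of the trace-class operator $(\lambda-\delta)\sigma-\tau$, already compact) followed by Sion's minimax on $K_{\varepsilon}(\rho)\times L(\sigma)$. You rightly identify weak-$\ast$ upper semicontinuity of $F(\cdot,\rho)$ as the crux; it is worth noting that the paper's limiting argument also tacitly relies on this same fact, since one must check that the weak-$\ast$ limit $\widetilde\rho$ of the optimal smoothings $\widetilde\rho^{k}$ still satisfies $P(\widetilde\rho,\rho)\leq\varepsilon$, a point the paper leaves implicit. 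What your route buys is self-containment (no appeal to the finite-dimensional result of \cite{JN12}) and a clean isolation of the single analytic obstacle; what the paper's route buys is that all convex duality is outsourced to the finite-dimensional case, leaving only approximation lemmas to verify.
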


\begin{proof}
We prove this in several steps, with the proof bearing some similarities to
the approach from \cite[Appendix~B]{FAR11} (see also \cite[Appendix~A]{WW18}
in this context). First, for a state $\widetilde{\rho}$, consider
from weak duality that%
\begin{equation}
\inf_{\lambda\geq0}\left\{  \lambda:\widetilde{\rho}\leq\lambda\sigma\right\}
\geq\sup_{M\geq0}\left\{  \operatorname{Tr}[M\widetilde{\rho}%
]:\operatorname{Tr}[M\sigma]\leq1\right\}  , \label{eq:simple-weak-duality}%
\end{equation}
where the optimization on the right-hand side is over the compact bounded
operators, given that these are dual to the trace-class operators \cite{RS78}.
In more detail, let $\lambda\geq0$ satisfy $\widetilde{\rho}\leq\lambda\sigma
$, and let $M$ be an arbitrary compact bounded operator satisfying $M\geq0$
and $\operatorname{Tr}[M\sigma]\leq1$. Then it follows that $\lambda
\geq\lambda\operatorname{Tr}[M\sigma]\geq\operatorname{Tr}[M\widetilde{\rho}%
]$, where we used $\widetilde{\rho}\leq\lambda\sigma$ and $M \geq0$. Since
this inequality holds for all $\lambda$ and $M$ satisfying the given
conditions, we conclude the weak duality inequality in
\eqref{eq:simple-weak-duality}. Then we have that%
\begin{align}
&  D_{\max}^{\varepsilon}(\rho\Vert\sigma)\nonumber\\
&  =\log_{2}\inf_{\widetilde{\rho}:P(\widetilde{\rho},\rho)\leq\varepsilon
}\inf_{\lambda\geq0}\left\{  \lambda:\widetilde{\rho}\leq\lambda\sigma\right\}
\\
&  \geq\log_{2}\inf_{\widetilde{\rho}:P(\widetilde{\rho},\rho)\leq\varepsilon
}\sup_{M\geq0}\left\{  \operatorname{Tr}[M\widetilde{\rho}]:\operatorname{Tr}%
[M\sigma]\leq1\right\} \\
&  \geq\log_{2}\sup_{M\geq0}\inf_{\widetilde{\rho}:P(\widetilde{\rho}%
,\rho)\leq\varepsilon}\left\{  \operatorname{Tr}[M\widetilde{\rho
}]:\operatorname{Tr}[M\sigma]\leq1\right\} \\
&  =\widehat{D}_{\max}^{\varepsilon}(\rho\Vert\sigma).
\end{align}
So the following weak-duality inequality holds%
\begin{equation}
D_{\max}^{\varepsilon}(\rho\Vert\sigma)\geq\widehat{D}_{\max}^{\varepsilon
}(\rho\Vert\sigma). \label{eq:weak-duality-starting-point}%
\end{equation}

Let $\{\Pi^{k}\}_{k}$ be a sequence of projectors onto finite-dimensional
subspaces strongly converging to the identity operator, and suppose that
$\Pi^{k}\leq\Pi^{k^{\prime}}$ for $k\leq k^{\prime}$. Then define%
\begin{equation}
\rho^{k}\coloneqq \Pi^{k}\rho\Pi^{k},\qquad\sigma^{k}\coloneqq \Pi^{k}\sigma\Pi^{k}.
\end{equation}

We now prove that%
\begin{equation}
\lim_{k\rightarrow\infty}D_{\max}^{\varepsilon}(\rho^{k}\Vert\sigma
^{k})=D_{\max}^{\varepsilon}(\rho\Vert\sigma).
\label{eq:smooth-dmax-primal-limit}%
\end{equation}
Let $\widetilde{\rho}$ be an arbitrary state satisfying $P(\widetilde{\rho
},\rho)\leq\varepsilon$ and let $\lambda$ satisfy $\widetilde{\rho}\leq
\lambda\sigma$. Then it follows that%
\begin{align}
\widetilde{\rho}^{k}  &  \coloneqq \Pi^{k}\widetilde{\rho}\Pi^{k}\leq\lambda\Pi
^{k}\sigma\Pi^{k}=\lambda\sigma^{k},\\
P(\widetilde{\rho}^{k},\rho^{k})  &  \leq\varepsilon,
\end{align}
so that%
\begin{equation}
D_{\max}^{\varepsilon}(\rho^{k}\Vert\sigma^{k})\leq\log_{2}\lambda.
\end{equation}
Since $\lambda$ and $\widetilde{\rho}$ are arbitrary, we conclude that%
\begin{equation}
D_{\max}^{\varepsilon}(\rho^{k}\Vert\sigma^{k})\leq D_{\max}^{\varepsilon
}(\rho\Vert\sigma),
\end{equation}
and since this inequality holds for all $k$, the following inequality holds%
\begin{equation}
\limsup_{k\rightarrow\infty}D_{\max}^{\varepsilon}(\rho^{k}\Vert\sigma
^{k})\leq D_{\max}^{\varepsilon}(\rho\Vert\sigma).
\label{eq:up-bnd-seq-smooth-D-max}%
\end{equation}

To show the opposite inequality, we first prove that $D_{\max}^{\varepsilon
}(\rho^{k}\Vert\sigma^{k})$ is monotone non-decreasing with $k$. (The proof is
in fact similar to what we have just shown.) Fix $k$ and $k^{\prime}$ such
that $k\leq k^{\prime}$. Let $\widetilde{\rho}^{k^{\prime}}$ be a substate
satisfying $P(\widetilde{\rho}^{k^{\prime}},\rho^{k^{\prime}})\leq\varepsilon$
and let $\lambda^{k^{\prime}}$ be such that $\widetilde{\rho}^{k^{\prime}}%
\leq\lambda^{k^{\prime}}\sigma^{k^{\prime}}$. Then it follows that%
\begin{equation}
\widetilde{\rho}^{k^{\prime},k}\coloneqq \Pi^{k}\widetilde{\rho}^{k^{\prime}}\Pi
^{k}\leq\lambda^{k^{\prime}}\Pi^{k}\sigma^{k^{\prime}}\Pi^{k}=\lambda
^{k^{\prime}}\sigma^{k},
\end{equation}
and%
\begin{equation}
P(\widetilde{\rho}^{k^{\prime},k},\Pi^{k}\rho^{k^{\prime}}\Pi^{k}%
)=P(\widetilde{\rho}^{k^{\prime},k},\rho^{k})\leq\varepsilon,
\end{equation}
so that%
\begin{equation}
D_{\max}^{\varepsilon}(\rho^{k}\Vert\sigma^{k})\leq\log_{2}\lambda^{k^{\prime
}}.
\end{equation}
Since $\lambda^{k^{\prime}}$ and $\widetilde{\rho}^{k^{\prime}}$ are
arbitrary, it follows that%
\begin{equation}
D_{\max}^{\varepsilon}(\rho^{k}\Vert\sigma^{k})\leq D_{\max}^{\varepsilon
}(\rho^{k^{\prime}}\Vert\sigma^{k^{\prime}}).
\end{equation}

Now let $\widetilde{\rho}^{k}$ and $\lambda^{k}$ be the optimal state and
value for $D_{\max}^{\varepsilon}(\rho^{k}\Vert\sigma^{k})$, so that
$\lambda^{k}=D_{\max}^{\varepsilon}(\rho^{k}\Vert\sigma^{k})$. By what we have
just shown, the sequence $\lambda^{k}$ is monotone non-decreasing, and so the
limit $\lambda\coloneqq \lim_{k\rightarrow\infty}\lambda^{k}$ is well defined. Also,
we know from \eqref{eq:up-bnd-seq-smooth-D-max} that $\lambda\leq D_{\max
}^{\varepsilon}(\rho\Vert\sigma)<\infty$. Due to the fact that $\left\Vert
\widetilde{\rho}^{k}\right\Vert _{1}=\operatorname{Tr}[\widetilde{\rho}%
^{k}]\leq\lambda^{k}\operatorname{Tr}[\sigma^{k}]\leq\lambda\operatorname{Tr}%
[\sigma]$, it follows that $\widetilde{\rho}^{k}$ is a bounded sequence in the
trace-class operators. Since the trace-class operators form the dual space of
the compact bounded operators \cite{RS78}, we apply the Banach--Alaoglu
theorem \cite{RS78} to find a subsequence $\left\{  \widetilde{\rho}%
^{k}\right\}  _{k\in S}$ having a weak* limit $\widetilde{\rho}$, which means
that $\operatorname{Tr}[K\widetilde{\rho}^{k}]\rightarrow\operatorname{Tr}%
[K\widetilde{\rho}]$ for $k\in S$ and for all compact bounded operators $K$.
It holds that $\widetilde{\rho}^{k}$ is positive semi-definite. Then it
follows that $\lambda^{k}\sigma^{k}-\widetilde{\rho}^{k}$ converges in the
weak operator topology to $\lambda\sigma-\widetilde{\rho}$. Then we conclude
that $\lambda\sigma-\widetilde{\rho}\geq0$, and finally that $D_{\max
}^{\varepsilon}(\rho\Vert\sigma)\leq\lambda$. So we conclude \eqref{eq:smooth-dmax-primal-limit}.

By strong duality, and the fact that $\rho^{k}$ and $\sigma^{k}$ are
finite-dimensional, it follows from \cite{JN12} that%
\begin{equation}
D_{\max}^{\varepsilon}(\rho^{k}\Vert\sigma^{k})=\widehat{D}_{\max
}^{\varepsilon}(\rho^{k}\Vert\sigma^{k})
\end{equation}
for all $k$. So these equalities and \eqref{eq:smooth-dmax-primal-limit} imply
that%
\begin{equation}
\lim_{k\rightarrow\infty}\widehat{D}_{\max}^{\varepsilon}(\rho^{k}\Vert
\sigma^{k})=D_{\max}^{\varepsilon}(\rho\Vert\sigma).
\label{eq:strong-duality-conseq}%
\end{equation}

Now our goal is to prove that%
\begin{equation}
\lim_{k\rightarrow\infty}\widehat{D}_{\max}^{\varepsilon}(\rho^{k}\Vert
\sigma^{k})\leq\widehat{D}_{\max}^{\varepsilon}(\rho\Vert\sigma),
\label{eq:dual-smooth-max-lim}%
\end{equation}
and we adopt a similar approach as above for doing so. Let $M^{k}$ be an
arbitrary compact bounded operator satisfying $\operatorname{Tr}[M^{k}%
\sigma^{k}]\leq1$ and let $\widetilde{\rho}$ be an arbitrary substate
satisfying $P(\widetilde{\rho},\rho)\leq\varepsilon$. Define $\widetilde{\rho
}^{k}=\Pi^{k}\widetilde{\rho}\Pi^{k}$. Then it follows that%
\begin{align}
\operatorname{Tr}[M^{k}\sigma^{k}]  &  =\operatorname{Tr}[\Pi^{k}M^{k}\Pi
^{k}\sigma]\leq1,\\
P(\widetilde{\rho}^{k},\rho^{k})  &  \leq\varepsilon,
\end{align}
and%
\begin{align}
\inf_{\widehat{\rho}^{k}:P(\widehat{\rho}^{k},\rho^{k})\leq\varepsilon
}\operatorname{Tr}[M^{k}\widehat{\rho}^{k}]  &  \leq\operatorname{Tr}%
[M^{k}\widetilde{\rho}^{k}]\\
&  =\operatorname{Tr}[M^{k}\Pi^{k}\widetilde{\rho}\Pi^{k}]\\
&  =\operatorname{Tr}[\Pi^{k}M^{k}\Pi^{k}\widetilde{\rho}].
\end{align}
Since the inequality holds for all $\widetilde{\rho}$ satisfying
$P(\widetilde{\rho},\rho)\leq\varepsilon$, we conclude that%
\begin{align}
&  \inf_{\widehat{\rho}^{k}:P(\widehat{\rho}^{k},\rho^{k})\leq\varepsilon
}\operatorname{Tr}[M^{k}\widehat{\rho}^{k}]\nonumber\\
&  \leq\inf_{\widetilde{\rho}:P(\widetilde{\rho},\rho)\leq\varepsilon
}\operatorname{Tr}[\Pi^{k}M^{k}\Pi^{k}\widetilde{\rho}]\\
&  \leq\sup_{M\geq0}\inf_{\widetilde{\rho}:P(\widetilde{\rho},\rho
)\leq\varepsilon}\left\{  \operatorname{Tr}[M\widetilde{\rho}%
]:\operatorname{Tr}[M\sigma]\leq1\right\} \\
&  =\widehat{D}_{\max}^{\varepsilon}(\rho\Vert\sigma),
\end{align}
where the second inequality follows because $\Pi^{k}M^{k}\Pi^{k}$ is a
particular compact bounded operator satisfying $\operatorname{Tr}[\Pi^{k}%
M^{k}\Pi^{k}\sigma]\leq1$. Since the inequality above has been shown for an
arbitrary compact bounded operator $M^{k}$\ satisfying $\operatorname{Tr}%
[M^{k}\sigma^{k}]\leq1$, we conclude that%
\begin{multline}
\sup_{M^{k}\geq0}\inf_{\widehat{\rho}^{k}:P(\widehat{\rho}^{k},\rho^{k}%
)\leq\varepsilon}\left\{  \operatorname{Tr}[M^{k}\widehat{\rho}^{k}%
]:\operatorname{Tr}[M^{k}\sigma^{k}]\leq1\right\} \\
=\widehat{D}_{\max}^{\varepsilon}(\rho^{k}\Vert\sigma^{k})\leq\widehat
{D}_{\max}^{\varepsilon}(\rho\Vert\sigma).
\end{multline}
Since this inequality has been shown for arbitrary $k$, we conclude that%
\begin{equation}
\limsup_{k\rightarrow\infty}\widehat{D}_{\max}^{\varepsilon}(\rho^{k}%
\Vert\sigma^{k})\leq\widehat{D}_{\max}^{\varepsilon}(\rho\Vert\sigma).
\label{eq:dual-smooth-dmax-ineq-seq}%
\end{equation}

Finally, by combining \eqref{eq:weak-duality-starting-point},
\eqref{eq:strong-duality-conseq}, and \eqref{eq:dual-smooth-max-lim}, we
conclude that%
\begin{equation}
\widehat{D}_{\max}^{\varepsilon}(\rho\Vert\sigma)=D_{\max}^{\varepsilon}%
(\rho\Vert\sigma).
\end{equation}
This completes the proof.
\end{proof}

\section{Relation between smooth max-relative entropy and hypothesis testing
relative entropy}

\label{app:smooth-dmax-hypo-test}This appendix establishes inequalities
relating the smooth max-relative entropy and the hypothesis testing relative
entropy of states acting on a separable Hilbert space.

\begin{lemma}
\label{lem:smooth-dmax-to-hypo-test}Let $\rho$ be a state and $\sigma$ 
a trace-class positive semi-definite operator  acting
on a separable Hilbert space~$\mathcal{H}$. Let $\varepsilon\in(0,1)$, and set
$\delta\in(0,1)$ such that $\varepsilon+\delta<1$. Suppose that $D_{\max
}^{\sqrt{\varepsilon}}(\rho\Vert\sigma)<\infty$. Then the following bounds
hold%
\begin{align}
D_{\max}^{\sqrt{\varepsilon}}(\rho\Vert\sigma)  &  \leq D_{H}^{1-\varepsilon
}(\rho\Vert\sigma)+\log_{2}\!\left(  \frac{1}{1-\varepsilon}\right)
,\label{eq:smooth-dmax-up-bnd}\\
D_{H}^{1-\varepsilon-\delta}(\rho\Vert\sigma)  &  \leq D_{\max}^{\sqrt
{\varepsilon}}(\rho\Vert\sigma)+\log_{2}\!\left(  \frac{4\left(
1-\varepsilon\right)  }{\delta^{2}}\right)  . \label{eq:smooth-dmax-low-bnd}%
\end{align}

\end{lemma}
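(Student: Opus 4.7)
The plan is to first establish both inequalities in the finite-dimensional setting (where they are essentially standard, due to Tomamichel and others) and then lift them to states on a separable Hilbert space using the monotone truncation technique from the proof of Lemma~\ref{lem:smooth-dmax-primal-dual-equal}. The same sequence of finite-rank projections $\Pi^k \uparrow I$ used there will provide truncated states $\rho^k \coloneqq \Pi^k\rho\Pi^k$ and $\sigma^k \coloneqq \Pi^k\sigma\Pi^k$, and both bounds should pass to the limit.

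For \eqref{eq:smooth-dmax-up-bnd} in finite dimensions, I would fix a near-optimal test $\Lambda$ witnessing $D_H^{1-\varepsilon}(\rho\Vert\sigma)$, so that $0\leq \Lambda\leq I$ with $\operatorname{Tr}[\Lambda\rho]\geq \varepsilon$ and $\operatorname{Tr}[\Lambda\sigma]\leq 2^{-D_H^{1-\varepsilon}(\rho\Vert\sigma)}$. Working with the complementary operator $I-\Lambda$, a gentle-measurement-type computation (using $F(\rho,\tilde\rho)\geq\operatorname{Tr}[(I-\Lambda)\rho]$ for $\tilde\rho\coloneqq (I-\Lambda)^{1/2}\rho(I-\Lambda)^{1/2}/(1-\varepsilon)$) will give $P(\tilde\rho,\rho)\leq\sqrt{\varepsilon}$; a substate-theorem-style argument (or equivalently the block construction used in Tomamichel's book) will then yield the operator bound $\tilde\rho\leq 2^{D_H^{1-\varepsilon}(\rho\Vert\sigma)+\log_2(1/(1-\varepsilon))}\sigma$, from which \eqref{eq:smooth-dmax-up-bnd} follows directly from the definition of $D_{\max}^{\sqrt{\varepsilon}}$. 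For \eqref{eq:smooth-dmax-low-bnd}, I would run the opposite reduction: fix a near-optimal $\tilde\rho$ witnessing $D_{\max}^{\sqrt{\varepsilon}}(\rho\Vert\sigma)$, so that $\tilde\rho\leq 2^{D_{\max}^{\sqrt{\varepsilon}}(\rho\Vert\sigma)}\sigma$ and $P(\tilde\rho,\rho)\leq\sqrt{\varepsilon}$. A Neyman--Pearson-type threshold test formed from the positive part of $\tilde\rho - (1-\varepsilon)^{-1}2^{\mu}\sigma$ for a suitable $\mu$ should give $\operatorname{Tr}[\Lambda\sigma]$ small by operator Markov and $\operatorname{Tr}[\Lambda\rho]\geq 1-(1-\varepsilon)-\delta$ by combining the operator inequality for $\tilde\rho$ with the sine-distance-to-trace-distance bound $\tfrac{1}{2}\|\tilde\rho-\rho\|_1 \leq P(\tilde\rho,\rho)$; the slack parameter $\delta$ tuning the threshold produces the $\log_2(4(1-\varepsilon)/\delta^2)$ overhead.

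To transfer to the separable Hilbert space setting, I would apply the inequalities to $(\rho^k,\sigma^k)$ and take $k\to\infty$, using Lemma~\ref{lem:smooth-dmax-primal-dual-equal} directly for the $D_{\max}^{\sqrt{\varepsilon}}(\rho^k\Vert\sigma^k)\to D_{\max}^{\sqrt{\varepsilon}}(\rho\Vert\sigma)$ limit. For the companion convergence $D_H^{1-\varepsilon}(\rho^k\Vert\sigma^k)\to D_H^{1-\varepsilon}(\rho\Vert\sigma)$ (and likewise for $1-\varepsilon-\delta$), I would mimic the argument in Appendix~\ref{app:duality-smooth-dmax}: the optimization is over the weak-$*$-closed, bounded set $\{0\leq \Lambda\leq I\}$ in the dual of the trace-class operators, so Banach--Alaoglu delivers a subsequential weak-$*$ limit $\Lambda^\infty$ of near-optimal tests $\Lambda^k$; the trace-class nature of $\rho$ and $\sigma$ then makes both $\operatorname{Tr}[\Lambda^k\rho^k]$ and $\operatorname{Tr}[\Lambda^k\sigma^k]$ continuous along the limit, so that monotonicity plus this compactness identifies the limiting value with $D_H^{1-\varepsilon}(\rho\Vert\sigma)$ (using the finiteness assumption $D_{\max}^{\sqrt{\varepsilon}}(\rho\Vert\sigma)<\infty$ to avoid the degenerate case).

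The main obstacle I expect is this $D_H$ convergence step: unlike $D_{\max}$, the definition of $D_H^{1-\varepsilon}$ requires the hard constraint $\operatorname{Tr}[\Lambda\rho]\geq\varepsilon$, which is not automatically preserved under weak-$*$ limits when $\rho$ is replaced by $\rho^k$ (since the tests $\Lambda^k$ may concentrate on the complement of $\Pi^k$). I plan to resolve this by choosing tests of the form $\Pi^k\Lambda^k\Pi^k$ so that the truncated constraint $\operatorname{Tr}[\Lambda^k\rho^k]\geq\varepsilon$ is equivalent to $\operatorname{Tr}[\Pi^k\Lambda^k\Pi^k \rho]\geq\varepsilon$, and then using the monotonicity in $k$ together with Banach--Alaoglu to extract a limit test satisfying $\operatorname{Tr}[\Lambda^\infty\rho]\geq\varepsilon$. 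A secondary (but routine) subtlety is that $\rho^k$ and $\sigma^k$ are only subnormalized, so the analysis must use the extension of $D_H^\varepsilon$ and $D_{\max}^\varepsilon$ to subnormalized states, defined consistently with the sine distance convention of Appendix~\ref{app:duality-smooth-dmax}.
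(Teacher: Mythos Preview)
Your strategy—establish both bounds in finite dimensions and lift via the truncation of Appendix~\ref{app:duality-smooth-dmax}—is a legitimate alternative, but it is not what the paper does and is considerably less direct. The paper proves both inequalities \emph{directly} on the separable Hilbert space. For \eqref{eq:smooth-dmax-low-bnd} the argument is a three-line application of fidelity data processing: given $\widetilde\rho\leq 2^{\lambda}\sigma$ with $P(\widetilde\rho,\rho)\leq\sqrt{\varepsilon}$ and any feasible test $\Lambda$ with $\operatorname{Tr}[(I-\Lambda)\rho]=1-\varepsilon-\delta$, the two-outcome measurement $\{\Lambda,I-\Lambda\}$ gives $\sqrt{1-\varepsilon}\leq\sqrt{F}(\widetilde\rho,\rho)\leq\sqrt{2^{\lambda}\operatorname{Tr}[\Lambda\sigma]}+\sqrt{1-\varepsilon-\delta}$, which rearranges to the claim; no truncation and no $D_H$ convergence are needed. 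For \eqref{eq:smooth-dmax-up-bnd} the paper works through the dual $\widehat D_{\max}^{\sqrt\varepsilon}$ of Lemma~\ref{lem:smooth-dmax-primal-dual-equal}: for each compact $M\geq 0$ with $\operatorname{Tr}[M\sigma]\leq 1$, diagonalise $M$, pass through the induced dephasing channel, use the classical information-spectrum bound $D_H^{1-\varepsilon}\geq D_s^{1-\varepsilon}$ to build a projection $\Pi$ onto the ``bad'' indices, set $\widetilde\rho=(I-\Pi)\rho(I-\Pi)/\operatorname{Tr}[(I-\Pi)\rho]$, and bound $\operatorname{Tr}[M\widetilde\rho]$. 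The finite-dimensional approximation is entirely absorbed into the earlier duality lemma; the $D_H$ side is never truncated.

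Your finite-dimensional sketch for \eqref{eq:smooth-dmax-up-bnd} has a real gap: the state $\widetilde\rho=(I-\Lambda)^{1/2}\rho(I-\Lambda)^{1/2}/(1-\varepsilon)$ built from an \emph{arbitrary} near-optimal test $\Lambda$ does not satisfy $\widetilde\rho\leq c\,\sigma$ for any $c$ determined by $\operatorname{Tr}[\Lambda\sigma]$ alone, and no substate-theorem step repairs this without replacing $\widetilde\rho$ by a different state. (If $\Lambda$ were specifically the Neyman--Pearson projection $\{\rho>c\sigma\}$ an operator bound would follow, but that is a different construction than the one you wrote.) Since the finite-dimensional result is already in \cite{ABJT19}, this is not fatal to your route—just cite it rather than sketch it. Your $D_H$ convergence concern is in fact resolvable as you indicate: truncated optimal tests $\Lambda^k=\Pi^k\Lambda^k\Pi^k$ satisfy $\operatorname{Tr}[\Lambda^k\rho]=\operatorname{Tr}[\Lambda^k\rho^k]$ and $\operatorname{Tr}[\Lambda^k\sigma]=\operatorname{Tr}[\Lambda^k\sigma^k]$ exactly, and both functionals are weak-$*$ continuous on $B(\mathcal H)$ because $\rho,\sigma$ are trace class, so Banach--Alaoglu delivers the needed liminf. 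Thus your programme can be completed, but the paper's direct fidelity argument for \eqref{eq:smooth-dmax-low-bnd} is much shorter and avoids the detour entirely.
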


\begin{proof}
These bounds were established for the finite-dimensional case in
\cite[Theorem~4]{ABJT19}. Here we check that the arguments presented in the
proof of \cite[Theorem~4]{ABJT19} hold for the more general case in the
statement of the lemma, and we combine with the result from
Lemma~\ref{lem:smooth-dmax-primal-dual-equal}.

We first consider the inequality in \eqref{eq:smooth-dmax-up-bnd}. Let
$M\geq0$ be an arbitrary compact bounded operator satisfying
$\operatorname{Tr}[M\sigma]\leq1$. Since $M$ is compact, it has a countable
spectral decomposition as $M=\sum_{i}m_{i}|\phi_{i}\rangle\!\langle\phi_{i}|$,
where $\{m_{i}\}_{i}$ is a countable set of non-negative eigenvalues and
$\{|\phi_{i}\rangle\}_{i}$ is a countable orthonormal basis. We define a
measurement (or completely dephasing)\ channel as
\begin{equation}
\mathcal{M}(\cdot)\coloneqq \sum_{i}|\phi_{i}\rangle\!\langle\phi_{i}|(\cdot)|\phi
_{i}\rangle\!\langle\phi_{i}|,
\end{equation}
and we set the probability distributions $P$ and $Q$ such that
\begin{align}
\mathcal{M} (\rho)  &  =\sum_{i}P(i)|\phi_{i}\rangle\!\langle\phi_{i}|,\\
\mathcal{M} (\sigma)  &  =\sum_{i}Q(i)|\phi_{i}\rangle\!\langle\phi_{i}|.
\end{align}
Then the data processing inequality for the hypothesis testing relative
entropy implies that%
\begin{equation}
D_{H}^{1-\varepsilon}(\rho\Vert\sigma)\geq D_{H}^{1-\varepsilon}(P\Vert Q)\geq
D_{s}^{1-\varepsilon}(P\Vert Q)\coloneqq K,
\end{equation}
where the $\varepsilon$-information spectrum relative entropy of states $\tau$
and $\omega$ is defined as \cite{TH12}
\begin{equation}
D_{s}^{\varepsilon}(\tau\Vert\omega)\coloneqq \sup\left\{  \lambda\in\mathbb{R}%
:\operatorname{Tr}[\tau\left\{  \tau\leq2^{\lambda}\omega\right\}  ]\leq
\varepsilon\right\}  ,
\end{equation}
and $\left\{  \tau\leq2^{\lambda}\omega\right\} $ denotes the projection onto the
positive part of $2^{\lambda}\omega- \tau$. The second inequality above follows by
picking $\Lambda\coloneqq \left\{  \tau>2^{\lambda}\omega\right\}  $ for $\lambda=D_{s}%
^{\varepsilon}(\tau\Vert\omega)-\xi$ and $\xi>0$. Since $\Lambda$ satisfies
$\operatorname{Tr}[\Lambda\tau]\geq1-\varepsilon$, and we also have that%
\begin{align}
\operatorname{Tr}[\Lambda\omega]  &  =\operatorname{Tr}[\left\{  \tau
>2^{\lambda}\omega\right\}  \omega]\\
&  \leq2^{-\lambda}\operatorname{Tr}[\tau\left\{  \tau>2^{\lambda}\omega\right\}  ]\\
&  \leq2^{-\lambda},
\end{align}
it follows from definitions that%
\begin{equation}
\lambda=D_{s}^{\varepsilon}(\tau\Vert\omega)-\xi\leq D_{H}^{\varepsilon}(\tau
\Vert\omega).
\end{equation}
Since the inequality has been shown for all $\xi>0$, it follows that
$D_{s}^{\varepsilon}(\tau\Vert\omega)\leq D_{H}^{\varepsilon}(\tau\Vert
\omega)$. Now set $\eta>0$, and define the set%
\begin{equation}
I\coloneqq \left\{  i:P(i)\leq2^{K+\eta}Q(i)\right\}  ,
\end{equation}
which implies that $P(I)\coloneqq  \sum_{i \in I} P(i) > 1-\varepsilon$ from the
definition of $D_{s}^{1-\varepsilon}(P\Vert Q)$ (indeed, from the definition of $D_{s}^{1-\varepsilon}(P\Vert Q)$, $K$ is the largest possible value such that $\sum_{i : P(i) \leq 2^K Q(i)} P(i) \leq 1-\varepsilon$, so that if it is increased by $\eta > 0 $, then $P(I) >  1-\varepsilon$). Now define the projection
$\Pi\coloneqq \sum_{i\not \in I}|\phi_{i}\rangle\!\langle\phi_{i}|$. Then it follows
that%
\begin{align}
\operatorname{Tr}[\Pi\rho]  &  =\operatorname{Tr}[\mathcal{M}(\Pi)\rho]\\
&  =\operatorname{Tr}[\Pi\mathcal{M}(\rho)]\\
&  =1-P(I)\\
&  \leq\varepsilon.
\end{align}
Set%
\begin{equation}
\widetilde{\rho}\coloneqq \frac{\left(  I-\Pi\right)  \rho\left(  I-\Pi\right)
}{1-\operatorname{Tr}[\Pi\rho]}.
\end{equation}
Then it follows that $P(\rho,\widetilde{\rho})\leq\sqrt{\varepsilon}$ because%
\begin{align}
&  F(\rho,\widetilde{\rho})\nonumber\\
&  =\left\Vert \sqrt{\rho}\sqrt{\widetilde{\rho}}\right\Vert _{1}^{2}\\
&  =\frac{1}{1-\operatorname{Tr}[\Pi\rho]}\left\Vert \sqrt{\rho}\sqrt{\left(
I-\Pi\right)  \rho\left(  I-\Pi\right)  }\right\Vert _{1}^{2}\\
&  =\frac{1}{1-\operatorname{Tr}[\Pi\rho]}\times\nonumber\\
&  \quad\left\Vert \sqrt{\left(  I-\Pi\right)  \rho\left(  I-\Pi\right)
}\sqrt{\left(  I-\Pi\right)  \rho\left(  I-\Pi\right)  }\right\Vert _{1}^{2}\\
&  =\frac{1}{1-\operatorname{Tr}[\Pi\rho]}\left\Vert \left(  I-\Pi\right)
\rho\left(  I-\Pi\right)  \right\Vert _{1}^{2}\\
&  =1-\operatorname{Tr}[\Pi\rho].
\end{align}
Then since $1-\operatorname{Tr}[\Pi\rho] \geq 1-\varepsilon$, it follows that%
\begin{align}
\operatorname{Tr}[M\widetilde{\rho}]  &  =\frac{1}{1-\operatorname{Tr}[\Pi
\rho]}\operatorname{Tr}[M\left(  I-\Pi\right)  \rho\left(  I-\Pi\right)  ]\\
&  =\frac{1}{1-\operatorname{Tr}[\Pi\rho]}\sum_{i\in I}m_{i}P(i)\\
&  \leq\frac{1}{1-\varepsilon}2^{K+\eta}\sum_{i\in I}m_{i}Q(i)\\
&  \leq\frac{1}{1-\varepsilon}2^{K+\eta}\sum_{i}m_{i}Q(i)\\
&  =\frac{1}{1-\varepsilon}2^{K+\eta}\operatorname{Tr}[M\sigma]\\
&  \leq\frac{1}{1-\varepsilon}2^{D_{H}^{1-\varepsilon}(\rho\Vert\sigma)+\eta}.
\end{align}
Taking a logarithm and an infimum over all states $\widetilde{\rho}$
satisfying $P(\rho,\widetilde{\rho})\leq\sqrt{\varepsilon}$, we find that%
\begin{multline}
\log_{2}\inf_{\widetilde{\rho}:P(\rho,\widetilde{\rho})\leq\sqrt{\varepsilon}%
}\operatorname{Tr}[M\widetilde{\rho}]\\
\leq D_{H}^{1-\varepsilon}(\rho\Vert\sigma)+\eta+\log_{2}\!\left(  \frac
{1}{1-\varepsilon}\right)  .
\end{multline}
Since the inequality has been shown for all compact bounded operators $M\geq0$
satisfying $\operatorname{Tr}[M\sigma]\leq1$, it follows that%
\begin{align}
& \widehat{D}_{\max}^{\sqrt{\varepsilon}}(\rho\Vert\sigma)  \nonumber \\
&  =\log_{2}%
\sup_{M\geq0, \operatorname{Tr}[M\sigma]\leq1}\left[\inf_{\widetilde{\rho}:P(\rho,\widetilde{\rho})\leq
\sqrt{\varepsilon}}\operatorname{Tr}[M\widetilde{\rho}]\right]\\
&  \leq D_{H}^{1-\varepsilon}(\rho\Vert\sigma)+\eta+\log_{2}\!\left(  \frac
{1}{1-\varepsilon}\right)  .
\end{align}
Since the inequality has been shown for all $\eta>0$, it follows that%
\begin{equation}
\widehat{D}_{\max}^{\sqrt{\varepsilon}}(\rho\Vert\sigma)\leq D_{H}%
^{1-\varepsilon}(\rho\Vert\sigma)+\log_{2}\!\left(  \frac{1}{1-\varepsilon
}\right)  .
\end{equation}
Now applying Lemma~\ref{lem:smooth-dmax-primal-dual-equal}, we conclude the
inequality in~\eqref{eq:smooth-dmax-up-bnd}.

We now consider the other inequality in \eqref{eq:smooth-dmax-low-bnd}. Let
$\lambda\geq0$ and $\widetilde{\rho}$ be a state satisfying $P(\widetilde
{\rho},\rho)\leq\sqrt{\varepsilon}$ and
\begin{equation}
\widetilde{\rho}\leq2^{\lambda}\sigma. \label{eq:dmax-smooth-unraveled}%
\end{equation}
Consider an arbitrary operator $\Lambda$ satisfying $0\leq\Lambda\leq I$ and
$\operatorname{Tr}[\left(  I-\Lambda\right)  \rho]=1-\varepsilon-\delta$. Then
the data processing inequality for the fidelity, with respect to the
measurement $\{ \Lambda, I -\Lambda\}$, implies that%
\begin{align}
&  \sqrt{1-\varepsilon}\nonumber\\
&  \leq\sqrt{F}(\widetilde{\rho},\rho)\nonumber\\
&  \leq\sqrt{\operatorname{Tr}[\Lambda\widetilde{\rho}]\operatorname{Tr}%
[\Lambda\rho]}+\sqrt{\operatorname{Tr}[\left(  I-\Lambda\right)
\widetilde{\rho}]\operatorname{Tr}[\left(  I-\Lambda\right)  \rho]}\nonumber\\
&  \leq\sqrt{2^{\lambda}\operatorname{Tr}[\Lambda\sigma]}+\sqrt{1-\varepsilon
-\delta}.
\end{align}
Now taking an infimum over all $\lambda$ such that
\eqref{eq:dmax-smooth-unraveled} holds and all $\Lambda$ satisfying
$0\leq\Lambda\leq I$ and $\operatorname{Tr}[\left(  I-\Lambda\right)
\rho]=1-\varepsilon-\delta$, and applying definitions, we find that%
\begin{equation}
\sqrt{1-\varepsilon}\leq\sqrt{2^{D_{\max}^{\sqrt{\varepsilon}}(\rho\Vert
\sigma)}2^{-D_{H}^{1-\varepsilon-\delta}(\rho\Vert\sigma)}}+\sqrt
{1-\varepsilon-\delta}.
\end{equation}
Rewriting this gives%
\begin{multline}
\log_{2}\left[  \sqrt{1-\varepsilon}-\sqrt{1-\varepsilon-\delta}\right]
^{2}\\
\leq D_{\max}^{\sqrt{\varepsilon}}(\rho\Vert\sigma)-D_{H}^{1-\varepsilon
-\delta}(\rho\Vert\sigma).
\end{multline}
Now applying the inequality $\sqrt{1-\varepsilon}-\sqrt{1-\varepsilon-\delta
}\geq\frac{\delta}{2\sqrt{1-\varepsilon}}$, we conclude \eqref{eq:smooth-dmax-low-bnd}.
\end{proof}

\section{Second-order asymptotics of smooth max-relative entropy for states
acting on a separable Hilbert space}

\label{app:2nd-order-smooth-dmax}Recall the quantities defined in
\eqref{eq:rho-sig-spectral-decomps}--\eqref{eq:rel-ent-var-sep}. We also
define%
\begin{align}
\Phi(a)  &  \coloneqq \frac{1}{\sqrt{2\pi}}\int_{-\infty}^{a}dx\ \exp\!\left(
\frac{-x^{2}}{2}\right)  ,\\
\Phi^{-1}(\varepsilon)  &  \coloneqq \sup\left\{  a\in\mathbb{R}\ |\ \Phi
(a)\leq\varepsilon\right\}  .
\end{align}
For $\varepsilon\in(0,1)$, observe that%
\begin{equation}
\Phi^{-1}(1-\varepsilon)=-\Phi^{-1}(\varepsilon).
\label{eq:cumula-normal-symmetry}%
\end{equation}
Let us define the quantity $T(\rho\Vert\sigma)$ \cite{TH12,li12,KW17a} as%
\begin{equation}
T(\rho\Vert\sigma)\coloneqq \sum_{x,y}\left\vert \langle\phi_{y}|\psi_{x}%
\rangle\right\vert ^{2}\lambda_{x}\left\vert \log_{2}\!\left(  \frac
{\lambda_{x}}{\mu_{y}}\right)  -D(\rho\Vert\sigma)\right\vert ^{3},
\label{eq:T-quantity}%
\end{equation}
where the eigendecompositions of the state $\rho$ and the trace-class positive semi-definite operator $\sigma$ are given in
\eqref{eq:rho-sig-spectral-decomps}. Then the following expansion holds for
$\rho$ and $\sigma$ such that $D(\rho\Vert\sigma),V(\rho\Vert\sigma
),T(\rho\Vert\sigma)<\infty$ and $V(\rho\Vert\sigma)>0$:%
\begin{multline}
\frac{1}{n}D_{H}^{\varepsilon}(\rho^{\otimes n}\Vert\sigma^{\otimes n})=\\
D(\rho\Vert\sigma)+\sqrt{\frac{1}{n}V(\rho\Vert\sigma)}\Phi^{-1}%
(\varepsilon)+O\!\left(  \frac{\log n}{n}\right)  .
\label{eq:hypo-test-rel-ent-2nd-order}%
\end{multline}
The equality in \eqref{eq:hypo-test-rel-ent-2nd-order} was shown for the
finite-dimensional case in \cite{li12,TH12}. For a state $\rho$ and positive-semidefinite trace-class $\sigma$ acting on a
separable Hilbert space, the inequality $\leq$ was shown in \cite{DPR15,KW17a}%
, while the inequality $\geq$ was established in \cite{li12,DPR15,OMW19}. The
term $O\!\left(  \frac{\log n}{n}\right)  $ above hides constants involving
$V(\rho\Vert\sigma)$, $T(\rho\Vert\sigma)$, and $\varepsilon$.

By combining the equality in \eqref{eq:hypo-test-rel-ent-2nd-order} with
Lemma~\ref{lem:smooth-dmax-to-hypo-test}, we arrive at the following
corollary, giving the second-order asymptotics for smooth max-relative entropy
of trace-class operators acting on a separable Hilbert space:

\begin{corollary}
\label{cor-2nd-order-smooth-dmax}Let $\varepsilon\in(0,1)$. Let $\rho$ be a state and
$\sigma$ a trace-class positive semi-definite operator acting on a separable Hilbert space $\mathcal{H}$, such
that $D(\rho\Vert\sigma),V(\rho\Vert\sigma),T(\rho\Vert\sigma)<\infty$ and
$V(\rho\Vert\sigma)>0$. Then the following second-order expansion holds for
sufficiently large $n$:%
\begin{multline}
\frac{1}{n}D_{\max}^{\sqrt{\varepsilon}}(\rho^{\otimes n}\Vert\sigma^{\otimes
n})=\\
D(\rho\Vert\sigma)-\sqrt{\frac{1}{n}V(\rho\Vert\sigma)}\Phi^{-1}%
(\varepsilon)+O\!\left(  \frac{\log n}{n}\right)  .
\end{multline}

\end{corollary}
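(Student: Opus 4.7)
The plan is to sandwich $D_{\max}^{\sqrt{\varepsilon}}(\rho^{\otimes n}\Vert\sigma^{\otimes n})$ between two hypothesis-testing quantities via Lemma~\ref{lem:smooth-dmax-to-hypo-test} applied to the tensor-power pair $(\rho^{\otimes n},\sigma^{\otimes n})$, and then to plug in the known second-order expansion \eqref{eq:hypo-test-rel-ent-2nd-order}, whose extension to the separable-Hilbert-space setting is guaranteed by \cite{DPR15,KW17a,OMW19} under the stated finiteness and non-degeneracy hypotheses on $D$, $V$, $T$. The symmetry relation $\Phi^{-1}(1-\varepsilon)=-\Phi^{-1}(\varepsilon)$ from \eqref{eq:cumula-normal-symmetry} will account for the sign flip of the second-order term in going from $D_H$ to $D_{\max}^{\sqrt{\varepsilon}}$.

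For the upper bound, I would substitute $\rho\mapsto\rho^{\otimes n}$, $\sigma\mapsto\sigma^{\otimes n}$ in \eqref{eq:smooth-dmax-up-bnd}, divide by $n$, and invoke \eqref{eq:hypo-test-rel-ent-2nd-order} with error parameter $1-\varepsilon$ to obtain
\begin{align}
\tfrac{1}{n}D_{\max}^{\sqrt{\varepsilon}}(\rho^{\otimes n}\Vert\sigma^{\otimes n}) & \leq \tfrac{1}{n}D_{H}^{1-\varepsilon}(\rho^{\otimes n}\Vert\sigma^{\otimes n})+\tfrac{1}{n}\log_{2}\tfrac{1}{1-\varepsilon} \notag \\
& = D(\rho\Vert\sigma) - \sqrt{\tfrac{V(\rho\Vert\sigma)}{n}}\,\Phi^{-1}(\varepsilon) + O\!\left(\tfrac{\log n}{n}\right),
\end{align}
where $\frac{1}{n}\log_{2}\frac{1}{1-\varepsilon}=O(1/n)$ is absorbed. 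This also confirms that $D_{\max}^{\sqrt{\varepsilon}}(\rho^{\otimes n}\Vert\sigma^{\otimes n})<\infty$ for sufficiently large $n$, which is the side-hypothesis needed to invoke the other direction of the lemma.

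For the lower bound, I would substitute into \eqref{eq:smooth-dmax-low-bnd} with a vanishing smoothing slack $\delta_n=1/\sqrt{n}$, giving
\begin{equation}
\tfrac{1}{n}D_{\max}^{\sqrt{\varepsilon}}(\rho^{\otimes n}\Vert\sigma^{\otimes n}) \geq \tfrac{1}{n}D_{H}^{1-\varepsilon-\delta_{n}}(\rho^{\otimes n}\Vert\sigma^{\otimes n}) - \tfrac{1}{n}\log_{2}\tfrac{4(1-\varepsilon)}{\delta_{n}^{2}}.
\end{equation}
The subtracted term is $\frac{1}{n}\log_{2}(4n(1-\varepsilon))=O(\log n/n)$. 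Expanding the hypothesis-testing term via \eqref{eq:hypo-test-rel-ent-2nd-order} produces a $\Phi^{-1}(1-\varepsilon-\delta_n)$ coefficient; since $\Phi^{-1}$ is $C^{1}$ near $1-\varepsilon$ for $\varepsilon\in(0,1)$, a first-order Taylor expansion yields $\Phi^{-1}(1-\varepsilon-\delta_n)=\Phi^{-1}(1-\varepsilon)+O(\delta_n)$, so the perturbation contributes $\sqrt{V/n}\cdot O(\delta_n)=O(1/n)$, which is subsumed by $O(\log n/n)$. Applying \eqref{eq:cumula-normal-symmetry} then yields the matching lower bound, completing the sandwich.

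The main obstacle is essentially bookkeeping: I must verify that the hidden constant in the $O(\log n/n)$ term of \eqref{eq:hypo-test-rel-ent-2nd-order} remains bounded as the HT error parameter is perturbed by $\delta_n=1/\sqrt{n}$ around $1-\varepsilon$, and that all the ``slack'' terms---the $\log_2(1/(1-\varepsilon))$, the $\log_{2}(4n(1-\varepsilon))$, and the Taylor remainder---aggregate cleanly into a single $O(\log n/n)$ correction. This is handled by noting that the constant in \eqref{eq:hypo-test-rel-ent-2nd-order} depends continuously on the error parameter through $\Phi^{-1}$ and its derivative (bounded on a compact neighborhood of $1-\varepsilon$) and on the fixed moment-type quantities $V(\rho\Vert\sigma)$ and $T(\rho\Vert\sigma)$. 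All the genuinely hard work---proving the duality in Lemma~\ref{lem:smooth-dmax-primal-dual-equal}, deriving Lemma~\ref{lem:smooth-dmax-to-hypo-test}, and extending \eqref{eq:hypo-test-rel-ent-2nd-order} to separable Hilbert spaces---has already been carried out in the preceding appendices.
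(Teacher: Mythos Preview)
Your proposal is correct and follows essentially the same approach as the paper: both directions use Lemma~\ref{lem:smooth-dmax-to-hypo-test} applied to tensor powers, the upper bound via \eqref{eq:smooth-dmax-up-bnd} with the symmetry \eqref{eq:cumula-normal-symmetry}, and the lower bound via \eqref{eq:smooth-dmax-low-bnd} with the choice $\delta=1/\sqrt{n}$ together with a first-order Taylor expansion of $\Phi^{-1}$ around $1-\varepsilon$. Your explicit remark that the upper bound furnishes the finiteness hypothesis needed to invoke \eqref{eq:smooth-dmax-low-bnd}, and your attention to uniform boundedness of the hidden constants under the $\delta_n$-perturbation, are points the paper leaves implicit.
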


\begin{proof}
Exploiting the inequality in \eqref{eq:smooth-dmax-up-bnd}, we find that
\begin{align}
&  \frac{1}{n}D_{\max}^{\sqrt{\varepsilon}}(\rho^{\otimes n}\Vert
\sigma^{\otimes n})\nonumber\\
&  \leq\frac{1}{n}D_{H}^{1-\varepsilon}(\rho^{\otimes n}\Vert\rho^{\otimes
n})+\frac{1}{n}\log_{2}\!\left(  \frac{1}{1-\varepsilon}\right) \\
&  =D(\rho\Vert\sigma)+\sqrt{\frac{1}{n}V(\rho\Vert\sigma)}\Phi^{-1}%
(1-\varepsilon)+O\!\left(  \frac{\log n}{n}\right) \\
&  =D(\rho\Vert\sigma)-\sqrt{\frac{1}{n}V(\rho\Vert\sigma)}\Phi^{-1}%
(\varepsilon)+O\!\left(  \frac{\log n}{n}\right)  .
\end{align}
where the first equality follows from \eqref{eq:hypo-test-rel-ent-2nd-order}
and the last equality follows from \eqref{eq:cumula-normal-symmetry}. Now
exploiting the inequality in \eqref{eq:smooth-dmax-low-bnd} and choosing
$\delta=1/\sqrt{n}$, we find that%
\begin{align}
&  \frac{1}{n}D_{\max}^{\sqrt{\varepsilon}}(\rho^{\otimes n}\Vert
\sigma^{\otimes n})\nonumber\\
&  \geq\frac{1}{n}D_{H}^{1-\varepsilon-\delta}(\rho^{\otimes n}\Vert
\sigma^{\otimes n})-\frac{1}{n}\log_{2}\!\left(  \frac{4\left(  1-\varepsilon
\right)  }{\delta^{2}}\right) \nonumber\\
&  =D(\rho\Vert\sigma)+\sqrt{\frac{1}{n}V(\rho\Vert\sigma)}\Phi^{-1}%
(1-\varepsilon)+O\!\left(  \frac{\log n}{n}\right) \nonumber\\
&  =D(\rho\Vert\sigma)-\sqrt{\frac{1}{n}V(\rho\Vert\sigma)}\Phi^{-1}%
(\varepsilon)+O\!\left(  \frac{\log n}{n}\right)  .
\end{align}
In the first equality, we have invoked a standard step from \cite[Footnote~6]%
{TH12} applied to $\Phi^{-1}(1-\varepsilon-\delta)$, which is an invocation of
Taylor's theorem: for $f$ continuously differentiable, $c$ a positive
constant, and $n \geq n_{0}$, the following equality holds
\begin{equation}
\sqrt{n} f(x + c/\sqrt{n}) = \sqrt{n}f(x) + c f^{\prime}(a)
\end{equation}
for some $a \in[x,x+c/\sqrt{n_{0}}]$. This concludes the proof.
\end{proof}

\section{Mutual information variance and $T$ quantity of classical--quantum
states}

\label{app:mut-info-var-and-T}

\begin{proposition}
Given is a classical--quantum state of the following form:%
\begin{equation}
\rho_{XB}\coloneqq \sum_{x}p_{X}(x)|x\rangle\!\langle x|_{X}\otimes\rho_{B}^{x}.
\end{equation}
Suppose that $\rho_{B}^{x}$ and $\rho_{B}\coloneqq \sum_{x}p_{X}(x)\rho_{B}^{x}$ have
the following spectral decompositions:%
\begin{align}
\rho_{B}^{x}  &  =\sum_{y}p(y|x)|\psi^{x,y}\rangle\!\langle\psi^{x,y}|_{B},\\
\rho_{B}  &  =\sum_{z}q(z)|\phi^{z}\rangle\!\langle\phi^{z}|_{B},
\end{align}
for $p(y|x)$ a conditional probability distribution, $\{|\psi^{x,y}\rangle
_{B}\}_{y}$ an orthonormal set of eigenvectors (for fixed $x$), $q(z)$ a
probability distribution, and $\{|\phi^{z}\rangle_{B}\}_{z}$ an orthonormal
set of eigenvectors. Then%
\begin{align}
&  V(X;B)_{\rho}\nonumber\\
&  =\sum_{x}p_{X}(x)\left(  V(\rho_{B}^{x}\Vert\rho_{B})+\left[  D(\rho
_{B}^{x}\Vert\rho_{B})\right]  ^{2}\right)  -\left[  I(X;B)\right]
^{2}\label{eq:rel-ent-var-cq}\\
&  =\sum_{x}p(x)\sum_{y,z}\left\vert \langle\psi^{x,y}|\phi^{z}\rangle
_{B}\right\vert ^{2}p(y|x)\left\vert f(x,y,z)\right\vert ^{2},
\label{eq:rel-ent-var-cq-other}%
\end{align}%
\begin{multline}
T(X;B)_{\rho}=\\
\sum_{x}p(x)\sum_{y,z}\left\vert \langle\psi^{x,y}|\phi^{z}\rangle
_{B}\right\vert ^{2}p(y|x)\left\vert f(x,y,z)\right\vert ^{3},
\end{multline}
where%
\begin{equation}
f(x,y,z)\coloneqq \log_{2}\left(  p(y|x)/q(z)\right)  -I(X;B).
\end{equation}

\end{proposition}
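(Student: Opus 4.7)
The plan is to obtain all three formulas by directly applying the separable-Hilbert-space definitions in \eqref{eq:rel-ent-sep}--\eqref{eq:rel-ent-var-sep} and \eqref{eq:T-quantity} to the tensor-product state $\rho_X \otimes \rho_B$ and the classical--quantum state $\rho_{XB}$, and then collapsing the $x$-summation using orthogonality of $\{|x\rangle\}$. The first step is to read off explicit spectral decompositions: $\rho_{XB}=\sum_{x,y} p_X(x)p(y|x)\,|x\rangle\!\langle x|_X\otimes|\psi^{x,y}\rangle\!\langle\psi^{x,y}|_B$ with eigenvalues $p_X(x)p(y|x)$ and eigenvectors $|x\rangle|\psi^{x,y}\rangle$, while $\rho_X\otimes\rho_B=\sum_{x',z} p_X(x')q(z)\,|x'\rangle\!\langle x'|_X\otimes|\phi^z\rangle\!\langle\phi^z|_B$ with eigenvectors $|x'\rangle|\phi^z\rangle$. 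The key simplification is that $\langle x'|x\rangle=\delta_{x,x'}$, which makes the overlap factor in the general formulas reduce to $|\langle\psi^{x,y}|\phi^z\rangle|^2$ and, crucially, cancels the $p_X(x)/p_X(x')$ ratio inside the logarithm.

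Having set this up, I would first derive the identity \eqref{eq:rel-ent-var-cq-other}. Applying \eqref{eq:rel-ent-var-sep} to $\rho_{XB}$ and $\rho_X\otimes\rho_B$, and using the cancellation above, gives immediately
\begin{equation}
V(X;B)_\rho=\sum_{x}p_X(x)\sum_{y,z}|\langle\psi^{x,y}|\phi^{z}\rangle|^{2}\,p(y|x)\,|f(x,y,z)|^{2},
\end{equation}
with $f(x,y,z)=\log_2(p(y|x)/q(z))-I(X;B)$, because $D(\rho_{XB}\Vert\rho_X\otimes\rho_B)=I(X;B)$. The formula for $T(X;B)_\rho$ follows by the identical manipulation of \eqref{eq:T-quantity}, the only change being a cube instead of a square in the last factor.

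To obtain \eqref{eq:rel-ent-var-cq} from \eqref{eq:rel-ent-var-cq-other}, I would expand $|f(x,y,z)|^{2}=[\log_2(p(y|x)/q(z))]^{2}-2I(X;B)\log_2(p(y|x)/q(z))+[I(X;B)]^{2}$ and evaluate each of the three resulting sums. Using the marginal decomposition $\rho_B^x=\sum_y p(y|x)|\psi^{x,y}\rangle\!\langle\psi^{x,y}|$, the first sum equals $\sum_x p_X(x)\bigl(V(\rho_B^x\Vert\rho_B)+[D(\rho_B^x\Vert\rho_B)]^{2}\bigr)$ by \eqref{eq:rel-ent-sep}--\eqref{eq:rel-ent-var-sep}; the second sum equals $-2I(X;B)\sum_x p_X(x)D(\rho_B^x\Vert\rho_B)=-2[I(X;B)]^{2}$ by the Holevo-style expression for mutual information in \eqref{eq:Holevo-info-def-rel-ent}; and the third sum equals $[I(X;B)]^{2}$ since $\sum_{y,z}|\langle\psi^{x,y}|\phi^{z}\rangle|^{2}p(y|x)=\sum_y p(y|x)=1$. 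Combining these three contributions yields \eqref{eq:rel-ent-var-cq}.

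No step is really an obstacle in the finite case; the only point requiring a brief justification in the separable-Hilbert-space setting is the interchange of summations and the use of $\sum_z|\phi^z\rangle\!\langle\phi^z|=I_B$ when computing $\sum_z|\langle\psi^{x,y}|\phi^z\rangle|^{2}=1$. This is legitimate because the sums defining $V(\rho\Vert\sigma)$ and $T(\rho\Vert\sigma)$ via \eqref{eq:rel-ent-var-sep} and \eqref{eq:T-quantity} converge absolutely whenever the quantities are finite, which is the implicit standing assumption in the applications made earlier in the paper.
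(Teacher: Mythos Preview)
Your proposal is correct. The spectral-decomposition argument for \eqref{eq:rel-ent-var-cq-other} and the $T$-formula is exactly what the paper does: write the eigenvectors of $\rho_{XB}$ and $\rho_X\otimes\rho_B$ as tensor products, use $\langle x'|x\rangle=\delta_{x,x'}$ to collapse the double sum, and observe that the $p_X(x)/p_X(x')$ factor inside the logarithm cancels.

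The only difference lies in how \eqref{eq:rel-ent-var-cq} is obtained. The paper works at the operator level: it computes $\log_2\rho_{XB}-\log_2(\rho_X\otimes\rho_B)=\sum_x|x\rangle\!\langle x|_X\otimes(\log_2\rho_B^x-\log_2\rho_B)$ directly from the block-diagonal structure, squares it against $\rho_{XB}$, and then rewrites $\operatorname{Tr}[\rho_B^x(\log_2\rho_B^x-\log_2\rho_B)^2]$ as $V(\rho_B^x\Vert\rho_B)+[D(\rho_B^x\Vert\rho_B)]^2$. You instead start from the spectral expression \eqref{eq:rel-ent-var-cq-other} and expand $|f(x,y,z)|^2$ into three pieces, identifying them with $V+D^2$, $-2I^2$, and $I^2$ respectively. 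Both routes are short and rely on the same underlying block structure; your version has the minor advantage of staying entirely within the spectral definitions \eqref{eq:rel-ent-sep}--\eqref{eq:rel-ent-var-sep}, which is the formulation the paper commits to in the separable-Hilbert-space setting, whereas the paper's operator-calculus step implicitly appeals to the trace formulation \eqref{eq:rel-ent-q-var} and then interprets it spectrally.
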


\begin{proof}
To see the first expression in \eqref{eq:rel-ent-var-cq}, consider that%
\begin{align}
&  \log_{2}\rho_{XB}-\log_{2}\rho_{X}\otimes\rho_{B}\nonumber\\
&  =\log_{2}\left[  \sum_{x}|x\rangle\!\langle x|_{X}\otimes p_{X}(x)\rho
_{B}^{x}\right] \nonumber\\
&  \qquad\qquad-\log_{2}\left[  \sum_{x}|x\rangle\!\langle x|_{X}\otimes
p_{X}(x)\rho_{B}\right] \nonumber\\
&  =\sum_{x}|x\rangle\!\langle x|_{X}\otimes\left(  \log_{2}\left[  p_{X}%
(x)\rho_{B}^{x}\right]  -\log_{2}\left[  p_{X}(x)\rho_{B}\right]  \right)
\nonumber\\
&  =\sum_{x}|x\rangle\!\langle x|_{X}\otimes\left(  \log_{2}\rho_{B}^{x}%
-\log_{2}\rho_{B}\right)  ,
\end{align}
and we find that%
\begin{multline}
\left[  \log_{2}\rho_{XB}-\log_{2}\rho_{X}\otimes\rho_{B}-I(X;B)\right]
^{2}\\
=\sum_{x}|x\rangle\!\langle x|_{X}\otimes\left(  \log_{2}\rho_{B}^{x}-\log
_{2}\rho_{B}-I(X;B)\right)  ^{2},
\end{multline}
so that%
\begin{align}
&  \operatorname{Tr}[\rho_{XB}\left(  \log_{2}\rho_{XB}-\log_{2}\rho
_{X}\otimes\rho_{B}-I(X;B)\right)  ^{2}]\nonumber\\
&  =\operatorname{Tr}\left[
\begin{array}
[c]{c}%
\left(  \sum_{x^{\prime}}p_{X}(x^{\prime})|x^{\prime}\rangle\!\langle x^{\prime
}|_{X}\otimes\rho_{B}^{x^{\prime}}\right)  \times\notag\\
\left(  \sum_{x}|x\rangle\!\langle x|_{X}\otimes\left(  \log_{2}\rho_{B}%
^{x}-\log_{2}\rho_{B}-I(X;B)\right)  ^{2}\right)
\end{array}
\right] \nonumber\\
&  =\sum_{x}p_{X}(x)\operatorname{Tr}[\rho_{B}^{x}\left(  \log_{2}\rho_{B}%
^{x}-\log_{2}\rho_{B}-I(X;B)\right)  ^{2}]\nonumber\\
&  =\sum_{x}p_{X}(x)\operatorname{Tr}[\rho_{B}^{x}\left(  \log_{2}\rho_{B}%
^{x}-\log_{2}\rho_{B}\right)  ^{2}]-\left[  I(X;B)\right]  ^{2}.
\label{eq:MI-variance-cq}%
\end{align}
Now consider that%
\begin{align}
&  \operatorname{Tr}[\rho_{B}^{x}\left(  \log_{2}\rho_{B}^{x}-\log_{2}\rho
_{B}\right)  ^{2}]\nonumber\\
&  =\operatorname{Tr}[\rho_{B}^{x}\left(  \log_{2}\rho_{B}^{x}-\log_{2}%
\rho_{B}\right)  ^{2}]\nonumber\\
&  \qquad\qquad-\left[  D(\rho_{B}^{x}\Vert\rho_{B})\right]  ^{2}+\left[
D(\rho_{B}^{x}\Vert\rho_{B})\right]  ^{2}\nonumber\\
&  =\operatorname{Tr}[\rho_{B}^{x}\left(  \log_{2}\rho_{B}^{x}-\log_{2}%
\rho_{B}-D(\rho_{B}^{x}\Vert\rho_{B})\right)  ^{2}]\nonumber\\
&  \qquad\qquad+\left[  D(\rho_{B}^{x}\Vert\rho_{B})\right]  ^{2}\nonumber\\
&  =V(\rho_{B}^{x}\Vert\rho_{B})+\left[  D(\rho_{B}^{x}\Vert\rho_{B})\right]
^{2}.
\end{align}
Substituting the last line back in \eqref{eq:MI-variance-cq} gives the formula
in \eqref{eq:rel-ent-var-cq}.

To see the other expressions, consider that%
\begin{align}
&  T(\rho_{XB}\Vert\rho_{X}\otimes\rho_{B})\nonumber\\
&  =\sum_{x,x^{\prime},y,z}\left\vert \left(  \langle x|_{X}\otimes\langle
\psi^{x,y}|_{B}\right)  \left(  |x^{\prime}\rangle_{X}\otimes|\phi^{z}%
\rangle_{B}\right)  \right\vert ^{2}p(x)p(y|x)\times\nonumber\\
&  \qquad\qquad\left\vert \log_{2}\left(  p(x)p(y|x)/\left[  p(x^{\prime
})q(z)\right]  \right)  -I(X;B)\right\vert ^{3}\\
&  =\sum_{x,y,z}\left\vert \langle\psi^{x,y}|\phi^{z}\rangle_{B}\right\vert
^{2}p(x)p(y|x)\left\vert f(x,y,z)\right\vert ^{3}\\
&  =\sum_{x}p(x)\sum_{y,z}\left\vert \langle\psi^{x,y}|\phi^{z}\rangle
_{B}\right\vert ^{2}p(y|x)\left\vert f(x,y,z)\right\vert ^{3}.
\end{align}
By employing a similar development as above, the formula in
\eqref{eq:rel-ent-var-cq-other} is then clear.
\end{proof}

\section{Holevo information and Holevo information variance of a Gaussian
ensemble of Gaussian states}

\label{app:Holevo-info-and-var-Gaussian}

In what follows, we determine formulas for the Holevo information and Holevo
information variance of a Gaussian ensemble of Gaussian states. The first is
well known, but the expression we give for it seems to be novel. The latter
has not been presented yet to our knowledge. Before presenting the formulas,
we provide a brief review of quantum Gaussian states in order to set some
notation. We refer to \cite{Ser17} for an in-depth overview of quantum
Gaussian states and measurements.

We consider $m$-mode Gaussian states, where $m$ is some fixed positive
integer. Let $\hat{r}_{j}$ denote each quadrature operator ($2m$ of them for
an $m$-mode state), and let%
\begin{align}
\hat{r}  &  \equiv\left[  \hat{x}_{1},\hat{p}_{1},\ldots,\hat{x}_{m},\hat
{p}_{m}\right] \\
&  \equiv\left[  \hat{r}_{1},\ldots,\hat{r}_{2m}\right]
\end{align}
denote the vector of quadrature operators, so that the odd entries correspond
to position-quadrature operators and the even entries to momentum-quadrature
operators. The quadrature operators satisfy the following commutation
relations:%
\begin{equation}
\left[  \hat{r}_{j},\hat{r}_{k}\right]  =i\Omega_{j,k},\quad\mathrm{where}%
\quad\Omega\coloneqq I_{m}\otimes%
\begin{bmatrix}
0 & 1\\
-1 & 0
\end{bmatrix}
\text{,} \label{eq:symplectic-form}%
\end{equation}
and $I_{m}$ is the $m\times m$ identity matrix. We also take the annihilation
operator $\hat{a}=\left(  \hat{x}+i\hat{p}\right)  /\sqrt{2}$.

Let $\rho$ be a Gaussian state, with the mean-vector entries%
\begin{equation}
\mu_{j}\coloneqq \left\langle \hat{r}_{j}\right\rangle _{\rho},
\end{equation}
and let $\mu$ denote the mean vector. The entries of the covariance matrix
$V$\ of $\rho$ are given by
\begin{equation}
V_{j,k}\equiv\left\langle \left\{  \hat{r}_{j}-\mu_{j},\hat{r}_{k}-\mu
_{k}\right\}  \right\rangle _{\rho},\label{eq:covariance-matrices}%
\end{equation}
and they satisfy the uncertainty principle $V+i\Omega\geq0$. A $2m\times2m$
matrix $S$ is symplectic if it preserves the symplectic form: $S\Omega
S^{T}=\Omega$. According to Williamson's theorem \cite{W36}, there is a
diagonalization of the covariance matrix $V$ of the form,
\begin{equation}
V=S\left(  D\otimes I_{2}\right)  S^{T},
\end{equation}
where $S$ is a symplectic matrix and $D\equiv\operatorname{diag}(\nu
_{1},\ldots,\nu_{m})$ is a diagonal matrix of symplectic eigenvalues such that
$\nu_{i}\geq1$ for all $i\in\left\{  1,\ldots,m\right\}  $. We say that a
quantum Gaussian state is faithful if all of its symplectic eigenvalues are
strictly greater than one (this also means that the state is positive
definite). We can write the density operator $\rho$ of a faithful state in the
following exponential form \cite{PhysRevA.71.062320,K06,H10} (see also
\cite{H13book,Ser17}):
\begin{align}
&  \rho=Z^{-1/2}\exp\left[  -\frac{1}{2}(\hat{r}-\mu)^{T}G(\hat{r}%
-\mu)\right]  ,\label{eq:exp-form}\\
&  \mathrm{with}\quad Z\coloneqq \det(\left[  V+i\Omega\right]  /2)\\
&  \mathrm{and}\quad G\coloneqq -2\Omega S\left[  \operatorname{arcoth}(D)\otimes
I_{2}\right]  S^{T}\Omega,\label{eq:G_rho}%
\end{align}
where $\operatorname{arcoth}(x)\equiv\frac{1}{2}\ln\!\left(  \frac{x+1}%
{x-1}\right)  $ with domain $\left(  -\infty,-1\right)  \cup\left(
1,+\infty\right)  $. The matrix $G$ is known as the Hamiltonian matrix
\cite{Ser17}. Note that we can also write%
\begin{equation}
G=2i\Omega\operatorname{arcoth}(iV\Omega),\label{eq:more-compact-G-rho}%
\end{equation}
so that $G$ is represented directly in terms of the covariance matrix $V$.
Faithfulness of Gaussian states is required to ensure that $G$ is
non-singular. By inspection, the Hamiltonian matrix $G$ and the covariance
matrix $V$ are symmetric.

Let $\mu^{\rho}$, $V^{\rho}$, $G^{\rho}$, $Z^{\rho}$ denote the various
quantities above for an $m$-mode quantum Gaussian state $\rho$, and let
$\mu^{\sigma}$, $V^{\sigma}$, $G^{\sigma}$, $Z^{\sigma}$ denote the various
quantities for an $m$-mode quantum Gaussian state $\sigma$. Then the relative
entropy $D(\rho\Vert\sigma)$ is given by the following formula:%
\begin{equation}
D(\rho\Vert\sigma)=\frac{1}{2\ln2}\left[  \ln\!\left(  \frac{Z_{\sigma}%
}{Z_{\rho}}\right)  +\frac{1}{2}\operatorname{Tr}[V^{\rho}\Delta]+\delta
^{T}G^{\sigma}\delta\right]  . \label{eq:rel-ent-Gaussian}%
\end{equation}
The formula in \eqref{eq:rel-ent-Gaussian} was established by
\cite{PhysRevA.71.062320} for the zero-mean case, by \cite{K06} for the case
of non-zero mean but equal covariance matrices, and then extended by
\cite{PLOB15} to the case of general multi-mode Gaussian states. The relative
entropy variance is given by the following formula \cite{BLTW16}:
\begin{multline}
V(\rho\Vert\sigma)=\frac{1}{8\ln^{2}2}\left[  \operatorname{Tr}[\left(  \Delta
V^{\rho}\right)  ^{2}]+\operatorname{Tr}[\left(  \Delta\Omega\right)
^{2}]\right] \label{eq:rel-ent-var-Gaussian}\\
+\frac{1}{2\ln^{2}2}\delta^{T}G^{\sigma}V^{\rho}G^{\sigma}\delta,
\end{multline}
where%
\begin{align}
\delta &  \coloneqq \mu^{\rho}-\mu^{\sigma},\\
\Delta &  \coloneqq G^{\sigma}-G^{\rho}.
\end{align}

Recall from \eqref{eq:Holevo-info-def-rel-ent} and
\eqref{eq:holevo-info-var-reverse} that the Holevo information and Holevo
information variance of a general ensemble $\{p_{Y}(y),\rho_{E}^{y}\}_{y}$ are
given by the following, with sums replaced by integrals:%
\begin{align}
I(Y;E) &  =\int dy\ p_{Y}(y)D(\rho_{E}^{y}\Vert\rho_{E}),\\
V(Y;E) &  =\int dy\ p_{Y}(y)\left[  V(\rho_{E}^{y}\Vert\rho_{E})+\left(
D(\rho_{E}^{y}\Vert\rho_{E})\right)  ^{2}\right]  \nonumber\\
&  \qquad-\left[  I(Y;E)\right]  ^{2},
\end{align}
where%
\begin{equation}
\rho_{E}\coloneqq \int dy\ p_{Y}(y)\rho_{E}^{y}.\label{eq:rho_E_gaussian}%
\end{equation}
Then the proof of Proposition~\ref{prop:gaussian-formulas-HI-Hi-var}\ is as follows:

\bigskip

\begin{proof}
[Proof of Proposition~\ref{prop:gaussian-formulas-HI-Hi-var}]Recall
Definition~\ref{def:gaussian-ensemble}\ for a Gaussian ensemble of quantum
Gaussian states. First, consider that the expected density operator $\rho_{E}$
in \eqref{eq:rho_E_gaussian} is a quantum Gaussian state because it is a
Gaussian mixture of Gaussian states. Furthermore, the entries $\mu_{E}^{j}$ of
the mean vector $\mu_{E}$ of $\rho_{E}$ are given by%
\begin{align}
\mu_{E}^{j} &  \coloneqq \operatorname{Tr}[\hat{r}_{j}\rho_{E}]\\
&  =\int dy\ p_{Y}(y)\operatorname{Tr}[\hat{r}_{j}\rho_{E}^{y}]\\
&  =\int dy\ p_{Y}(y)\left(  \left[  Wy\right]  _{j}+\nu_{j}\right)  \\
&  =\left[  W\mu+\nu\right]  _{j},
\end{align}
so that%
\begin{equation}
\mu_{E}=W\mu+\nu.\label{eq:mean-of-avg-E}%
\end{equation}
The entries $V_{E}^{jk}$ of the covariance matrix $V_{E}$\ of $\rho_{E}$ are
given by%
\begin{align}
V_{E}^{jk} &  \coloneqq \operatorname{Tr}\left[  \left\{  \hat{r}_{j}-\mu_{E}^{j}%
,\hat{r}_{k}-\mu_{E}^{k}\right\}  \rho_{E}\right]  \\
&  =\operatorname{Tr}\left[  \left\{  \hat{r}_{j},\hat{r}_{k}\right\}
\rho_{E}\right]  -2\mu_{E}^{j}\operatorname{Tr}\left[  \hat{r}_{k}\rho
_{E}\right]  \nonumber\\
&  \qquad-2\mu_{E}^{k}\operatorname{Tr}\left[  \hat{r}_{j}\rho_{E}\right]
+2\mu_{E}^{j}\mu_{E}^{k}\operatorname{Tr}[\rho_{E}]\\
&  =\operatorname{Tr}\left[  \left\{  \hat{r}_{j},\hat{r}_{k}\right\}
\rho_{E}\right]  -2\mu_{E}^{j}\mu_{E}^{k}.\label{eq:avg-E-cov-matrix-progress}%
\end{align}
Let us focus on the first term $\operatorname{Tr}\left[  \left\{  \hat{r}%
_{j},\hat{r}_{k}\right\}  \rho_{E}\right]  $. Set%
\begin{equation}
y_{c}\coloneqq y-\mu,\label{eq:centered-y}%
\end{equation}
and consider that%
\begin{align}
&  \operatorname{Tr}\left[  \left\{  \hat{r}_{j},\hat{r}_{k}\right\}  \rho
_{E}\right]  \nonumber\\
&  =\int dy\ p_{Y}(y)\ \operatorname{Tr}\left[  \left\{  \hat{r}_{j},\hat
{r}_{k}\right\}  \rho_{E}^{y}\right]  \\
&  =\int dy\ p_{Y}(y)\ \left(  V^{jk}+2\left[  Wy+\nu\right]  _{j}\left[
Wy+\nu\right]  _{k}\right)  \\
&  =V^{jk}+2\int dy\ p_{Y}(y)\ \left[  Wy+\nu\right]  _{j}\left[
Wy+\nu\right]  _{k}\\
&  =V^{jk}+2\int dy\ p_{Y}(y)\ \left[  Wy_{c}+\mu_{E}\right]  _{j}\left[
Wy_{c}+\mu_{E}\right]  _{k},
\end{align}
where the second equality follows from the definition of the quantum
covariance matrix $V$ of $\rho_{E}^{y}$ and the fact that $Wy+\nu$ is the mean
vector of $\rho_{E}^{y}$. The last equality follows from
\eqref{eq:mean-of-avg-E} and \eqref{eq:centered-y}. Focusing on the second
term, we find that%
\begin{align}
&  \int dy\ p_{Y}(y)\ \left[  Wy_{c}+\mu_{E}\right]  _{j}\left[  Wy_{c}%
+\mu_{E}\right]  _{k}\nonumber\\
&  =\int dy\ p_{Y}(y)\left(  \mu_{E}^{j}+\sum_{\ell}W_{j\ell}y_{c}^{\ell
}\right)  \left(  \mu_{E}^{k}+\sum_{m}W_{km}y_{c}^{m}\right)  \\
&  =\mu_{E}^{j}\mu_{E}^{k}+\sum_{\ell,m}W_{j\ell}W_{km}\int dy\ p_{Y}%
(y)y_{c}^{\ell}y_{c}^{m}\\
&  =\mu_{E}^{j}\mu_{E}^{k}+\sum_{\ell,m}W_{j\ell}W_{km}\Sigma_{\ell m}\\
&  =\mu_{E}^{j}\mu_{E}^{k}+\left[  W\Sigma W^{T}\right]  _{jk},
\end{align}
where the second equality follows because%
\begin{equation}
\int dy\ p_{Y}(y)\sum_{\ell}W_{j\ell}y_{c}^{\ell}\mu_{E}^{k}=0,
\end{equation}
with similar reasoning for the other vanishing term. It follows that%
\begin{equation}
\operatorname{Tr}\left[  \left\{  \hat{r}_{j},\hat{r}_{k}\right\}  \rho
_{E}\right]  =V^{jk}+2\left(  \left[  \mu_{E}^{j}\mu_{E}^{k}+W\Sigma
W^{T}\right]  _{jk}\right)  ,
\end{equation}
and we find from combining with \eqref{eq:avg-E-cov-matrix-progress}\ that%
\begin{equation}
V_{E}^{jk}=V^{jk}+2\left[  W\Sigma W^{T}\right]  _{jk},
\end{equation}
or equivalently,%
\begin{equation}
V_{E}=V+2W\Sigma W^{T}.\label{eq:QCM-of-avg-E}%
\end{equation}
So the expected density operator $\rho_{E}$ is a quantum Gaussian state with
mean vector given by \eqref{eq:mean-of-avg-E} and quantum covariance matrix
given by \eqref{eq:QCM-of-avg-E}.\ The normalization $Z_{E}$ of $\rho_{E}$ is
thus given by%
\begin{equation}
Z_{E}\coloneqq \det(\left[  V_{E}+i\Omega\right]  /2),
\end{equation}
and the Hamiltonian matrix $G_{E}$ of $\rho_{E}$ is given by%
\begin{equation}
G_{E}\coloneqq 2i\Omega\operatorname{arcoth}(iV_{E}\Omega).
\end{equation}

Thus, the Holevo information $I(Y;E)$\ works out to%
\begin{align}
&  I(Y;E)\\
&  =\int dy\ p_{Y}(y)D(\rho_{E}^{y}\Vert\rho_{E})\\
&  =\int dy\ \frac{p_{Y}(y)}{2\ln2}\left[  \ln\!\left(  \frac{Z_{E}}%
{Z}\right)  +\frac{1}{2}\operatorname{Tr}[V\Delta]+\delta^{T}G_{E}%
\delta\right] \label{eq:d(rho-y-to-rho)}\\
&  =\frac{1}{2\ln2}\left[  \ln\!\left(  \frac{Z_{E}}{Z}\right)  +\frac{1}%
{2}\operatorname{Tr}[V\Delta]+\int dy\ p_{Y}(y)\delta^{T}G_{E}\delta\right]  ,
\end{align}
where%
\begin{equation}
\delta\coloneqq Wy+\nu-\left(  W\mu+\nu\right)  =W(y-\mu).
\end{equation}
For the equality in \eqref{eq:d(rho-y-to-rho)}, we applied the formula in
\eqref{eq:rel-ent-Gaussian}. To evaluate the last integral, consider that%
\begin{align}
&  \int dy\ p_{Y}(y)\delta^{T}G_{E}\delta\nonumber\\
&  =\int dy\ p_{Y}(y)\operatorname{Tr}[\delta\delta^{T}G_{E}]\\
&  =\int dy\ p_{Y}(y)\operatorname{Tr}[W(y-\mu)(y-\mu)^{T}W^{T}G_{E}]\\
&  =\operatorname{Tr}[W\Sigma W^{T}G_{E}]. \label{eq:Gaussian-avg-quad-form}%
\end{align}
Then the formula for the Holevo information $I(Y;E)$\ reduces to%
\begin{multline}
I(Y;E)=\\
\frac{1}{2\ln2}\left[  \ln\!\left(  \frac{Z_{E}}{Z}\right)  +\frac{1}%
{2}\operatorname{Tr}[V\Delta]+\operatorname{Tr}[W\Sigma W^{T}G_{E}]\right]  ,
\end{multline}
as claimed.

We now determine a formula for the Holevo information variance. We first
consider the difference%
\begin{equation}
\int dy\ p_{Y}(y)\left(  D(\rho_{E}^{y}\Vert\rho_{E})\right)  ^{2}-\left[
I(Y;E)\right]  ^{2}.
\end{equation}
Consider from \eqref{eq:d(rho-y-to-rho)} that%
\begin{align}
&  \left(  2\ln2\right)  ^{2}\left[  \left(  D(\rho_{E}^{y}\Vert\rho
_{E})\right)  ^{2}-\left[  I(Y;E)\right]  ^{2}\right] \nonumber\\
&  =\left[  \ln\!\left(  \frac{Z_{E}}{Z}\right)  +\frac{1}{2}\operatorname{Tr}%
[V\Delta]+\delta^{T}G_{E}\delta\right]  ^{2}\nonumber\\
&  \qquad-\left[  \ln\!\left(  \frac{Z_{E}}{Z}\right)  +\frac{1}%
{2}\operatorname{Tr}[V\Delta]+\operatorname{Tr}[W\Sigma W^{T}G_{E}]\right]
^{2}\\
&  =2\left(  \ln\!\left(  \frac{Z_{E}}{Z}\right)  +\frac{1}{2}%
\operatorname{Tr}[V\Delta]\right)  \left(  \delta^{T}G_{E}\delta
-\operatorname{Tr}[W\Sigma W^{T}G_{E}]\right) \nonumber\\
&  \qquad+\left(  \delta^{T}G_{E}\delta\right)  ^{2}-\left(  \operatorname{Tr}%
[W\Sigma W^{T}G_{E}]\right)  ^{2}.
\end{align}
This means that%
\begin{multline}
\left(  2\ln2\right)  ^{2}\int dy\ p_{Y}(y)\left(  D(\rho_{E}^{y}\Vert\rho
_{E})\right)  ^{2}-\left[  I(Y;E)\right]  ^{2}%
\label{eq:diff-square-rel-ent-square-MI}\\
=\int dy\ p_{Y}(y)\left(  \delta^{T}G_{E}\delta\right)  ^{2}-\left(
\operatorname{Tr}[W\Sigma W^{T}G_{E}]\right)  ^{2},
\end{multline}
due to \eqref{eq:Gaussian-avg-quad-form}. To evaluate the first term, consider
that%
\begin{multline}
\int dy\ p_{Y}(y)\left(  \delta^{T}G_{E}\delta\right)  ^{2}\\
=\int dy\ p_{Y}(y)\left(  y_{c}^{T}W^{T}G_{E}Wy_{c}\right)  ^{2}.
\end{multline}
Then write the above as%
\begin{align}
&  \int dy\ p_{Y}(y)\sum_{ijk\ell}y_{c}^{i}\left[  W^{T}G_{E}W\right]
_{ij}y_{c}^{j}y_{c}^{k}\left[  W^{T}G_{E}W\right]  _{k\ell}y_{c}^{\ell
}\nonumber\\
&  =\sum_{ijk\ell}\left[  W^{T}G_{E}W\right]  _{ij}\left[  W^{T}G_{E}W\right]
_{k\ell}\int dy\ p_{Y}(y)y_{c}^{i}y_{c}^{j}y_{c}^{k}y_{c}^{\ell}\\
&  =\sum_{ijk\ell}\left[  W^{T}G_{E}W\right]  _{ij}\left[  W^{T}G_{E}W\right]
_{k\ell}\times\nonumber\\
&  \qquad\left[  \Sigma^{ij}\Sigma^{k\ell}+\Sigma^{ik}\Sigma^{j\ell}%
+\Sigma^{i\ell}\Sigma^{jk}\right] \\
&  =\left(  \operatorname{Tr}[W^{T}G_{E}W\Sigma]\right)  ^{2}%
+2\operatorname{Tr}[\left(  W^{T}G_{E}W\Sigma\right)  ^{2}]\\
&  =\left(  \operatorname{Tr}[W\Sigma W^{T}G_{E}]\right)  ^{2}%
+2\operatorname{Tr}[\left(  W\Sigma W^{T}G_{E}\right)  ^{2}],
\end{align}
where we applied Isserlis' theorem \cite{Isserlis18} to evaluate the fourth
moment and we employed the facts that $W^{T}G_{E}W$ and $\Sigma$ are symmetric
matrices. So then by combining with \eqref{eq:diff-square-rel-ent-square-MI},
we find that%
\begin{multline}
\int dy\ p_{Y}(y)\left(  D(\rho_{E}^{y}\Vert\rho_{E})\right)  ^{2}-\left[
I(Y;E)\right]  ^{2}\\
=\frac{1}{2\ln^{2}2}\operatorname{Tr}[\left(  W\Sigma W^{T}G_{E}\right)
^{2}].
\end{multline}
It remains to evaluate the term%
\begin{align}
&  \int dy\ p_{Y}(y)V(\rho_{E}^{y}\Vert\rho_{E})\nonumber\\
&  =\int dy\ \frac{p_{Y}(y)}{8\ln^{2}2}\left[  \operatorname{Tr}[\left(
\Delta V\right)  ^{2}]+\operatorname{Tr}[\left(  \Delta\Omega\right)
^{2}]+4\delta^{T}G_{E}VG_{E}\delta\right] \\
&  =\frac{1}{8\ln^{2}2}\left[  \operatorname{Tr}[\left(  \Delta V\right)
^{2}]+\operatorname{Tr}[\left(  \Delta\Omega\right)  ^{2}]\right] \nonumber\\
&  \qquad+\frac{1}{2\ln^{2}2}\int dy\ p_{Y}(y)\delta^{T}G_{E}VG_{E}\delta,
\end{align}
where we applied the formula in \eqref{eq:rel-ent-var-Gaussian}. Then it
follows that%
\begin{align}
&  \int dy\ p_{Y}(y)\delta^{T}G_{E}VG_{E}\delta\nonumber\\
&  =\int dy\ p_{Y}(y)\operatorname{Tr}[\delta\delta^{T}G_{E}VG_{E}]\\
&  =\int dy\ p_{Y}(y)\operatorname{Tr}[W\left(  y-\mu\right)  \left(
y-\mu\right)  ^{T}W^{T}G_{E}VG_{E}]\\
&  =\operatorname{Tr}[W\Sigma W^{T}G_{E}VG_{E}].
\end{align}
Putting everything together, we find that%
\begin{multline}
V(Y;E)=\frac{1}{8\ln^{2}2}\left[  \operatorname{Tr}[\left(  \Delta V\right)
^{2}]+\operatorname{Tr}[\left(  \Delta\Omega\right)  ^{2}]\right] \\
+\frac{1}{2\ln^{2}2}\left[  \operatorname{Tr}[W\Sigma W^{T}G_{E}%
VG_{E}]+\operatorname{Tr}[\left(  W\Sigma W^{T}G_{E}\right)  ^{2}]\right]  ,
\end{multline}
as claimed.
\end{proof}

\bigskip

The formulas from Proposition~\ref{prop:gaussian-formulas-HI-Hi-var}\ can be
applied to the scenario in which some modes of a Gaussian state are measured
according to a \textquotedblleft general-dyne\textquotedblright\ Gaussian
measurement \cite{GLS16,Ser17}, which leaves a Gaussian ensemble of Gaussian
states on the remaining modes. To see how this works, let $\rho_{AB}$ denote a
bipartite Gaussian state of $m+n$ modes, with $m$ modes for system $A$ and $n$
modes for system $B$. Suppose that the $2\left(  m+n\right)  \times1$\ mean
vector of $\rho_{AB}$ is%
\begin{equation}%
\begin{bmatrix}
\overline{r}_{A}\\
\overline{r}_{B}%
\end{bmatrix}
,
\end{equation}
and the $2\left(  m+n\right)  \times2\left(  m+n\right)  $ quantum covariance
matrix is%
\begin{equation}%
\begin{bmatrix}
V_{A} & V_{AB}\\
V_{AB}^{T} & V_{B}%
\end{bmatrix}
.
\end{equation}
A general-dyne measurement of system $B$ is described by a quantum Gaussian
state $\omega_{M}$ with covariance matrix $V_{M}$ satisfying the uncertainty
principle $V_{M}+i\Omega\geq0$ \cite{GLS16,Ser17}. The POVM\ elements of this
general-dyne detection are given by%
\begin{equation}
\left\{  \frac{1}{\left(  2\pi\right)  ^{n}}\hat{D}_{-y}\omega_{M}\hat{D}%
_{y}\right\}  _{y\in\mathbb{R}^{2n}},
\end{equation}
where the displacement operator is defined as $\hat{D}_{y}\coloneqq \exp( iy^{T}%
\Omega\hat{r}) $, and the following completeness relation holds%
\begin{equation}
\frac{1}{\left(  2\pi\right)  ^{n}}\int_{\mathbb{R}^{2n}}dy\ \hat{D}%
_{-y}\omega_{M}\hat{D}_{y}.
\end{equation}
(Note that ideal heterodyne detection corresponds to $V_{M}=I_{2n}$.) If this
measurement is performed on system~$B$ of the state $\rho_{AB}$ as specified
above, then the induced ensemble on system $A$ is $\left\{  p_{Y}(y),\rho
_{A}^{y}\right\}  $, where%
\begin{equation}
p_{Y}(y)=\frac{\exp\left(  -\frac{1}{2}\left(  y-\overline{r}_{B}\right)
^{T}\left(  \frac{V_{B}+V_{M}}{2}\right)  ^{-1}\left(  y-\overline{r}%
_{B}\right)  \right)  }{\sqrt{\left(  2\pi\right)  ^{2n}\det\left(
\frac{V_{B}+V_{M}}{2}\right)  }},
\end{equation}
and $\rho_{A}^{y}$ is a quantum Gaussian state with mean vector%
\begin{equation}
\overline{r}_{A}+V_{AB}\left(  V_{B}+V_{M}\right)  ^{-1}\left(  y-\overline
{r}_{B}\right)  ,
\end{equation}
and quantum covariance matrix%
\begin{equation}
V_{A}-V_{AB}\left(  V_{B}+V_{M}\right)  ^{-1}V_{AB}^{T}.
\end{equation}
Thus, we can use Proposition~\ref{prop:gaussian-formulas-HI-Hi-var}\ to
calculate both the Holevo information and the Holevo information variance of
this ensemble with the following identifications in
Definition~\ref{def:gaussian-ensemble}:%
\begin{align}
\mu &  =\overline{r}_{B},\\
\Sigma &  =\frac{V_{B}+V_{M}}{2},\\
W  &  =V_{AB}\left(  V_{B}+V_{M}\right)  ^{-1},\\
\nu &  =\overline{r}_{A}-V_{AB}\left(  V_{B}+V_{M}\right)  ^{-1}\overline
{r}_{B},\\
V  &  =V_{A}-V_{AB}\left(  V_{B}+V_{M}\right)  ^{-1}V_{AB}^{T}.
\end{align}

\newpage

%
%

\end{document}